\titleformat*{\section}{\bf\Large\center}
\newcommand{\GG}[1]{}
\theoremstyle{definition}
\newtheorem{assumption}{Assumption}
\newtheorem*{theorem*}{Theorem}
\newtheorem{theorem}{Theorem}
\newtheorem*{rmk*}{Remark}
\newtheorem{proposition}{Proposition}
\newtheorem{lemma}{Lemma}
\newtheorem{corollary}{Corollary}
\newtheorem*{corollary*}{Corollary}
\apptocmd{\sloppy}{\hbadness 10000\relax}{}{} 
\newcites{sec}{References}
\def\T{{ \mathrm{\scriptscriptstyle T} }}
\def\v{{\varepsilon}}
\def\var{\text{var}}
\def\sumM{\sum_{i=1}^M}
\def\TG{{\mathcal{\scriptscriptstyle T}}}
\def\CG{{\mathcal{\scriptscriptstyle C}}}
\def\se{\textup{se}}
\def\i{{\textsc{i}}}
\def\a{{\textup{adj}}}
\def\t{{\textsc{t}}}
\def\leqholder{\leq_\textsc{hi}}
\begin{document}
\onehalfspacing
\title{ \Large   \bf 
Model-assisted analyses of cluster-randomized experiments
}
\author{Fangzhou Su and Peng Ding
\footnote{ Department of Statistics, University of California, Berkeley, CA 94720, USA (E-mails:
\url{fangzhou_su@berkeley.edu} and \url{pengdingpku@berkeley.edu}). 
}
}
\date{}
\maketitle
\begin{abstract}
Cluster-randomized experiments are widely used due to their logistical convenience and policy relevance. To analyze them properly, we must address the fact that the treatment is assigned at the cluster level instead of the individual level. Standard analytic strategies are regressions based on individual data, cluster averages, and cluster totals, which differ when the cluster sizes vary. These methods are often motivated by models with  strong and unverifiable assumptions, and the choice among them can be subjective. Without any outcome modeling assumption, we evaluate these regression estimators and the associated robust standard errors from the design-based perspective where only the treatment assignment itself is random and controlled by the experimenter. We demonstrate that regression based on cluster averages  targets a weighted average treatment effect, regression based on individual data is suboptimal in terms of efficiency, and regression based on cluster totals is consistent and more efficient with a large number of clusters. We highlight the critical role of covariates in improving estimation efficiency, and illustrate the efficiency gain via both simulation studies and data analysis. The asymptotic analysis also reveals the efficiency-robustness trade-off  by comparing the properties of various estimators using data at different levels with and without covariate adjustment.  Moreover, we show that the robust standard errors are convenient approximations to the true asymptotic standard errors under the design-based perspective. Our theory holds even when the outcome models are misspecified, so it is model-assisted rather than model-based. We also extend the theory to a wider class of weighted average treatment effects.

\medskip
\noindent {\it Key words}: analysis of covariance, design-based inference, efficiency-robustness trade-off; group-randomized trial, potential outcomes, robust standard error
\end{abstract}

\newpage
\section{Introduction}

Cluster-randomized experiments, also known as group-randomized trials, are widely used in empirical research, where randomization over units within clusters is either unethical or logistically infeasible. For example, in many public health interventions, clusters are villages, units are households, and randomization is implemented at the village level \citep{donner2000design, turner2017reviewdesign, turner2017reviewanalysis}; in many educational interventions, clusters are classrooms, units are students, and randomization is implemented at the classroom level \citep{raudenbush1997statistical, schochet2013estimators, raudenbush2020randomized, schochet2021design}. In addition to their convenience, cluster-randomized experiments can circumvent the problem of interference among units and are policy-relevant if the intervention of a policy is at the cluster level.

A proper analysis of a cluster-randomized experiment must first clearly specify the population and parameter of interest and then address the fact that randomization is at the cluster level. Model-based analyses address these issues simultaneously by imposing certain parametric assumptions and correlation structure on the error terms \citep{graubard1994regression, donner2000design, green2008analysis}. However, these modeling assumptions are often too strong and can lead to bias under model misspecifications. We take an alternative perspective by first defining the parameter of interest based on potential outcomes for treatment effects and then deriving the properties of the regression estimators under the randomization of the treatment assignment. This design-based or randomization-based perspective has been exploited by \citet{miratrix:2013}, \citet{dasgupta2014causal}, \citet{imbens2015causal}, \citet{fogarty2018mitigating}, \citet{mukerjee2018using} and \citet{liu2019regression} for analyzing various randomized experiments, and by \citet{imai2009essential}, \citet{schochet2013estimators}, \citet{Middleton2015cluster}, \citet{ATHEY201773}, \citet{abadie2017should} and \citet{schochet2021design} for analyzing cluster-randomized experiments in particular.

We will discuss regression estimators based on individual data, cluster totals, and cluster averages, with and without adjusting for covariates. We demonstrate that regression estimators based on individual data and cluster totals are consistent for the average treatment effect, with the latter giving more efficient estimators when adjusted for cluster sizes and cluster-level covariates. Our results supplement \citet{schochet2021design} by providing a unified theory for regressions using data at different levels and regressions with treatment-covariates interaction. 
We quantify the efficiency-robustness trade-off by comparing the design-based properties of various estimators. In particular, our theory demonstrates that the covariate-adjusted estimator based on cluster totals has higher asymptotic efficiency with predictive covariates at the cost of robustness when the cluster sizes have large variability. In contrast, the covariate-adjusted estimator based on individual data and the unadjusted estimator based on cluster totals are suboptimal in asymptotic efficiency but  generally more robust. 
We also show that the cluster-robust standard errors \citep{liang1986longitudinal} are conservative estimators for the true asymptotic standard errors in individual-level regressions, and heteroskedasticity-robust standard errors \citep{huber::1967, white::1980} are conservative estimators for the true asymptotic standard errors in cluster-total regressions. The former has been implemented empirically \citep{green2008analysis, schochet2013estimators, ATHEY201773}. \citet{schochet2021design} provide some results for the case without covariates, and we extend their results  by establishing a more unified theory for the design-based properties of the cluster-robust standard errors especially for the individual-level regression estimators adjusted for covariates. The latter extends \citet{lin2013}'s theory for completely randomized experiments to accommodate the cluster structure.

 \citet{middleton2008bias} criticized the cluster-average regression because of its inconsistency for the average treatment effect. In contrast, we argue that cluster-average regressions view clusters themselves as the population and target a different estimand that is informative for many applications \citep{donner2000design, ATHEY201773}. We define a more general estimand as the weighted average of cluster-level effects and show that weighted least squares (WLS) based on cluster averages yields consistent estimators for this estimand. We further propose a new regression estimator based on weighted cluster averages, which dominates cluster-average regressions and can leverage covariate information to improve efficiency.

Although the estimators are motivated by parametric or semi-parametric models, we extend the current literature on agnostic regressions \citep{lin2013, fogarty2018regression, liu2019regression, schochet2021design}, and derive their design-based properties allowing for misspecifications of these models. Our theory has clear practical implications. For the average treatment effect, we recommend using the cluster-total regression adjusted for the cluster size and cluster-level of covariates. For the weighted average effect, we recommend using the new regression estimator based on weighted cluster averages adjusted for the weight and cluster-level of covariates. The heteroskedasticity-robust standard errors for both estimators provide convenient approximations for the true asymptotic standard errors in constructing confidence intervals. The recommended estimators can be easily implemented by standard software packages with appropriately specified covariates and weights. Importantly, these simple estimators have strong theoretical guarantees.

The paper proceeds as follows. Section \ref{sec::notation-framework} sets up the basic notation and framework for the design-based inference in cluster-randomized experiments, and introduces four regression estimators based on individual data or cluster totals, with or without covariates adjustment. Section \ref{sec::fourestimatorsfortau} discusses the asymptotic properties of the four estimators introduced in Section \ref{sec::notation-framework}. Section \ref{sec::general-estimand} proposes a general class of weighted average treatment effects over clusters, and analyzes four estimators that are analogous to those in Section \ref{sec::notation-framework}. Section \ref{sec::unification-recommendation} unifies the theory and gives practical recommendations. Section \ref{sec:simulation-section} uses simulation studies to evaluate the finite-sample properties of various estimators and robust standard errors. Section \ref{sec::application} illustrates the theory with a recent cluster-randomized experiment. Section \ref{sec::discussion} concludes and the supplementary material contains the proofs.  Replication files for this paper are publicly available at Harvard Dataverse: https://doi.org/10.7910/DVN/7FBJPU

We use the following notation throughout the paper. Let $o_\mathbb{P}(1)$ denote a random variable converging to zero in probability, with the probability measure $\mathbb{P}$ induced by the randomization of the treatment assignment. Let $R_M \rightsquigarrow \mathcal{N} (0,1) $ represent that $R_M$ converges in distribution to a standard Normal random variable as the number of clusters $M$ goes to infinity. Let ``$\se$'' and ``$\hat{\se}$'' denote the asymptotic standard error and the estimated asymptotic standard error of an estimator.

\section{Notation and framework}
\label{sec::notation-framework}

\subsection{Potential outcomes, treatment assignment, and observed data}
Consider a study with $N$ units, clustered, for example, by classrooms or villages. Cluster $i$ has $n_i$ units $(i=1,\ldots, M)$, and the total number of units is $N = \sumM n_i$. Let $(i,j)$ index the $j$th unit within cluster $i$ $(i=1,\ldots,M;\ j=1,\ldots, n_i)$. Unit $(i,j)$ has covariates $x_{ij}$, and cluster $i$ has covariates $c_i$. The experimenter randomly assigns $eM$ clusters to receive the treatment and $(1-e)M$ clusters to receive the control, where $0<e<1$ is a fixed number denoting the proportion of treated clusters. Let $Z_i$ be the treatment indicator for cluster $i$ and $Z_{ij} $ be the treatment indicator for unit $(i,j)$. In a cluster-randomized experiment, units within a cluster receive identical treatment levels. So if cluster $i$ receives treatment, then $Z_{ij} = Z_i = 1$; if cluster $i$ receives control, then $Z_{ij} = Z_i=  0$. Let $\sum_{ij}=\sum_{i=1}^M\sum_{j=1}^{n_{i}}$ denote the summation over all units. Let $\mathcal T=\{ (i,j):Z_{ij}=1 \}$ be the indices of units under treatment and $\mathcal C=\{ (i,j):Z_{ij}=0\}$ be the indices of units under control. Their cardinalities $n_\TG=\sum_{ij} Z_{ij}$ and $n_\CG = \sum_{ij} (1-Z_{ij})$ represent the total numbers of units under treatment and control, respectively, which are random if the cluster sizes vary.

For unit $(i,j)$, let $Y_{ij}(1)$ and $Y_{ij}(0)$ be the potential outcomes \citep{Neyman:1923} under treatment and control, respectively. The average potential outcome in treatment arm $z\ (z=0,1)$ is $\bar Y(z)= N^{-1}  \sum_{ij}Y_{ij}(z) $, and the average treatment effect is
$$
\tau =\bar{Y}(1)-\bar{Y}(0) = N^{-1} \sum_{ij} \{ Y_{ij}(1) - Y_{ij}(0) \}
$$
over all units.  The observed outcome is a function of the treatment indicator and potential outcomes: $Y_{ij}=Z_{ij} Y_{ij}(1)+(1- Z_{ij} )Y_{ij}(0)$. A central goal in analyzing a cluster-randomized experiment is to make inference about $\tau$ using the observed data $\{ (Z_{ij} ,Y_{ij}, x_{ij}, c_i): i=1,\ldots, M;j=1,\ldots, n_i\} $. Our analysis is design-based, that is, we condition on all covariates and potential outcomes, with randomness coming from $(Z_1,\ldots, Z_n)$,
a random permutation of $eM$ 1's and $(1-e)M$ 0's. Conditioning on the potential outcomes, we have a finite population of $N$ units and do not impose any stochastic outcome modeling assumption.

\subsection{A special case with $n_i=1$: completely randomized experiment}

To motivate model-assisted analyses of cluster-randomized experiments, we first review the existing design-based theory for completely randomized experiments. The special case has $n_i=1$, and simplifies the double index $(i,j)$ to the single index $i$. The average treatment effect reduces to $\tau = M^{-1} \sum_{i=1}^M  \{ Y_i(1) - Y_i(0) \} $. \citet{Neyman:1923} showed that the difference-in-means estimator $\hat{\tau}$ is unbiased for $\tau$. \citet{Fisher:1935} proposed to use the coefficient of $Z_i$ in the ordinary least squares (OLS) fit of $Y_{i}$ on $(1,Z_i,c_i)$ to estimate $\tau$, hoping to leverage the information in $c_i$ to improve estimation efficiency. However, \citet{freedman2008regression_a, freedman2008regression_b} criticized \citet{Fisher:1935}'s regression estimator because it can be even less efficient than $\hat{\tau}$. \citet{lin2013} proposed a slight modification to address \citet{freedman2008regression_a, freedman2008regression_b}'s critiques: with centered covariates such that $\bar{c} = M^{-1} \sum_{i=1}^M  c_i=0$, the coefficient of $Z_i$ in the OLS fit of $Y_{i}$ on $(1,Z_i,c_i, Z_ic_i)$ is consistent and asymptotically more efficient than $\hat{\tau}$, with the true asymptotic standard error conservatively estimated by the heteroskedasticity-robust standard error.

Importantly, \citet{freedman2008regression_a, freedman2008regression_b} and \citet{lin2013} did not assume that the linear models are correctly specified. They conditioned on the potential outcomes and evaluated the regression estimators from the design-based perspective. So their theories are model-assisted rather than model-based. An overarching goal of this paper is to develop the design-based theory for commonly-used regression estimators and robust standard errors for analyzing cluster-randomized experiments.

\subsection{Some common estimators and their regression formulations} \label{sec::four-estimators-for-tau}

Motivated by the definition $\tau =\bar{Y}(1)-\bar{Y}(0)$, it is intuitive to use the difference-in-means $\hat\tau_{\textsc{i}}=\bar Y_\TG-\bar Y_\CG$ as an estimator, where $\bar{Y}_\TG=n_\TG^{-1}\sum_{ij  \in \mathcal T}  Y_{ij}$ and $\bar Y_\CG = n_\CG^{-1} \sum_{ij \in \mathcal C} Y_{ij}$ are the average observed outcomes under treatment and control, respectively. Numerically, $\hat\tau_{\textsc{i}}$ is identical to the coefficient of $Z_{ij}$ in the OLS fit of $Y_{ij}$ on $(1,Z_{ij})$ with all $(i, j)$'s. The clustering structure of the data suggests using the cluster-robust standard error $\hat{\se}_\textsc{lz}(\hat\tau_{\textsc{i}})$ \citep{liang1986longitudinal, barrios2012clustering, schochet2013estimators, ATHEY201773} to construct a Wald-type confidence interval.

\citet{lin2013}'s analysis of completely randomized experiments motivates us to use covariates to further improve the estimation efficiency. We modify the above regression as follows: first center the  unit-level covariates such that $\bar x=N^{-1}\sum_{ij}x_{ij} =0$; then obtain $\hat\tau_\textsc{i}^\textup{adj}$, the coefficient of $Z_{ij}$ in the OLS fit of $Y_{ij}$ on $(1, Z_{ij}, x_{ij}, Z_{ij}x_{ij})$; finally report the cluster-robust standard error  $\hat{\se}_\textsc{lz}(\hat\tau_\textsc{i}^\textup{adj})$.

On the other hand, the parameter $\tau$ has an alternative form:
$$
\tau = M^{-1} \sum_{i=1}^M
\{\tilde{Y}_{i\cdot}(1)-\tilde{Y}_{i\cdot}(0)\},
$$
where $\tilde{Y}_{i\cdot}(z)=\sum_{j=1}^{n_i}Y_{ij}(z)M/N$ is the scaled cluster total potential outcome under treatment arm $z$ $(z=0,1)$. So $\tau$ is also the average treatment effect on the cluster level with outcomes defined as the scaled cluster totals. With the observed scaled cluster total $\tilde{Y}_{i\cdot}=Z_i\tilde{Y}_{i\cdot}(1)+(1-Z_i)\tilde{Y}_{i\cdot}(0)$, we can use
\begin{eqnarray*}
\hat\tau_{\textsc{t}}
&=& (eM)^{-1}\sum_{Z_i=1} \tilde{Y}_{i\cdot} - \{(1-e)M\}^{-1} \sum_{Z_i=0} \tilde{Y}_{i\cdot} \\
&=& (eN)^{-1}\sum_{ij\in \mathcal T}Y_{ij}-
\{(1-e)N\}^{-1}\sum_{ij\in \mathcal C}Y_{ij}
\end{eqnarray*}
to estimate $\tau$ \citep{Middleton2015cluster, li2017general, ATHEY201773}. Numerically, $\hat\tau_{\textsc{t}}$ is identical to the coefficient of $Z_i$ in the OLS fit of $\tilde{Y}_{i\cdot}$ on $(1,Z_{i})$ with the cluster total data. Since this regression uses aggregate data, we do not need to account for clustering and can simply use the heteroskedasticity-robust standard error $\hat{\se}_\textsc{hw}(\hat\tau_{\textsc{t}})$ \citep{huber::1967, white::1980}.

Again, we can modify the above regression to leverage the cluster-level covariates to improve efficiency. Because we have completely randomized experiments on clusters, \citet{li2017general} suggested to extend \citet{lin2013}'s estimator: first center the cluster-level covariates at $\bar c=M^{-1} \sumM c_i  = 0$; then obtain $\hat\tau_{\textsc{t}}^\textup{adj}$, the coefficient of $Z_i$ in the OLS fit of $\tilde{Y}_{i\cdot}$ on $(1, Z_{i}, c_i, Z_{i} c_i )$; finally report the heteroskedasticity-robust standard error $\hat{\se}_\textsc{hw}(\hat\tau_{\textsc{t}}^\textup{adj})$.

We have introduced four estimators with the subscript ``I'' or ``T'' denoting regression based on individual data or cluster-totals, and the superscript ``adj'' denoting covariate adjustment. For regressions based on individual data, we use the subscript ``LZ'' to denote cluster-robust standard error since the treatment assignment is at the cluster level. For regressions based on scaled cluster totals, we use the subscript ``HW'' to denote heteroskedasticity-robust standard error since the non-robust standard errors can be inconsistent for the true asymptotic standard errors even in the special case with $n_i=1$, as pointed out by \citet{freedman2008regression_a, freedman2008regression_b} and \citet{lin2013}.

The above four estimators all have reasoned justifications. However, their design-based statistical properties are not entirely clear. First, are these estimators consistent for $\tau$ when the linear models are misspecified? Analyzing the regression estimators based on individual data  is a novel contribution. Even though \citet{lin2013} and \citet{li2017general} suggested the consistency of the regression estimators based on cluster totals, the  previous theory does not directly cover the case with unequal cluster sizes, especially when some cluster sizes diverge. Second, cluster-robust standard errors are motivated by analyzing correlated data under certain modeling assumptions \citep{liang1986longitudinal, cameron2015practitioner}, so we must demonstrate that they are also applicable in the design-based inference without those modeling assumptions. The next section will answer these questions, and the following section will extend the theory to a general estimand.

\section{Asymptotics for the regression-based estimators}
 \label{sec::fourestimatorsfortau}

\subsection{Basic setup for the asymptotic analysis}

To derive the asymptotic properties of the four estimators under cluster-randomized experiments, we embed the $N$ units into a sequence of finite populations with clustering structure and an increasing number of clusters $M\rightarrow \infty$. All quantities depend on $M$, but we suppress the dependence in notation for simplicity. \citet{li2017general} gave a review on finite-population asymptotics, and \citet{Middleton2015cluster} gave more details on asymptotics in cluster-randomized experiments. We need to impose some regularity conditions to derive limiting theorems for cluster-randomized experiments.

\begin{assumption}\label{assume::4}
The proportion of treated clusters $e$ has a limit in $(0,1)$, and the dimensions $(p_x, p_c)$ of covariates $(x_{ij}, c_i)$ do not depend on $M$.
\end{assumption}

Assumption \ref{assume::4} requires that the numbers of treated and control clusters are not too imbalanced, and restricts the discussion to covariates with a fixed dimension. With some technical modifications, we can extend the results to the case with diverging $(p_x, p_c)$ using the techniques in \citet{bloniarz2015lasso} and  \citet{lei2018regression}.

\begin{assumption}\label{assume::1}
The potential outcomes satisfy
$N^{-1}\sum_{ij} Y_{ij}(z)^4 =O(1)$ for $z=0,1$. The covariates satisfy $N^{-1} \sum_{ij} ||x_{ij}||_{\infty}^4 =O(1),$ $M^{-1}\sum_{i=1}^M ||c_i||_{\infty}^2=O(1),$ and $M^{-1} \max_{1\le i \le M} ||c_i||_{\infty}^2 =o(1)$.
\end{assumption}

Assumption \ref{assume::1} restricts the moments of the potential outcomes and covariates. We impose stronger moment conditions on unit-level variables, for example, $Y_{ij}(z)$, than on the cluster-level variables to ensure that the scaled cluster totals $ \tilde{Y}_{i\cdot}(z) = \sum_{j=1}^{n_i}  Y_{ij}(z) M/N $ are well-behaved even when some cluster sizes are much larger than others.

The asymptotic properties depend crucially on
$$
\omega_i=n_i/N,\quad \Omega = \max_{1\leq i\leq M}  \omega_i,\quad \tilde{\omega}_i = \omega_i M =  n_i/(N/M),
$$
the proportion of units in cluster $i$, its maximum value, and its scaled value measuring the size of cluster $i$ divided by the average cluster size, respectively. When all clusters have equal sizes, $\omega_i = \Omega = 1/M$. When some clusters are much larger than others, $\Omega$ can grow faster than $1/M$. In some extreme cases, $\Omega$ can have order $O(1)$. The relative value $\tilde{\omega}_i $  has mean $M^{-1} \sum_{i=1}^M \tilde{\omega}_i = 1$ but can have diverging higher moments when some clusters have extreme sizes. With this notation, we can write $N^{-1}\sum_{ij} Y_{ij}(z)^4 = M^{-1} \sum_{i=1}^M \tilde{\omega}_i \mu_{4i}  $ where $\mu_{4i}=n_i^{-1} \sum_{j=1}^{n_i}Y_{ij}(z)^4$ is the mean of $Y_{ij}(z)^4$ within cluster $i$. So the condition $N^{-1}\sum_{ij} Y_{ij}(z)^4 =O(1)$ in Assumption \ref{assume::1} becomes more interpretable. In particular, if we further assume that the $(\tilde{\omega}_i,  \mu_{4i} )_{i=1}^M $ are independent and identically distributed draws from a population, then  Assumption \ref{assume::1} requires that   the relative cluster size and the cluster mean of $Y_{ij}(z)^4$ has a finite covariance. Other conditions in Assumption \ref{assume::1} have similar interpretations.

To express the asymptotic distributions, we introduce additional notation. Define the centered potential outcome as $\varepsilon_{ij}(z)=Y_{ij}(z)-\bar{Y}(z)$. Similar to $\tilde{Y}_{i\cdot}(z)$, we can define other scaled cluster totals, for example, $\tilde{\varepsilon}_{i\cdot}(z)=\sum_{j=1}^{n_i}\varepsilon_{ij}(z)M/N =\tilde{Y}_{i\cdot}(z) -  n_i  \bar{Y}(z) M/N$ for the centered potential outcome and $\tilde{x}_{i\cdot}=\sum_{j=1}^{n_i}x_{ij}M/N$ for covariates.
Let $\bar{Y}_{i\cdot}(z) = n_i^{-1} \sum_{j=1}^{n_i} Y_{ij}(z)$ and $\bar{x}_{i\cdot} = n_i^{-1} \sum_{j=1}^{n_i} x_{ij}$ be the cluster averages of the potential outcomes and covariates, respectively.
Let $\{ \mathcal E_i(1),  \mathcal E_i(0) \}_{i=1}^M$ be the cluster-level potential outcomes with zero averages over clusters.  The  $\mathcal E_i(z)$ can be $\tilde{\varepsilon}_{i\cdot}(z)$ or other transformed cluster-level potential outcomes.
Motivated by \citet{Neyman:1923}, we define
\begin{eqnarray*}
V_\textup{c}\{\mathcal E_i(z)\} &=&  \frac{1}{M}
\sum_{i=1}^M  \left\{\frac{\mathcal E_i(1)^2}{e}+
\frac{\mathcal E_i(0)^2}{1-e }\right\},\\
 V \{\mathcal E_i(z)\} &=& V_\textup{c}\{\mathcal E_i(z)\} - \frac{1}{M}
\sum_{i=1}^M \left\{ \mathcal E_i( 1 ) - \mathcal E_i( 0 ) \right\} ^ { 2 }
\end{eqnarray*}
as the  variances of the difference in means of $\{ \mathcal E_i(1),  \mathcal E_i(0) \}_{i=1}^M$, multiplied by $M$, in completely randomized experiments with and without the constant treatment effect assumption. 

\subsection{Regression estimators based on individual-level data}
\label{sec::asymptotics-individual-reg}

Define the sample versions of the centered potential outcomes as $\hat{\varepsilon}_{ij} = Y_{ij}-\bar{Y}_\TG$ for treated units and $\hat{\varepsilon}_{ij} = Y_{ij}-\bar{Y}_\CG$ for control units. We have the following theorem.

\begin{theorem}\label{thm::tau-i}
Let $\se^2(\hat\tau_{\textsc{i}}) =  V \{\tilde{\varepsilon}_{i\cdot}(z)\}/M$.
Under Assumptions \ref{assume::4} and \ref{assume::1}, if $
\Omega=o(1)
$, then $\hat\tau_{\textsc{i}} =\tau+o_\mathbb P(1)$; if $ \Omega=o( M^{-2/3} ) $ and $ V \{\tilde{\varepsilon}_{i\cdot}(z)\} \not \rightarrow 0$, then $( \hat\tau_{\textsc{i}} - \tau )/
\se(\hat\tau_{\textsc{i}}) \rightsquigarrow \mathcal{N} (0,1) $ and
$$
 M \times  \hat{\se}^2_\textsc{lz}(\hat\tau_{\textsc{i}})
=
\frac{M}{n_\TG^2}\sum_{ i=1 }^M Z_i
 \left(
 \sum_{j=1}^{n_{i}}
\hat{\varepsilon}_{ij} \right)^2 +
 \frac{M}{n_\CG^2}\sum_{ i = 1}^M (1-Z_i)
 \left(
 \sum_{j=1}^{n_{i}}
\hat{\varepsilon}_{ij}
 \right)^2
=
V_\textup{c}\left\{\tilde{\varepsilon}_{i\cdot}(z)\right\}+o_\mathbb P(1).
$$
\end{theorem}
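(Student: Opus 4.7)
\textbf{Plan for the proof of Theorem \ref{thm::tau-i}.} The main idea is to rewrite everything at the cluster level using scaled totals, so that $\hat\tau_{\textsc{i}}$ is (asymptotically) a difference in means of cluster-level quantities over a simple random sample of $eM$ clusters. Writing $T_i(z) = \sum_{j=1}^{n_i} Y_{ij}(z)$, one has $\sum_{ij\in \mathcal T} Y_{ij} = \sum_{i=1}^M Z_i T_i(1)$ and $n_\TG = \sum_i Z_i n_i$. Centering by $\bar Y(1)$ gives the key identity
$$
\bar Y_\TG - \bar Y(1) \;=\; \frac{N/M}{n_\TG}\cdot\frac{1}{M}\sum_{i=1}^M Z_i\,\tilde\varepsilon_{i\cdot}(1),
$$
and symmetrically for control. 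Thus $\hat\tau_{\textsc{i}}-\tau$ is, up to the random factors $N/(Mn_\TG)$ and $N/(Mn_\CG)$, the difference-in-means of the cluster-level centered potential outcomes $\tilde\varepsilon_{i\cdot}(z)$ scaled by $1/e$ and $1/(1-e)$.

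For consistency, I would first prove that $n_\TG/N = e + o_\mathbb P(1)$ by applying the finite-population WLLN from \citet{li2017general} to the scaled sizes $\tilde\omega_i$: since $M^{-1}\sum_i \tilde\omega_i = 1$ and $M^{-1}\sum_i\tilde\omega_i^2 \le \Omega M\cdot 1 = M\Omega\cdot M^{-1}\sum_i\tilde\omega_i$ is controlled using $\Omega=o(1)$, the sampling variance of $n_\TG/N$ is $o(1)$. The same argument (Chebyshev for simple random sampling of clusters, using Assumption \ref{assume::1} on $N^{-1}\sum_{ij}Y_{ij}(z)^2$, which bounds $M^{-1}\sum_i\tilde\varepsilon_{i\cdot}(z)^2\le 4 M^{-1}\sum_i \tilde\omega_i \bar Y_{i\cdot}(z)^2 + \cdots$) shows $M^{-1}\sum_i Z_i\tilde\varepsilon_{i\cdot}(1) = o_\mathbb P(1)$. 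Combining gives $\hat\tau_{\textsc{i}} = \tau + o_\mathbb P(1)$.

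For asymptotic normality, I would replace the random denominators by their limits, incurring only $o_\mathbb P(M^{-1/2})$ error, so that $\hat\tau_{\textsc{i}}-\tau$ equals
$$
\frac{1}{eM}\sum_{i=1}^M Z_i \tilde\varepsilon_{i\cdot}(1) \;-\; \frac{1}{(1-e)M}\sum_{i=1}^M(1-Z_i)\tilde\varepsilon_{i\cdot}(0) \;+\; o_\mathbb P(M^{-1/2}).
$$
This is exactly a difference-in-means of cluster-level quantities in a completely randomized experiment on $M$ units, whose sampling variance is $M^{-1}V\{\tilde\varepsilon_{i\cdot}(z)\}$. I would apply the finite-population CLT for simple random sampling (e.g., \citet{li2017general}). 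The Lindeberg-type condition requires $\max_i \tilde\varepsilon_{i\cdot}(z)^2 / \sum_i \tilde\varepsilon_{i\cdot}(z)^2 \to 0$; using $|\tilde\varepsilon_{i\cdot}(z)| \le M\omega_i\cdot|\bar Y_{i\cdot}(z)-\bar Y(z)|$ and the fourth-moment bound in Assumption \ref{assume::1}, this maximum is controlled by $(M\Omega)^{3/2}\cdot o(1)$ plus similar terms, which vanishes precisely under the strengthened rate $\Omega=o(M^{-2/3})$. This is the main technical obstacle: the condition $\Omega=o(M^{-2/3})$ is what allows the maximum scaled cluster sum to be negligible relative to the total variance.

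For the variance estimator, observe that $\sum_{j}\hat\varepsilon_{ij} = T_i(1) - n_i\bar Y_\TG$ on treated clusters, which equals $(N/M)\tilde\varepsilon_{i\cdot}(1) - n_i(\bar Y_\TG-\bar Y(1))$. Squaring and expanding,
$$
\frac{M}{n_\TG^2}\sum_i Z_i\Big(\sum_j\hat\varepsilon_{ij}\Big)^2 \;=\; \frac{M(N/M)^2}{n_\TG^2}\cdot\frac{1}{M}\sum_i Z_i\tilde\varepsilon_{i\cdot}(1)^2 \;+\; \text{cross terms}.
$$
The leading factor tends to $1/e^2$; the mean $M^{-1}\sum_i Z_i\tilde\varepsilon_{i\cdot}(1)^2$ has expectation $(e/M)\sum_i\tilde\varepsilon_{i\cdot}(1)^2$ and variance $o(1)$ under the fourth-moment condition, yielding the probability limit $(1/e)\cdot M^{-1}\sum_i\tilde\varepsilon_{i\cdot}(1)^2$. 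The cross terms are $o_\mathbb P(1)$ because $\bar Y_\TG - \bar Y(1) = o_\mathbb P(1)$ and $M^{-1}\sum_i Z_i (n_i M/N)^2 = O_\mathbb P(1)$ by the moment condition on $\tilde\omega_i$. The control side is analogous. Adding gives $V_\textup{c}\{\tilde\varepsilon_{i\cdot}(z)\} + o_\mathbb P(1)$, completing the claim.
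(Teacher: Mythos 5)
Your plan follows essentially the same route as the paper: reduce $\hat\tau_{\textsc{i}}$ to a difference-in-means of the scaled cluster totals $\tilde{\varepsilon}_{i\cdot}(z)$ over a completely randomized experiment on clusters, use Chebyshev-type bounds for simple random sampling to get consistency and to replace the random denominators $n_\TG, n_\CG$ by $eN, (1-e)N$, invoke the finite-population CLT of \citet{li2017general} with the Lindeberg/fourth-moment condition verified via H\"older's inequality and $\Omega = o(M^{-2/3})$, and expand the sandwich variance estimator directly. Two quantitative points need repair, however. First, in the cross terms of the variance expansion you assert $M^{-1}\sum_i Z_i (n_i M/N)^2 = O_\mathbb P(1)$ ``by the moment condition on $\tilde\omega_i$,'' but Assumptions \ref{assume::4}--\ref{assume::1} impose no second-moment condition on $\tilde\omega_i$ for this unadjusted estimator; one only has $M^{-1}\sum_i \tilde\omega_i^2 \le M\Omega$, which may diverge (e.g.\ $\Omega \asymp M^{-0.7}$). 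The cross terms still vanish, but only because $\bar Y_\TG - \bar Y(1) = O_\mathbb P(\Omega^{1/2})$ at that \emph{rate} (not merely $o_\mathbb P(1)$), so that the products are $O_\mathbb P(M\Omega^{3/2})$ and $O_\mathbb P(M\Omega^{2})$, both $o_\mathbb P(1)$ under $\Omega = o(M^{-2/3})$; this is how the paper's proof (of the Theorem \ref{thm2} analogue, terms $T_3$ and $T_5$) handles it. Second, your bound ``$(M\Omega)^{3/2}\cdot o(1)$'' for the normalized maximum is off in the exponent bookkeeping: the relevant quantity is $M^{-1}\max_i \tilde\varepsilon_{i\cdot}(z)^2 \le M^{-1}\bigl(\sum_i \tilde\varepsilon_{i\cdot}(z)^4\bigr)^{1/2} = O(M\Omega^{3/2}) = O\bigl((M^{2/3}\Omega)^{3/2}\bigr)$, which is what vanishes under the stated rate, whereas $(M\Omega)^{3/2}$ itself need not. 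Finally, you take the displayed explicit form of $\hat{\se}^2_\textsc{lz}(\hat\tau_{\textsc{i}})$ as given; the theorem also asserts that identity, which the paper derives by reparametrizing the design matrix so that both $\tilde X^\T\tilde X$ and $\sum_i \tilde X_i^\T \hat U_i \tilde X_i$ are diagonal — a short step you should include.
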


The condition on $\Omega$ is stronger for asymptotic Normality than for consistency, and both hold if all cluster sizes have the same order $O(N/M)$. Theorem \ref{thm::tau-i} demonstrates that the cluster-robust standard error $\hat{\se}_\textsc{lz}(\hat\tau_{\textsc{i}})$ is a conservative estimator for the true asymptotic standard error due to the non-negative difference between $V_\textup{c}(\cdot)$ and $V(\cdot)$, which equals zero when $\tilde{\varepsilon}_{i\cdot}(1) - \tilde{\varepsilon}_{i\cdot}(0) = 0$ or $ \bar{Y}_{i\cdot}(1) - \bar{Y}_{i\cdot}(0) =\tau$ for all $i$. In contrast, without accounting for clustering, the heteroskedasticity-robust standard error $\hat{\se}_\textsc{hw}(\hat\tau_{\textsc{i}})$ can be anti-conservative because \citet[][chapter 8]{angrist2008mostly} implies
$$
N\times \hat{\se}^2_\textsc{hw}(\hat\tau_{\textsc{i}})
=
\frac{N}{n_\TG^2} \sum_{ij \in \mathcal T} \hat{\varepsilon}_{ij}^2
+ \frac{N}{n_\CG^2} \sum_{ij \in \mathcal C} \hat{\varepsilon}_{ij}^2 \\
=
 \frac{1}{N} \sum_{ij} \left\{ \frac{\varepsilon_{ij}(1)^2}{e}+
 \frac{\varepsilon_{ij}(0)^2}{1-e}   \right\}+o_\mathbb P(1).
$$
If $n_i=1$ for all clusters, then $\hat{\se}_\textsc{hw}^2(\hat\tau_{\textsc{i}})$ coincides with $\hat{\se}^2_\textsc{lz}(\hat\tau_{\textsc{i}})$. If $n_i > 1$, then $\hat{\se}_\textsc{hw}^2(\hat\tau_{\textsc{i}})$ is often much smaller than $\hat{\se}^2_\textsc{lz}(\hat\tau_{\textsc{i}})$. To fix ideas, we consider a simple case with constant treatment effect $Y_{ij}(1) - Y_{ij}(0) =\tau$. From Theorem \ref{thm::tau-i}, $\hat{\se}_\textsc{lz}(\hat\tau_{\textsc{i}})$ is consistent for the true asymptotic   standard error, but we can verify that $\hat{\se}^2_\textsc{lz}(\hat\tau_{\textsc{i}}) / \hat{\se}^2_\textsc{hw}(\hat\tau_{\textsc{i}})$ has probability limit close to
\begin{eqnarray*}
1 + \frac{  \sum_{i=1}^M \sum_{(j,k): j\neq k}  \varepsilon_{ij}(0)\varepsilon_{ik}(0) }{ \sum_{ij}  \varepsilon_{ij}(0)  ^2 },
\end{eqnarray*}
which is larger than $1$ if the centered control potential outcomes are positively correlated within clusters. The magnitude of the ratio $\hat{\se}^2_\textsc{lz}(\hat\tau_{\textsc{i}}) / \hat{\se}^2_\textsc{hw}(\hat\tau_{\textsc{i}})$ is closely related to the intraclass correlation coefficient in clustered sampling \citep{lohr2019sampling} and cluster-randomized experiments \citep{donner2000design}.
Therefore, without accounting for clustering, we may have invalid inference even with large samples, which is well-known since \citet{cornfield1978randomizationbygroup}.

\citet{schochet2021design} focus on a slightly more general design and discussed the design-based properties of the cluster-robust standard error from an individual-level regression without covariates. Their result hints at Theorem \ref{thm::tau-i}, and we present a slightly more formal result under weaker regularity conditions on the sizes of the clusters. We further extend the theory to the regressions with covariates. The asymptotic properties of the covariate-adjusted regression estimator based on individual data rely on more regularity conditions on the covariates and potential outcomes.

\begin{assumption}\label{a4}
For $z=0,1$, $N^{-1}\sum _ {ij} x_{ij} \varepsilon_{ij}(z)=M^{-1}\sum_{i=1}^M \tilde{\omega}_i  \{ n_i^{-1}  \sum_{j=1}^{n_i}x_{ij}\varepsilon_{ij}(z) \} $ converges to a finite vector, and $N^{-1}\sum _ {ij} x_{ij} x_{ij}^{\T}=M^{-1}\sum_{i=1}^M \tilde{\omega}_i  \{ n_i^{-1} \sum_{j=1}^{n_i}x_{ij}x_{ij}^\T\} $ converges to a finite and invertible matrix.
\end{assumption}

Let $ Q_\textsc{i}(z)$ be the coefficient of $x_{ij}$ in the OLS fit of $\varepsilon_{ij}(z)$ on $x_{ij}$ with individual data $(z=0,1)$. This theoretical quantity depends on all potential outcomes and is useful for expressing the asymptotic distributions of the estimators based on regressions with individual data.

\begin{theorem}\label{thm2}
Define $r_i(z)= \tilde{\varepsilon}_{i\cdot}(z)-\tilde{x}_{i\cdot}^{\T} Q_\textsc{i}(z) $ and $\se^2(\hat\tau_\textsc{i}^\textup{adj}) =  V \{ r_i(z)\} / M$.
Under Assumptions \ref{assume::4}--\ref{a4}, if
$\Omega=o(1)$, then $\hat\tau_\textsc{i}^\textup{adj}
=\tau+o_\mathbb P(1)$; if
$ \Omega=o(M^{-2/3}) $ and $ V \{r_i (z)\} \not \rightarrow 0$, then $( \hat\tau_\textsc{i}^\textup{adj} - \tau )/
\se(\hat\tau_\textsc{i}^\textup{adj}) \rightsquigarrow \mathcal{N} (0,1) $ and
$M \times \hat{\se}^2_\textsc{lz}(\hat\tau_\textsc{i}^\textup{adj})=V_\textup{c}\{r_i(z)\}+o_\mathbb P(1).$
 \end{theorem}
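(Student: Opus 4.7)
The plan is to reduce the claim to Theorem \ref{thm::tau-i} applied to the residualized potential outcomes $Y_{ij}(z)-x_{ij}^\T Q_\textsc{i}(z)$, exploiting the well-known representation of Lin's fully interacted estimator. With covariates centered so that $\bar x = 0$, running OLS of $Y_{ij}$ on $(1, Z_{ij}, x_{ij}, Z_{ij} x_{ij})$ is equivalent to running separate OLS fits on the two arms, and the coefficient on $Z_{ij}$ equals
$$
\hat\tau_\textsc{i}^\textup{adj} = (\bar Y_\TG - \hat\beta_1^{\T} \bar x_\TG) - (\bar Y_\CG - \hat\beta_0^{\T} \bar x_\CG),
$$
where $\bar x_\TG = n_\TG^{-1}\sum_{ij\in\mathcal T} x_{ij}$, $\bar x_\CG = n_\CG^{-1}\sum_{ij\in\mathcal C} x_{ij}$, and $\hat\beta_z$ is the within-arm OLS slope of $Y_{ij}$ on $x_{ij}$. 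First I would show that, under Assumptions \ref{assume::4}--\ref{a4} with $\Omega=o(1)$, each $\hat\beta_z = Q_\textsc{i}(z) + o_\mathbb{P}(1)$. This reduces to showing that the within-arm Gram matrix $n_\TG^{-1}\sum_{ij\in\mathcal T} x_{ij}x_{ij}^\T$ and cross-product $n_\TG^{-1}\sum_{ij\in\mathcal T} x_{ij}Y_{ij}$ converge to their full-population analogues in Assumption \ref{a4}; rewriting each as a random cluster-weighted average of $\sum_j x_{ij}x_{ij}^\T$ and $\sum_j x_{ij}Y_{ij}(1)$ and invoking a finite-population weak law of large numbers (with moment bounds from Assumption \ref{assume::1} via Cauchy--Schwarz) does the job. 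Combined with Theorem \ref{thm::tau-i} for $\bar Y_\TG - \bar Y_\CG - \tau = o_\mathbb{P}(1)$ and a direct argument yielding $\bar x_\TG, \bar x_\CG = o_\mathbb{P}(1)$, this delivers $\hat\tau_\textsc{i}^\textup{adj} = \tau + o_\mathbb{P}(1)$.

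For asymptotic Normality under $\Omega = o(M^{-2/3})$, I would decompose
$$
\hat\tau_\textsc{i}^\textup{adj} - \tau = \Delta_0 + \Delta_1,
$$
with $\Delta_0 = (\bar Y_\TG - Q_\textsc{i}(1)^\T \bar x_\TG) - (\bar Y_\CG - Q_\textsc{i}(0)^\T \bar x_\CG) - \tau$ and $\Delta_1 = (Q_\textsc{i}(1) - \hat\beta_1)^\T \bar x_\TG - (Q_\textsc{i}(0) - \hat\beta_0)^\T \bar x_\CG$. The leading term $\Delta_0$ is the difference-in-means of the residualized potential outcomes $Y_{ij}(z) - x_{ij}^\T Q_\textsc{i}(z)$, whose cluster-level centered versions are exactly the $r_i(z)$ defined in the theorem. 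Applying Theorem \ref{thm::tau-i} to these residualized potential outcomes immediately delivers asymptotic Normality of $\Delta_0$ with variance $V\{r_i(z)\}/M$. For the remainder, a finite-population variance bound under $\Omega = o(M^{-2/3})$ gives $\bar x_\TG, \bar x_\CG = O_\mathbb{P}(\sqrt{\Omega})$, and the same kind of cluster-sampling argument applied to the Gram and cross-product matrices yields $\hat\beta_z - Q_\textsc{i}(z) = O_\mathbb{P}(\sqrt{\Omega})$, so $\Delta_1 = O_\mathbb{P}(\Omega) = o_\mathbb{P}(M^{-2/3}) = o_\mathbb{P}(M^{-1/2})$ and is asymptotically negligible relative to $\se(\hat\tau_\textsc{i}^\textup{adj})$.

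For the cluster-robust variance, the adjusted within-cluster residuals are $\hat\varepsilon_{ij}^\textup{adj} = Y_{ij} - \hat\alpha_{Z_i} - \hat\beta_{Z_i}^\T x_{ij}$. In a cluster $i$ with $Z_i = z$, summing over $j$ and replacing $(\hat\alpha_z, \hat\beta_z)$ by their limits $(\bar Y(z), Q_\textsc{i}(z))$ via the slope consistency above gives $\sum_j \hat\varepsilon_{ij}^\textup{adj} = \sum_j\{\varepsilon_{ij}(z) - x_{ij}^\T Q_\textsc{i}(z)\} + \text{negligible}$, whose leading cluster total coincides with $(N/M)\,r_i(z)$. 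Plugging into the Liang--Zeger formula and applying the same finite-population law of large numbers that underlies the second display of Theorem \ref{thm::tau-i} to the residualized potential outcomes yields $M\cdot\hat{\se}^2_\textsc{lz}(\hat\tau_\textsc{i}^\textup{adj}) = V_\textup{c}\{r_i(z)\} + o_\mathbb{P}(1)$. The main obstacle I anticipate is establishing the $O_\mathbb{P}(\sqrt{\Omega})$ rates on both $\bar x_\TG$ and $\hat\beta_z - Q_\textsc{i}(z)$ under the relatively weak moment conditions in Assumptions \ref{assume::1} and \ref{a4}: this requires combining the fourth-moment control on $x_{ij}$ and $Y_{ij}(z)$ in Assumption \ref{assume::1} (via Cauchy--Schwarz to bound second moments of the cluster-level sums $\sum_j x_{ij} x_{ij}^\T$ and $\sum_j x_{ij} Y_{ij}(z)$) with a cluster-level finite-population variance calculation, for which the stricter cluster-size condition $\Omega=o(M^{-2/3})$ is indispensable.
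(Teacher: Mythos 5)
Your treatment of consistency and of asymptotic Normality is essentially the paper's own argument: the paper likewise writes $\hat\tau_\textsc{i}^\textup{adj}
=n_\TG^{-1}\sum_{ij\in \mathcal T}\{Y_{ij}-x_{ij}^{\T}\hat{Q}_\textsc{i}(1)\}-n_\CG^{-1}\sum_{ij\in \mathcal C}\{Y_{ij}-x_{ij}^{\T}\hat{Q}_\textsc{i}(0)\}$ (your $\hat\beta_z$ is its $\hat Q_\textsc{i}(z)$), proves $\bar x_\TG=O_\mathbb P(\Omega^{1/2})$ and $\hat Q_\textsc{i}(z)-Q_\textsc{i}(z)=O_\mathbb P(\Omega^{1/2})$ by exactly the cluster-level finite-population variance calculations you describe, concludes that the remainder is $M^{1/2}O_\mathbb P(\Omega)=o_\mathbb P(1)$ under $\Omega=o(M^{-2/3})$, and then invokes Theorem \ref{thm::tau-i} with the residualized potential outcomes $\varepsilon_{ij}(z)-x_{ij}^\T Q_\textsc{i}(z)$. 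Those two parts of your plan would go through.

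The gap is in the variance step. You treat $M\times\hat{\se}^2_\textsc{lz}(\hat\tau_\textsc{i}^\textup{adj})$ as if it were simply $(M/n_\TG^2)\sum_{Z_i=1}(\sum_j\hat\varepsilon^\textup{adj}_{ij})^2+(M/n_\CG^2)\sum_{Z_i=0}(\sum_j\hat\varepsilon^\textup{adj}_{ij})^2$, i.e.\ the same closed form as in the unadjusted case, and then pass to the limit. That closed form is only valid when the reparametrized design matrix is block-diagonal, which fails here: the LZ estimator for the interacted regression is the $(2,2)$ element of a $(2+2p_x)$-dimensional sandwich $(X^\T X)^{-1}(\sum_i X_i^\T\hat U_iX_i)(X^\T X)^{-1}$, and the covariate columns contribute off-diagonal blocks to both the bread and the meat. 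The paper spends most of its proof on exactly this point: it reparametrizes the design (Lemma \ref{LZ}), computes the meat $H_i$ explicitly (Lemma \ref{l7}), bounds all blocks of $H$ by $O_\mathbb P(M\Omega)$ (Lemma \ref{l8}), shows the bread deviates from its deterministic block-diagonal limit $\Lambda$ by $\Psi=O_\mathbb P(\Omega^{1/2})$, and then verifies that the cross terms such as $\Psi_{12}H_{21}\Lambda_{11}^{-1}=O_\mathbb P(\Omega^{1/2})O_\mathbb P(M\Omega)O_\mathbb P(1)=O_\mathbb P(M\Omega^{3/2})$ vanish precisely because $\Omega=o(M^{-2/3})$. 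Without this bookkeeping your claim that ``plugging into the Liang--Zeger formula'' yields $V_\textup{c}\{r_i(z)\}+o_\mathbb P(1)$ is unsupported; the diagonal-block heuristic gives the right answer, but establishing that the remaining blocks are asymptotically negligible is the substantive content of this part of the theorem and is where the rate condition on $\Omega$ is actually consumed.
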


Theorem \ref{thm2} demonstrates that the regression estimator $\hat\tau_\textsc{i}^\textup{adj}$ adjusted for covariates $x_{ij}$ is consistent, and the associated cluster-robust standard error is a conservative estimator for the true asymptotic   standard error. However, the form of $r_i(z)$ in Theorem \ref{thm2} is $ \tilde{\varepsilon}_{i\cdot}(z)-\tilde{x}_{i\cdot}^{\T} Q$, which suggests that an optimal choice of $Q$ should be the coefficient of $\tilde{x}_{i\cdot}$ in the OLS fit of $\tilde{\varepsilon}_{i\cdot}(z)$ on $\tilde{x}_{i\cdot}$ based on cluster totals, rather than  $ Q_\textsc{i}(z)$ from the individual-level regression. Consequently, it is possible that $\hat\tau_\textsc{i}^\textup{adj}$ is even less efficient than $\hat\tau_{\textsc{i}}$. We give a numeric example in Section \ref{sec::6.2}. In the special case with $n_i = N/M$ and $x_{ij}$ not varying with $j$, $ Q_\textsc{i}(z)$ equals the coefficient of $\tilde{x}_{i\cdot}$ in the OLS fit of $\tilde{\varepsilon}_{i\cdot}(z)$ on $\tilde{x}_{i\cdot}$, so $\hat\tau_\textsc{i}^\textup{adj}$ has smaller asymptotic variance than $\hat\tau_{\textsc{i}}$. In particular, when $n_i=1$ for all clusters, Theorem \ref{thm2} recovers \citet{lin2013} and \citet{li2017general}.

\citet{lin2013} popularized the OLS estimator with treatment-covariates interactions, but it is common in social science to use $\hat\tau_{\textsc{i}}^{\textup{ancova}}$, the coefficient of $Z_{ij}$ in the OLS fit of $Y_{ij}$ on $(1, Z_{ij}, x_{ij})$ without interactions \citep[e.g.,][]{graubard1994regression, green2008analysis, schochet2013estimators, guiteras2015encouraging, schochet2021design}. Its properties are similar to $\hat\tau_\textsc{i}^\textup{adj}$ in Theorem \ref{thm2}.

\begin{corollary}\label{pro2}
Define $r_i(z)= \tilde{\varepsilon}_{i\cdot}(z)-\tilde{x}_{i\cdot}^{\T}\{e Q_\textsc{i}(1)+(1-e) Q_\textsc{i}(0)\}$ and $\se^2(\hat\tau_\textsc{i}^\textup{ancova}) =  V \{  r_i(z)  \} / M$. Under Assumptions \ref{assume::4}--\ref{a4}, if
$\Omega  =o(1)$, then $\hat\tau_{\textsc{i}}^\textup{ancova}
=\tau+o_\mathbb P(1)$; if
$ \Omega  =o(M^{-2/3} ) $ and $ V \{  r_i(z)  \}\not\rightarrow 0$, then $( \hat\tau_{\textsc{i}}^\textup{ancova} - \tau )/
\se (\hat\tau_\textsc{i}^\textup{ancova}) \rightsquigarrow \mathcal{N} (0,1) $ and
$M \times \hat{\se}_\textsc{lz}^2(\hat\tau_{\textsc{i}}^\textup{ancova})=V_\textup{c}\{  r_i(z)  \}+o_\mathbb P(1).$
 \end{corollary}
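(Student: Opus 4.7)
The plan is to mirror the proof of Theorem~\ref{thm2}, using the Frisch--Waugh--Lovell identity to express $\hat\tau_\textsc{i}^\textup{ancova}$ in closed form and then identifying the probability limit of the pooled slope. The key identity is
$$
\hat\tau_\textsc{i}^\textup{ancova} = (\bar Y_\TG - \bar Y_\CG) - (\bar x_\TG - \bar x_\CG)^{\T} \hat\beta,
$$
where $\hat\beta$ is the pooled OLS coefficient of $x_{ij}$ in the regression of $Y_{ij}$ on $(1, Z_{ij}, x_{ij})$. Writing $\hat\beta$ in its sandwich form and decomposing both numerator and denominator into sums over treated and control clusters, the design-based law of large numbers together with Assumptions~\ref{assume::1}--\ref{a4} (and the facts that $n_\TG/N = e + o_\mathbb{P}(1)$ and $\bar x_\TG, \bar x_\CG = o_\mathbb{P}(1)$ under $\Omega=o(1)$) should yield
$$
\hat\beta = e\, Q_\textsc{i}(1) + (1-e)\, Q_\textsc{i}(0) + o_\mathbb{P}(1),
$$
which we denote by $\beta^*$ for brevity. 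This identification is the only place the ANCOVA analysis diverges from that of $\hat\tau_\textsc{i}^\textup{adj}$: the pooled slope picks out the $e$-weighted combination $\beta^*$ rather than the arm-specific coefficients $Q_\textsc{i}(1)$ and $Q_\textsc{i}(0)$.

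Once $\hat\beta$ is replaced by $\beta^*$ at a cost of $o_\mathbb{P}(M^{-1/2})$ (see the final paragraph), the estimator reduces to the plain difference-in-means of the transformed outcome $Y_{ij} - x_{ij}^{\T}\beta^*$. Because $\beta^*$ is a finite-population constant, this transformed outcome has well-defined potential outcomes $Y_{ij}(z) - x_{ij}^{\T}\beta^*$ whose centered scaled cluster totals $\tilde\varepsilon_{i\cdot}(z) - \tilde x_{i\cdot}^{\T}\beta^*$ are exactly $r_i(z)$. Consistency under $\Omega=o(1)$ and asymptotic normality under $\Omega=o(M^{-2/3})$ then follow by applying Theorem~\ref{thm::tau-i} to these transformed potential outcomes, yielding asymptotic variance $V\{r_i(z)\}/M$.

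For the cluster-robust standard error, we expand the estimated residual $\hat\varepsilon_{ij}^\textup{ancova} = Y_{ij} - \hat\alpha - Z_{ij}\hat\tau_\textsc{i}^\textup{ancova} - x_{ij}^{\T}\hat\beta$ around its population analogue by substituting $\hat\beta = \beta^* + o_\mathbb{P}(1)$ and $\hat\tau_\textsc{i}^\textup{ancova} = \tau + o_\mathbb{P}(1)$. The cluster sums $\sum_{j=1}^{n_i} \hat\varepsilon_{ij}^\textup{ancova}$ then converge, after the factor of $M$ prescribed in Theorem~\ref{thm::tau-i}, to the corresponding cluster sums of the centered transformed outcomes, so the identical algebraic form of the cluster-robust formula from Theorem~\ref{thm::tau-i} gives $M\,\hat{\se}^2_\textsc{lz}(\hat\tau_\textsc{i}^\textup{ancova}) = V_\textup{c}\{r_i(z)\} + o_\mathbb{P}(1)$.

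The main obstacle is verifying the pooled-slope convergence $\hat\beta - \beta^* = o_\mathbb{P}(1)$ in the design-based sense, so that multiplying by $\bar x_\TG - \bar x_\CG = O_\mathbb{P}(M^{-1/2})$ produces only an $o_\mathbb{P}(M^{-1/2})$ perturbation to the CLT term. This requires controlling cluster-size heterogeneity in the pooled normal equations through the moment conditions in Assumptions~\ref{assume::1} and~\ref{a4} together with $\Omega = o(M^{-2/3})$, paralleling the treatment of the arm-specific coefficients $\hat Q_\TG - Q_\textsc{i}(1)$ and $\hat Q_\CG - Q_\textsc{i}(0)$ in the proof of Theorem~\ref{thm2}.
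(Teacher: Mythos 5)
Your proposal follows the paper's proof essentially step for step: the paper likewise invokes the Frisch--Waugh--Lovell theorem, derives the probability limit of the pooled slope $\hat Q_\textsc{i}$ as $eQ_\textsc{i}(1)+(1-e)Q_\textsc{i}(0)$, and then declares the remaining steps ``similar to those of Theorem \ref{thm2},'' which is exactly your reduction to the difference-in-means of the transformed outcome $Y_{ij}-x_{ij}^{\T}\beta^*$ followed by an appeal to Theorem \ref{thm::tau-i}. The one point that needs tightening is the rate bookkeeping in your final paragraph. Under the stated condition $\Omega=o(M^{-2/3})$ the cluster sizes may be unbalanced enough that $\bar x_\TG-\bar x_\CG$ is only $O_\mathbb{P}(\Omega^{1/2})$ (Lemma \ref{moment}), not $O_\mathbb{P}(M^{-1/2})$ as you assert, so $\hat\beta-\beta^*=o_\mathbb{P}(1)$ alone does not make the perturbation $(\bar x_\TG-\bar x_\CG)^{\T}(\hat\beta-\beta^*)$ of order $o_\mathbb{P}(M^{-1/2})$. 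What is needed, and what the paper proves via Lemmas \ref{lem::18} and \ref{moment}, is the quantitative rate $\hat\beta-\beta^*=O_\mathbb{P}(\Omega^{1/2})$, so that the product is $O_\mathbb{P}(\Omega)=o_\mathbb{P}(M^{-2/3})=o_\mathbb{P}(M^{-1/2})$; this mirrors exactly the handling of $\hat Q_\textsc{i}(z)-Q_\textsc{i}(z)$ in the proof of Theorem \ref{thm2}, so it is a one-line repair rather than a structural defect.
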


Corollary \ref{pro2} demonstrates that $\hat\tau_{\textsc{i}}^\textup{ancova}$ is consistent and the associated cluster-robust standard error is conservative for the true   asymptotic  standard error. Similar to \citet{freedman2008regression_b}, $\hat\tau_{\textsc{i}}^\textup{ancova}$ can be even less efficient than $\hat\tau_{\textsc{i}}$; different from \citet{lin2013}, $\hat\tau_\textsc{i}^\textup{adj}$ does not necessarily improve $\hat\tau_{\textsc{i}}^\textup{ancova}$ in terms of asymptotic efficiency. Because $\hat\tau_{\textsc{i}}^\textup{ancova}$ is suboptimal even in the case with $n_i=1$  \citep{freedman2008regression_b, lin2013}, we do not recommend it unless we believe that there is little treatment heterogeneity explained by $x_{ij}$. \citet{schochet2021design} discuss a general version of $\hat\tau_\textsc{i}^\textup{adj}$ in cluster-randomized experiments with a blocking factor. They establish the central limit theorem but do not give a proof for the validity of the cluster-robust standard error. We fill the gap by showing the asymptotic conservativeness of the cluster-robust standard error. Moreover, \citet{schochet2021design} do not discuss $\hat\tau_\textsc{i}^\textup{adj}$ that allows for treatment effect heterogeneity with respect to the covariates.

\subsection{Regression estimators based on scaled cluster totals}

Recall that  $\tau=M^{-1} \sum_{i=1}^M \{\tilde{Y}_{i\cdot}(1)-\tilde{Y}_{i\cdot}(0)\} $ is the average treatment effect on the scaled cluster totals $\tilde{Y}_{i\cdot}$, and a cluster-randomized experiment is essentially a completely randomized experiment on clusters. So we expect previous results on the regression analysis of completely randomized experiments to be applicable if we view the scaled cluster total as a new outcome. Although conceptually straightforward, we must address the technical difficulty that the cluster sizes can vary by extending the central limit theorems under completely randomized experiments. We show the asymptotic properties of $\hat\tau_{\textsc{t}}$ and $\hat\tau_{\textsc{t}}^\textup{adj}$, allowing some cluster sizes to be much larger than others. The following theorems extend \citet{lin2013} and \citet{li2017general}.

\begin{theorem}\label{thm::tau-T}
Let $r_{i}(z)= \tilde{Y}_{i\cdot}(z)-\bar{Y}(z)$ and $\se^2(\hat{\tau}_\textsc{t}) =  V \{r_{i}(z)\}/M$.
Under Assumptions \ref{assume::4} and \ref{assume::1}, if $\Omega=o(1)$, then $\hat\tau_{\textsc{t}}=\tau+o_\mathbb P(1)$; if $
\Omega=o(M^{-2/3})$ and $  V \{r_{i}(z)\} \not \rightarrow 0$, then $( \hat{\tau}_\textsc{t} - \tau )/
\se(\hat\tau_{\textsc{t}}) \rightsquigarrow \mathcal{N} (0,1) $ and
$M\times \hat{\se}^2_\textsc{hw}(\hat\tau_{\textsc{t}})=V_\textup{c}\{r_{i}(z)\}+o_\mathbb P(1).$
\end{theorem}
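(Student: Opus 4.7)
The plan is to reduce the problem to the Neyman analysis of a completely randomized experiment at the cluster level. Since $\bar Y(z)=M^{-1}\sum_{i=1}^M \tilde Y_{i\cdot}(z)$ and $\tau = M^{-1}\sum_i\{\tilde Y_{i\cdot}(1)-\tilde Y_{i\cdot}(0)\}$, the centered quantities $r_i(z)=\tilde Y_{i\cdot}(z)-\bar Y(z)$ have zero average over the $M$ clusters, and $\hat\tau_\textsc{t}$ is numerically the difference-in-means of the cluster-level ``outcomes'' $\tilde Y_{i\cdot}$. Because $(Z_1,\ldots,Z_M)$ is a uniform random permutation of $eM$ ones and $(1-e)M$ zeros, $\hat\tau_\textsc{t}$ coincides with Neyman's estimator for a completely randomized experiment on $M$ units with potential outcomes $\tilde Y_{i\cdot}(0),\tilde Y_{i\cdot}(1)$. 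Classical Neyman theory then delivers $E(\hat\tau_\textsc{t})=\tau$ and the finite-population variance $\se^2(\hat\tau_\textsc{t})=V\{r_i(z)\}/M$ up to a vanishing finite-population correction.

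Consistency and asymptotic Normality then follow from the finite-population large-sample theory for completely randomized experiments \citep{li2017general}, provided the relevant moment and Lindeberg conditions hold at the cluster level. Because Assumption \ref{assume::1} is stated at the unit level, the key technical step is to transfer it to $\tilde Y_{i\cdot}(z)$. Writing $\tilde Y_{i\cdot}(z)=\tilde\omega_i\bar Y_{i\cdot}(z)$, using Jensen's inequality $\bar Y_{i\cdot}(z)^4 \le n_i^{-1}\sum_j Y_{ij}(z)^4$, and $\tilde\omega_i \le M\Omega$, I would establish
\begin{equation*}
M^{-1}\sum_{i=1}^M \tilde Y_{i\cdot}(z)^2 = O(M\Omega),
\qquad
M^{-1}\sum_{i=1}^M \tilde Y_{i\cdot}(z)^4 = O\bigl((M\Omega)^3\bigr),
\end{equation*}
together with analogous bounds for $r_i(z)$. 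Under $\Omega=o(1)$, the Neyman variance $\se^2(\hat\tau_\textsc{t})=O(\Omega)=o(1)$ and Chebyshev's inequality yield $\hat\tau_\textsc{t}=\tau+o_\mathbb{P}(1)$. Under the stronger $\Omega=o(M^{-2/3})$ with $V\{r_i(z)\}\not\to 0$, the Lindeberg-type condition $\max_i r_i(z)^2/\sum_i r_i(z)^2\to 0$ holds since $\max_i r_i(z)^2 \le \bigl(\sum_i r_i(z)^4\bigr)^{1/2}=O(M^2\Omega^{3/2})$ while $\sum_i r_i(z)^2=\Theta(M)$, giving a ratio of order $M\Omega^{3/2}=o(1)$; hence $(\hat\tau_\textsc{t}-\tau)/\se(\hat\tau_\textsc{t})\rightsquigarrow\mathcal{N}(0,1)$.

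For the standard error, since $\hat\tau_\textsc{t}$ is the OLS slope of $\tilde Y_{i\cdot}$ on $(1,Z_i)$ at the cluster level, the HC0 sandwich formula simplifies to
\begin{equation*}
M\,\hat\se^2_\textsc{hw}(\hat\tau_\textsc{t})
=
\frac{1}{e^2 M}\sum_{Z_i=1}\bigl(\tilde Y_{i\cdot}-\bar{\tilde Y}_\TG\bigr)^2
+\frac{1}{(1-e)^2 M}\sum_{Z_i=0}\bigl(\tilde Y_{i\cdot}-\bar{\tilde Y}_\CG\bigr)^2,
\end{equation*}
where $\bar{\tilde Y}_\TG$ and $\bar{\tilde Y}_\CG$ are the treated and control sample means of $\tilde Y_{i\cdot}$. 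Each term is the sample mean of squared cluster outcomes over a simple random sample of cluster indices, and its variance is controlled by the fourth-moment bound above, which is $o(1)$ under $\Omega=o(M^{-2/3})$. A finite-population law of large numbers drives the first term to $M^{-1}\sum_i r_i(1)^2/e$ and the second to $M^{-1}\sum_i r_i(0)^2/(1-e)$ in probability, whose sum equals $V_\textup{c}\{r_i(z)\}$. I expect the main obstacle to be the careful bookkeeping that converts Assumption \ref{assume::1}'s unit-level fourth-moment conditions into cluster-level Lindeberg and moment conditions when $\tilde\omega_i$ may diverge; the role of $\Omega=o(M^{-2/3})$, which forces $\max_i\tilde\omega_i=o(M^{1/3})$, is exactly the slack needed to push the cluster-level CLT through despite severely unequal cluster sizes.
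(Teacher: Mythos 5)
Your proposal is correct and follows essentially the same route as the paper: the paper also reduces $\hat\tau_\textsc{t}$ to a difference-in-means on the cluster-level outcomes $\tilde{Y}_{i\cdot}(z)$, verifies via H\"older's inequality that $M^{-1}\sum_{i}\tilde{Y}_{i\cdot}(z)^2=O(M\Omega)=o(M)$ and $M^{-1}\sum_{i}\tilde{Y}_{i\cdot}(z)^4=O(M^3\Omega^3)=o(M)$ under the respective conditions on $\Omega$, and then invokes its completely-randomized-experiment lemma (Lemma \ref{DIMA}) for consistency, the CLT, and the probability limit of the HW variance. Your inline Lindeberg check and the sandwich-formula computation are exactly the content of that lemma, so the two arguments coincide.
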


\begin{assumption}\label{a5}
For $z=0,1$, $M^{-1}\sum _ { i = 1 } ^ { M } c_i  \tilde{Y}_{i\cdot}(z)=M^{-1}\sum_{i=1}^M \tilde{\omega}_i  c_i\bar{Y}_{i\cdot}(z)$
converges to a finite vector, and
$M^{-1}  \sum _ { i = 1 } ^ { M } c_i c_i^{\T} $ converges to a finite and invertible matrix.
\end{assumption}

\begin{theorem}\label{thm::tau-T-adj}
Let $\ r_{i}(z)$ be the residual from the OLS fit of $  \tilde{Y}_{i\cdot}(z) $ on $ (1,c_i)$ and define $\se^2(\hat\tau_{\textsc{t}}^\textup{adj}) =  V \{r_{i}(z)\} / M$.
Under Assumptions \ref{assume::4}, \ref{assume::1} and \ref{a5}, if $\Omega=o(1)$, then $\hat\tau_{\textsc{t}}^\textup{adj}=\tau+o_\mathbb P(1)$; if $ \Omega =O(M^{-1})$ and $  V \{r_{i}(z)\} \not \rightarrow 0$, then $( \hat\tau_{\textsc{t}}^\textup{adj}- \tau )/
\se(\hat\tau_{\textsc{t}}^\textup{adj}) \rightsquigarrow \mathcal{N} (0,1) $ and
$M\times \hat{\se}^2_\textsc{hw}(\hat\tau_{\textsc{t}}^\textup{adj})=V_\textup{c}\{r_{i}(z)\}+o_\mathbb P(1)$.
\end{theorem}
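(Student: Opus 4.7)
The plan is to view the cluster-randomized experiment as a completely randomized experiment on the $M$ clusters themselves, with cluster-level potential outcomes $\tilde Y_{i\cdot}(z)$ and covariates $c_i$. Under this reduction $\hat\tau_{\textsc{t}}^\textup{adj}$ is precisely \citet{lin2013}'s interacted regression estimator applied to cluster-level data, so I would reuse the proof architecture of \citet{lin2013} and \citet{li2017general}, extending it to accommodate the cluster-size heterogeneity tracked by $\tilde\omega_i = n_i M / N$.

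The steps are as follows. First, apply the algebraic identity $\hat\tau_{\textsc{t}}^\textup{adj} = (\bar{\tilde{Y}}_{\TG} - \bar c_{\TG}^{\T}\hat\beta_1) - (\bar{\tilde{Y}}_{\CG} - \bar c_{\CG}^{\T}\hat\beta_0)$, where $\hat\beta_z$ is the sample OLS slope of $\tilde Y_{i\cdot}$ on $c_i$ in arm $z$. Second, combine Assumption~\ref{a5} with a design-based law of large numbers for CRE on clusters to show $\hat\beta_z - \beta_z = o_{\mathbb{P}}(1)$ for the population slope $\beta_z = (M^{-1}\sum_i c_i c_i^{\T})^{-1}(M^{-1}\sum_i c_i \tilde Y_{i\cdot}(z))$. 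Third, use $\bar c = 0$ together with the $O_{\mathbb{P}}(M^{-1/2})$ rate on $\bar c_{\TG}$ and $\bar c_{\CG}$ to linearize the estimator as the difference in arm averages of the population residuals $r_i(z) = \tilde Y_{i\cdot}(z) - \bar Y(z) - c_i^{\T}\beta_z$ up to an $o_{\mathbb{P}}(M^{-1/2})$ remainder. Fourth, invoke the finite-population central limit theorem for CRE on this linear form to deliver Normality with $\se^2(\hat\tau_{\textsc{t}}^\textup{adj}) = V\{r_i(z)\}/M$. Finally, expand $\hat{\se}^2_{\textsc{hw}}(\hat\tau_{\textsc{t}}^\textup{adj})$ as a weighted sum of squared sample residuals $\hat r_i$, show by an $L^2$ argument that $\hat r_i$ is uniformly close to $r_i(z)$, and conclude via a design-based law of large numbers that $M\,\hat{\se}^2_{\textsc{hw}}(\hat\tau_{\textsc{t}}^\textup{adj}) = V_\textup{c}\{r_i(z)\} + o_{\mathbb{P}}(1)$.

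The main obstacle is transferring the individual-level moment bounds in Assumption~\ref{assume::1} to the cluster-level quantities $\tilde Y_{i\cdot}(z)$ and $r_i(z)$, which are inflated by $\tilde\omega_i$. The condition $\Omega = O(M^{-1})$, strictly stronger than the $o(M^{-2/3})$ sufficient in Theorem~\ref{thm::tau-T}, forces $\max_i \tilde\omega_i$ to be uniformly bounded, which is exactly what I need: a Jensen/Cauchy--Schwarz argument gives
$$
M^{-1}\sum_{i=1}^M \tilde Y_{i\cdot}(z)^4 \;\le\; (\max_i \tilde\omega_i)^3 \cdot N^{-1}\sum_{ij} Y_{ij}(z)^4 = O(1),
$$
and the moment bounds on $c_i$ in Assumption~\ref{assume::1} upgrade this to bounded fourth moments of $r_i(z)$. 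These in turn yield both the Lindeberg condition for the CLT and the uniform control needed in the variance-estimator step. The weaker consistency claim requires only second-moment control of $\tilde Y_{i\cdot}(z)$, which holds under $\Omega = o(1)$; the stricter $\Omega = O(M^{-1})$ is driven entirely by the fourth-moment demand imposed by the covariate adjustment.
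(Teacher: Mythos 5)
Your proposal follows essentially the same route as the paper: reduce to Lin's interacted regression in a completely randomized experiment on the $M$ clusters (the paper packages this step as a standalone lemma in the appendix, citing Li, Ding and Rubin for the CLT and the Huber--White variance limit) and transfer the individual-level moment bounds of Assumption \ref{assume::1} to the scaled cluster totals via exactly your H\"older bound $M^{-1}\sum_{i=1}^M \tilde Y_{i\cdot}(z)^4 \le (M\Omega)^3\, N^{-1}\sum_{ij}Y_{ij}(z)^4=O(1)$ under $\Omega=O(M^{-1})$. One minor imprecision: Assumption \ref{assume::1} only controls second moments of $c_i$, so you cannot literally conclude bounded fourth moments of $r_i(z)$; this is harmless, though, because the finite-population CLT and the variance-estimator argument only require second moments of the covariates together with $M^{-1}\max_{1\le i\le M}\|c_i\|_{\infty}^2=o(1)$, which is already assumed.
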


Comparing Theorems \ref{thm::tau-T} and \ref{a5}, covariate adjustment always improves the asymptotic efficiency because $\se^2(\hat\tau_{\textsc{t}}^\textup{adj})  \leq \se^2(\hat{\tau}_\textsc{t})$ from the regressions based on cluster totals.
We omit the detailed discussion of $\hat{\tau}_\textsc{t}^\text{ancova}$, the coefficient of $Z_i$ in the OLS fit of $\tilde{Y}_{i\cdot}$ on $(1, Z_{i}, c_i)$. Similar to $\hat{\tau}_\textsc{t}$ and $\hat\tau_{\textsc{t}}^\textup{adj}$, it  is also consistent and asymptotically Normal for $\tau$, and the associated heteroskedasticity-robust standard error is conservative for the true asymptotic standard error. However, it may not even improve the efficiency of $\hat{\tau}_\textsc{t}$, and it is suboptimal compared to $\hat\tau_{\textsc{t}}^\textup{adj}$.

\subsection{Comparison}\label{sec::compare1}

We first compare  $\hat\tau_\textsc{i}$ and $\hat\tau_\textsc{t}$: $\hat\tau_\textsc{t}$ is the Horvitz--Thompson estimator and thus it is unbiased \citep{Middleton2015cluster}; $\hat\tau_\textsc{i}$ is the H\'ajek estimator and thus it has finite-sample bias unless the clusters have equal sizes. In some extreme cases, $\hat\tau_\textsc{t}$ has smaller mean squared error than $\hat\tau_\textsc{i}$. For example, we have the following finite-sample result. The conditions are very strong and unlikely to hold, but we keep it here for mathematical interest.

\begin{proposition}\label{proposition::1}
If $e=1/2$ and $\bar Y(z)=0$ for $z=0,1$, then $E(\hat{\tau}_{\textsc{t}}-\tau)^2\le E(\hat{\tau}_{\textsc{i}}-\tau)^2$.
\end{proposition}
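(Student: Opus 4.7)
The plan is to exploit the two hypotheses in tandem: $\bar Y(z)=0$ makes the potential-outcome totals vanish and $e=1/2$ makes the randomization symmetric under $Z\mapsto 1-Z$, which together collapse both estimators into remarkably simple forms. I would first set $A_i=\sum_{j=1}^{n_i}Y_{ij}(1)$, $B_i=\sum_{j=1}^{n_i}Y_{ij}(0)$, $U=\sum_{i=1}^M Z_i A_i$, and $V=\sum_{i=1}^M Z_i B_i$. The hypothesis $\bar Y(z)=0$ forces $\sum_i A_i=\sum_i B_i=0$, so $\sum_i(1-Z_i)A_i=-U$ and $\sum_i(1-Z_i)B_i=-V$. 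Plugging into the definitions and using $e=1/2$ yields
\[
\hat\tau_\textsc{t}=\frac{2}{N}(U+V), \qquad \hat\tau_\textsc{i}=\frac{U}{n_\TG}+\frac{V}{n_\CG},
\]
and $\tau=\bar Y(1)-\bar Y(0)=0$, so both mean-squared errors are just second moments.

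Next I would symmetrize. Because $e=1/2$, the assignment $Z$ and its complement $1-Z$ are equal in distribution. Let $\hat\tau_\textsc{i}^\ast$ denote $\hat\tau_\textsc{i}$ evaluated at $1-Z$: under the swap, $(U,V,n_\TG,n_\CG)$ transform into $(-U,-V,n_\CG,n_\TG)$, so $\hat\tau_\textsc{i}^\ast=-U/n_\CG-V/n_\TG$, and distributional symmetry gives $E\hat\tau_\textsc{i}^2=E(\hat\tau_\textsc{i}^\ast)^2$. Hence $2\,E\hat\tau_\textsc{i}^2=E[\hat\tau_\textsc{i}^2+(\hat\tau_\textsc{i}^\ast)^2]$.

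The punchline is the identity
\[
\hat\tau_\textsc{i}-\hat\tau_\textsc{i}^\ast=(U+V)\Bigl(\frac{1}{n_\TG}+\frac{1}{n_\CG}\Bigr)=\frac{(U+V)\,N}{n_\TG n_\CG}=\frac{N^2}{2\,n_\TG n_\CG}\,\hat\tau_\textsc{t},
\]
which exhibits the difference of the swapped estimators as an almost-sure multiple of $\hat\tau_\textsc{t}$. Applying the elementary bound $a^2+b^2\ge(a-b)^2/2$ pointwise, taking expectations, and then using the AM--GM inequality $n_\TG n_\CG\le N^2/4$ (valid because $n_\TG+n_\CG=N$ is non-random) gives
\[
2\,E\hat\tau_\textsc{i}^2\ge \tfrac{1}{2}\,E(\hat\tau_\textsc{i}-\hat\tau_\textsc{i}^\ast)^2 = E\!\Bigl[\frac{N^4}{8\,n_\TG^2 n_\CG^2}\,\hat\tau_\textsc{t}^2\Bigr]\ge 2\,E\hat\tau_\textsc{t}^2,
\]
which is the desired $E(\hat\tau_\textsc{i}-\tau)^2\ge E(\hat\tau_\textsc{t}-\tau)^2$. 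The main obstacle is spotting the symmetrization; once one recognizes that $\hat\tau_\textsc{i}-\hat\tau_\textsc{i}^\ast$ is proportional to $\hat\tau_\textsc{t}$ with multiplier bounded below by $2$ in absolute value, the parallelogram-style inequality and AM--GM finish the proof mechanically.
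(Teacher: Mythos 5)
Your proof is correct and rests on the same three ingredients as the paper's: the exchangeability of $Z$ and $1-Z$ when $e=1/2$, the elementary bound $a^2+b^2\ge (a-b)^2/2$, and the AM--GM inequality $4\,n_\TG n_\CG\le N^2$; the paper applies AM--GM first to the denominator of $\hat\tau_\textsc{i}$ and then symmetrizes the numerator $n_\CG\sum_{ij\in\mathcal T}\varepsilon_{ij}(1)-n_\TG\sum_{ij\in\mathcal C}\varepsilon_{ij}(0)$, whereas you symmetrize the estimator directly and isolate the identity $\hat\tau_\textsc{i}-\hat\tau_\textsc{i}^\ast=\tfrac{N^2}{2n_\TG n_\CG}\hat\tau_\textsc{t}$, which is the same algebra packaged more transparently.
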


Nevertheless, $\hat\tau_\textsc{i}$ is consistent and its bias diminishes with large $M$. In most cases, $\hat\tau_\textsc{t}$ has larger asymptotic variance than $\hat\tau_\textsc{i}$ because the residual $\tilde{Y}_{i\cdot}(z) -  n_i  \bar{Y}(z) M/N$ in Theorem \ref{thm::tau-i} adjusts for the cluster size while the residual $\tilde{Y}_{i\cdot}(z)-\bar{Y}(z)$ in Theorem \ref{thm::tau-T} does not. Moreover, it is well known \citep{fuller2009sampling, Middleton2015cluster} that $\hat\tau_\textsc{t}$ is not invariant to the location shift of the outcomes, that is, if we transform the potential outcomes to $ b+Y_{ij}(z)$ for $z=0,1$, then $\hat\tau_\textsc{t}$ will change but $\hat\tau_\textsc{i}$ will not. We do not recommend using $\hat\tau_\textsc{t}$.

We can fix the problems of  $\hat\tau_\textsc{t}$ by covariate adjustment with $c_i$ including $n_i$. Below we compare various regression-adjusted estimators. We write particular covariates in parentheses after an estimator to emphasize the change of the default covariates. For instance, $\hat\tau_{\textsc{t}}^\textup{adj}(n_i,\tilde{x}_{i\cdot})$ denotes the regression estimator based on cluster totals adjusted for centered $c_i$ with $c_i = (n_i, \tilde{x}_{i\cdot})$.

\begin{corollary}\label{corollary::improvingIbyT}
The asymptotic variances satisfy
$$
\se^2\{\hat\tau_{\textsc{t}}^\textup{adj}(n_i,\tilde{x}_{i\cdot})\}
\le \se^2\{\hat\tau_{\textsc{t}}^\textup{adj}(n_i)\}
\le \se^2(\hat\tau_{\textsc{i}}),\quad
\se^2\{\hat\tau_{\textsc{t}}^\textup{adj}(n_i,\tilde{x}_{i\cdot})\}
\le \se^2(\hat\tau_\textsc{i}^\textup{adj}).
$$
These inequalities hold asymptotically if we replace the standard errors by the estimated ones.
\end{corollary}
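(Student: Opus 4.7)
The plan is to reduce all three inequalities to the basic fact that the OLS residual sum of squares is monotone non-increasing as the design matrix is enlarged, and that any non-OLS linear fit has residual sum of squares no smaller than the OLS fit on the same regressors.

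The key step is a representation identity. For any $\{r_i(1), r_i(0)\}_{i=1}^M$ satisfying $\sum_i r_i(z)=0$ for $z=0,1$, direct expansion starting from the definitions of $V\{\cdot\}$ and $V_\textup{c}\{\cdot\}$ gives
\[
V\{r_i(z)\} = \frac{1}{M}\sum_{i=1}^M\bigl\{\sqrt{(1-e)/e}\,r_i(1)+\sqrt{e/(1-e)}\,r_i(0)\bigr\}^2,
\]
while $V_\textup{c}\{r_i(z)\}=M^{-1}\sum_i r_i(1)^2/e + M^{-1}\sum_i r_i(0)^2/(1-e)$. When $r_i(z)$ is the OLS residual from regressing $\tilde{Y}_{i\cdot}(z)$ on $(1,c_i)$, linearity of OLS implies that the linear combination $\tilde{r}_i = \sqrt{(1-e)/e}\,r_i(1)+\sqrt{e/(1-e)}\,r_i(0)$ is itself the OLS residual of $\tilde{Y}^*_i = \sqrt{(1-e)/e}\,\tilde{Y}_{i\cdot}(1)+\sqrt{e/(1-e)}\,\tilde{Y}_{i\cdot}(0)$ on $(1,c_i)$. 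Hence both $V\{r_i(z)\}$ and $V_\textup{c}\{r_i(z)\}$ are residual sample variances from OLS fits (of $\tilde{Y}^*_i$ and of $\tilde{Y}_{i\cdot}(z)$ arm-by-arm, respectively) on $(1,c_i)$, both weakly monotone non-increasing as $c_i$ is enlarged.

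The first inequality $\se^2\{\hat\tau_{\textsc{t}}^\textup{adj}(n_i,\tilde{x}_{i\cdot})\}\le\se^2\{\hat\tau_{\textsc{t}}^\textup{adj}(n_i)\}$ is then immediate from enlarging the covariate vector from $n_i$ to $(n_i,\tilde{x}_{i\cdot})$. For the second inequality, observe that $\tilde{\varepsilon}_{i\cdot}(z)=\tilde{Y}_{i\cdot}(z)-(M\bar{Y}(z)/N)\,n_i$ is a particular (non-OLS) linear fit of $\tilde{Y}_{i\cdot}(z)$ on $(1,n_i)$, and $\sum_i \tilde{\varepsilon}_{i\cdot}(z)=0$. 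Applying the identity together with OLS minimality yields $V\{\tilde{\varepsilon}_{i\cdot}(z)\}\ge V\{r_i(z)\}$ where $r_i(z)$ is the OLS residual of $\tilde{Y}_{i\cdot}(z)$ on $(1,n_i)$, which gives $\se^2\{\hat\tau_{\textsc{t}}^\textup{adj}(n_i)\}\le\se^2(\hat\tau_{\textsc{i}})$.

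The third inequality is handled identically: $\tilde{\varepsilon}_{i\cdot}(z)-\tilde{x}_{i\cdot}^{\T}Q_\textsc{i}(z)$ is a particular linear fit of $\tilde{Y}_{i\cdot}(z)$ on $(1,n_i,\tilde{x}_{i\cdot})$ with cluster sum zero (using $\sum_i \tilde{x}_{i\cdot}=0$ after centering $x_{ij}$), so the OLS fit on the same regressors has smaller or equal residual sample variance. The estimated-standard-error statements then follow because Theorems \ref{thm::tau-i}--\ref{thm::tau-T-adj} give $M\cdot\hat{\se}^2 = V_\textup{c}\{r_i(z)\}+o_\mathbb{P}(1)$ for each of the four estimators, and the same term-by-term OLS-minimization argument bounds the $V_\textup{c}\{\cdot\}$ quantities in the same direction. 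The main subtlety is that $V\{r_i(z)\}$ mixes residuals from the two treatment arms, so its monotonicity under covariate addition is not immediate from standard OLS theory; the representation above as the sample variance of the single linear combination $\tilde{r}_i$ is precisely what restores it.
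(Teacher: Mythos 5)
Your proof is correct and follows essentially the same route as the paper: both reduce each inequality to monotonicity/minimality of OLS residual sums of squares, arm by arm for $V_\textup{c}\{\cdot\}$ and via the single linear combination $\sqrt{(1-e)/e}\,r_i(1)+\sqrt{e/(1-e)}\,r_i(0)$ for $V\{\cdot\}$. The paper leaves that linear-combination identity implicit (outsourcing it to the optimality results of Lin (2013) and Li and Ding (2017) that it cites), whereas you state and verify it explicitly --- which is precisely the step needed to handle the cross term $2r_i(1)r_i(0)$ that a naive arm-by-arm residual-sum-of-squares comparison would not control.
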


Inequality $\se^2\{\hat\tau_{\textsc{t}}^\textup{adj}(n_i,\tilde{x}_{i\cdot})\}
\le \se^2\{\hat\tau_{\textsc{t}}^\textup{adj}(n_i)\}$ follows because \cite{lin2013}'s regression adjustment will always improve the asymptotic variance. Inequality $\se^2\{\hat\tau_{\textsc{t}}^\textup{adj}(n_i)\}
\le \se^2(\hat\tau_{\textsc{i}})$ follows from the fact that the residual sum of squares from the OLS fit of $ {\sum_{j=1}^{n_{i}}Y_{ij}(z)}M/N-\bar Y (z)$ on $n_iM/N-1$ is smaller than $\sum_{i=1}^M\tilde{\varepsilon}_{i\cdot}(z)^2$.
So adjusting for $n_i$ in the regression based on cluster totals improves the simple difference-in-means, and adjusting for cluster totals of additional covariates leads to more improvement asymptotically.
 \citet{Middleton2015cluster} discussed the role of $n_i$ and $ \tilde{x}_{i\cdot}$ in improving efficiency, and our estimator has the smallest asymptotic variance within the class of linearly adjusted estimators which follows from the optimality result in completely randomized experiments \citep{li2017general}.

Inequality $\se^2\{\hat\tau_{\textsc{t}}^\textup{adj}(n_i,\tilde{x}_{i\cdot})\}
\le \se^2(\hat\tau_\textsc{i}^\textup{adj})$ follows from the fact that $\sum_{i=1}^M \{\tilde{r}_{i\cdot}(z)\}^2
=\sum_{i=1}^M \{ \tilde{\varepsilon}_{i\cdot}(z)-
\tilde{x}_{i\cdot}^\T Q_\i(z) \}^2
$ in Theorem \ref{thm2} is larger than the residual sum of squares from the OLS fit of $
\tilde{\varepsilon}_{i\cdot}(z)$ on $\tilde{x}_{i\cdot}$, which is larger than
that from the OLS fit of ${\sum_{j=1}^{n_{i}}Y_{ij}(z)}M/N-\bar Y (z)$ on $(n_iM/N-1,\tilde{x}_{i\cdot})$.
So adjusting for $(n_i, \tilde{x}_{i\cdot})$ in the regression based on cluster totals improves the regression adjustment of $x_{ij}$ based on the individual data. It demonstrates that regressions based on cluster totals dominate regressions based on individual data asymptotically. With a special choice of covariates $x_{ij} = \bar x_{i\cdot}$ in individual-level regression, Corollary \ref{corollary::improvingIbyT}  implies $\se^2\{\hat\tau_{\textsc{t}}^\textup{adj}(n_i,\tilde{x}_{i\cdot})\}
\le \se^2\{\hat\tau_\textsc{i}^\textup{adj}(\bar x_{i\cdot})\}$.
One may wonder whether regressions based on cluster total can still dominate regressions based on individual data
if the latter already includes $(n_i,x_{ij})$ as covariates.
The answer is yes because
$
\se^2\{\hat\tau_{\textsc{t}}^\textup{adj}(n_i, n_i^2, \tilde{x}_{i\cdot})\}
\le \se^2\{  \hat\tau_\textsc{i}^\textup{adj} ( n_i, x_{ij} ) \}
$
follows from the second inequality in Corollary \ref{corollary::improvingIbyT}.

However, we must address a technical issue related to the regression estimators adjusted for $n_i$. When $c_i$ includes $n_i$, Assumption \ref{a5} imposes moment conditions on $n_i$, but $n_i$ itself can diverge. Fortunately,
by the property of OLS, $\hat\tau_{\textsc{t}}^\textup{adj}(n_i) = \hat\tau_{\textsc{t}}^\textup{adj}(\tilde{\omega}_i)$ and $\hat\tau_{\textsc{t}}^\textup{adj}(n_i,\tilde{x}_{i\cdot}) = \hat\tau_{\textsc{t}}^\textup{adj}(\tilde{\omega}_i,\tilde{x}_{i\cdot})$, which remain the same up to a scale transformation of the covariates. Therefore, Assumption \ref{a5} imposes moment conditions on $\tilde{\omega}_i$, for example, $M^{-1} \sum_{i=1}^M \tilde{\omega}_i(\tilde{\omega}_i-1)\bar{Y}_{i\cdot}(z)$ converges to a finite vector. This technical issue has important implications in practice. It requires at least two moments of
the $\tilde{\omega}_i$'s, which is stronger than the conditions required for the estimators based on the individual-level regressions. Therefore, the asymptotic efficiency of the estimators based on cluster totals comes at a price of the robustness with respect to the heavy-tailedness of the cluster sizes. In some extreme cases, the cluster sizes vary with some outliers, which can violate the regularity conditions for the estimators based on cluster totals. We observe this phenomenon in simulation studies in Section \ref{sec::Extremeclusters1}.

\section{A general weighted estimand and its estimators}\label{sec::general-estimand}

\subsection{Choice of the estimand}

In addition to the regression estimators based on individual data and cluster totals, it is also common to use regression estimators based on cluster averages \citep[e.g.,][]{donner2000design, green2008analysis, imai2009essential}. However, \citet{middleton2008bias} argued against the use of them due to their biases for estimating $\tau$ even with large $M$. In contrast, our view differs from \citet{middleton2008bias}. First, we show that we can easily remove the biases by using WLS with weights proportional to  cluster sizes. Second, we argue that regression estimators based on cluster averages typically target different estimands beyond $\tau$, which are also of interest in practice. Third, we demonstrate that the heteroskedasticity-robust standard errors from these regressions are conservative estimators for the asymptotic standard errors.

To unify the discussion, we introduce a  weighted estimand. For cluster $i$, the average treatment effect is $\tau_i=\bar Y_{i\cdot}(1)-\bar Y_{i\cdot}(0)$, and the average observed outcome is $\bar Y_{i\cdot}=Z_{i}\bar Y_{i\cdot}(1)+(1-Z_{i})\bar Y_{i\cdot}(0)$. For weights $(\pi_i)_{i=1}^M$ such that $\pi_i >  0$ and $\sumM \pi_i = 1$, define
$$
\tau_\pi = \sum_{i=1}^M\pi_i \tau_i = \sum_{i=1}^M\pi_i \{ \bar Y_{i\cdot}(1)-\bar Y_{i\cdot}(0)\} = \bar{Y}_\pi(1) - \bar{Y}_\pi(0),
$$
as a weighted average treatment effect where $\bar Y_\pi(z)=\sum_{i=1}^M \pi_{i} \bar{Y}_{i\cdot}(z) $ for $z=0,1$. If $\pi_i = \omega_i$, then $\tau_\pi$ recovers $\tau$; if $\pi_i=1/M$, then $\tau_\pi = \bar{\tau} = M^{-1} \sum_{i=1}^M \tau_i $ imposes equal weights on the average effects within clusters regardless of their sizes. The estimand $\tau$ views $N$ units as the population, and the estimand $\bar{\tau}$ views $M$ clusters as the population. Both can be of policy relevance. For example, \citet[][chapter 1.5]{donner2000design} discussed a smoking cessation trial with randomization over communities, in which $\tau$ is of interest if the aim is to influence individuals to cease smoking, and $\bar{\tau}$ is of interest if the aim is to lower the rate of smoking in a community. In some applications, the units within clusters are sampled from larger clusters, and we can use $\pi_i$ to reflect the sampling weights to recover the effect on a larger population. 
In general, $\tau_\pi $ includes a class of estimands that are of interest in different scenarios.

\subsection{Four estimators}

Because $\tau_\pi = \sum_{i=1}^M\pi_i \tau_i$ is a weighted estimand, it is natural to use weighted regressions to estimate it. We can use $\hat\tau_{\textsc{a}\pi}$, the coefficient of $Z_i$ in the WLS fit of $\bar{Y}_{i\cdot}$ on $(1,Z_i)$ with weights $\pi_i$, and report $\hat{\se}_\textsc{hw}(\hat\tau_{\textsc{a}\pi})$. With cluster-level covariates $c_i$, we can first center them at $\bar c_{\pi}=\sum_{i=1}^M\pi_ic_i=0$, then obtain $\hat\tau_{\textsc{a}\pi}^\textup{adj}$, the coefficient of $Z_i$ in the WLS fit of $\bar{Y}_{i\cdot}$ on $\{ 1,Z_i,c_i-\bar c_\pi,Z_i(c_i-\bar c_\pi) \} $ with weights $\pi_i$, and finally report $\hat{\se}_\textsc{hw}(\hat\tau_{\textsc{a}\pi}^\textup{adj})$.

On the other hand,
$$
\tau_\pi
=  M^{-1}\sum_{i=1}^M\{   M\pi_i \bar{Y}_{i\cdot}(1) - M\pi_i \bar{Y}_{i\cdot}(0) \}
$$
equals the average treatment effect on the weighted outcome $M\pi_i \bar Y_{i\cdot}$ on the cluster level. This representation motivates to estimate $\tau_\pi$ using OLS fit based on the weighted outcome. \citet{basse2018analyzing} suggested this approach in a related setting. We can use $\hat\tau_{\pi\textsc{a}}$, the coefficient of $Z_i$ in the OLS fit of $M\pi_i\bar{Y}_{i\cdot}$ on $(1,Z_i)$, and report $\hat{\se}_\textsc{hw}(\hat\tau_{\pi\textsc{a}})$. With cluster-level covariates $c_i$, we can first center them at $\bar c = M^{-1} \sum_{i=1}^M\ c_i = 0$, then obtain $\hat\tau_{\pi\textsc{a}}^\textup{adj}$, the coefficient of $Z_i$ in the OLS fit of $M\pi_i\bar{Y}_{i\cdot}$ on $(1,Z_i,c_i,Z_ic_i)$, and finally report $\hat{\se}_\textsc{hw}(\hat\tau_{\pi\textsc{a}}^\textup{adj})$.

Analogous to Section  \ref{sec::four-estimators-for-tau}, we have introduced four estimators for $\tau_\pi$, all associated with heteroskedasticity-robust standard errors. The orders of ``A'' and ``$\pi$'' in the subscript reflect the computational processes, for example, to compute $\hat\tau_{\textsc{a}\pi}^\textup{adj}$, we first compute the cluster averages and then run WLS, while to compute $\hat\tau_{\pi\textsc{a}}^\textup{adj}$, we first weight the cluster averages and then run OLS. Moreover, we center covariates $c_i$ at different means in these two types of estimators. In the special case with $\pi_i=1/M$, they are identical with $\hat\tau_{\textsc{a}\pi}=\hat\tau_{\pi\textsc{a}}$ and $\hat\tau_{\textsc{a}\pi}^\textup{adj}=
\hat\tau_{\pi\textsc{a}}^\textup{adj}$, so we use $\hat\tau_{\textsc{a}}$ and $\hat\tau_{\textsc{a}}^\textup{adj}$ to denote both, omitting $\pi$.

The remaining parts of this section will derive the asymptotic properties of the above four estimators and compare their efficiencies. These asymptotic analyses will help to unify Section  \ref{sec::fourestimatorsfortau} and Section  \ref{sec::general-estimand}.

\subsection{Weighted regression estimators based on cluster averages}

\begin{assumption}
\label{assume::wls-cluster-average} The weights satisfy $\max_{1\le i \le M}\pi_{i}=o(1)$, and the potential outcomes satisfy
$\sum_{i=1}^M \pi_i\bar{Y}_{i\cdot}(z)^4=O(1)$ for $z=0,1$.
\end{assumption}

\begin{theorem}\label{thm::wls-without-covariates}
Let $r_i(z) =M \pi_i\{\bar{Y}_{i\cdot}(z)-\bar Y_\pi(z)\} $ and $\se^2(\hat\tau_{\textsc{a}\pi}) =  V \{ r_i(z)\} /M$.
Under Assumptions \ref{assume::4} and \ref{assume::wls-cluster-average}, $\hat\tau_{\textsc{a}\pi}=\tau_{\pi}+o_\mathbb P(1)$; if further $\max_{1\le i \le M}\pi_{i}=o(M^{-2/3})$ and $V \{r_i(z)\} \not \rightarrow 0$, then $(\hat\tau_{\textsc{a}\pi}-\tau_\pi)/ \se(\hat\tau_{\textsc{a}\pi}) \rightsquigarrow \mathcal{N} (0,1) $ and
 $M\times \hat{\se}^2_\textsc{hw}(\hat\tau_{\textsc{a}\pi})=V_\textup{c}\{r_i(z)\}+o_\mathbb P(1)$.
\end{theorem}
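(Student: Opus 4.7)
The plan is to reduce the WLS regression on cluster averages to a familiar difference-in-means structure on transformed cluster-level outcomes. Using the explicit form of the WLS coefficient in a binary-treatment regression,
\[
\hat\tau_{\textsc{a}\pi} = \frac{\sum_i \pi_i Z_i \bar Y_{i\cdot}(1)}{\sum_i \pi_i Z_i} - \frac{\sum_i \pi_i (1-Z_i) \bar Y_{i\cdot}(0)}{\sum_i \pi_i(1-Z_i)}.
\]
Under the random-permutation design, $E(\sum_i \pi_i Z_i) = e$ with variance at most $e(1-e)\max_i \pi_i \cdot \sum_i \pi_i = e(1-e)\max_i \pi_i = o(1)$, so Chebyshev delivers $\sum_i \pi_i Z_i \to e$ in probability and the control analog. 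Combining this with convergence of the numerators (a second-moment calculation enabled by $\sum_i \pi_i \bar Y_{i\cdot}(z)^4 = O(1)$ in Assumption \ref{assume::wls-cluster-average}), the ratios converge to $\bar Y_\pi(1)$ and $\bar Y_\pi(0)$, yielding consistency.

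For the asymptotic Normality I would linearize around $\tau_\pi$. Subtracting $\tau_\pi = \bar Y_\pi(1) - \bar Y_\pi(0)$ and using $r_i(z) = M\pi_i\{\bar Y_{i\cdot}(z) - \bar Y_\pi(z)\}$, which satisfies $\sum_i r_i(z) = 0$, algebra gives
\[
\hat\tau_{\textsc{a}\pi} - \tau_\pi = \frac{M^{-1}\sum_i Z_i r_i(1)}{\sum_i \pi_i Z_i} - \frac{M^{-1}\sum_i (1-Z_i) r_i(0)}{\sum_i \pi_i(1-Z_i)}.
\]
Replacing each denominator by its limit and controlling the remainder gives
\[
\hat\tau_{\textsc{a}\pi} - \tau_\pi = \frac{1}{eM}\sum_i Z_i r_i(1) - \frac{1}{(1-e)M}\sum_i (1-Z_i) r_i(0) + o_\mathbb{P}(M^{-1/2}).
\]
The leading term is exactly the Neyman difference-in-means for the cluster-level potential outcomes $\{r_i(1), r_i(0)\}$ in a completely randomized experiment on $M$ clusters, to which I can apply the finite-population central limit theorem as in \citet{li2017general}. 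Its asymptotic variance under Neyman's formula is $V\{r_i(z)\}/M$, matching $\se^2(\hat\tau_{\textsc{a}\pi})$. The strengthened condition $\max_i \pi_i = o(M^{-2/3})$ together with the fourth-moment condition controls $\max_i r_i(z)^2$, supplying the Lindeberg-type condition for the CLT and making the ratio remainder $o_\mathbb{P}(M^{-1/2})$.

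For the robust variance I would write the sandwich formula for the WLS fit of $\bar Y_{i\cdot}$ on $(1,Z_i)$ with weights $\pi_i$. Because the design is a binary indicator, the bread--meat--bread product collapses to
\[
M\hat{\se}^2_\textsc{hw}(\hat\tau_{\textsc{a}\pi}) = \frac{M\sum_i \pi_i^2 Z_i \hat\varepsilon_i^2}{(\sum_i \pi_i Z_i)^2} + \frac{M\sum_i \pi_i^2(1-Z_i)\hat\varepsilon_i^2}{(\sum_i \pi_i(1-Z_i))^2},
\]
where $\hat\varepsilon_i$ is the residual from the within-arm weighted mean. Using the consistency of the within-arm weighted mean for $\bar Y_\pi(z)$, one has $M\pi_i^2\hat\varepsilon_i^2 = r_i(z)^2/M + o_\mathbb{P}$-type remainders; replacing the denominators by $e^2$ and $(1-e)^2$ yields $(eM)^{-1}\sum_i Z_i r_i(1)^2/e + \{(1-e)M\}^{-1}\sum_i(1-Z_i)r_i(0)^2/(1-e) + o_\mathbb{P}(1)$. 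A first-moment calculation gives expectation $M^{-1}\{\sum_i r_i(1)^2/e + \sum_i r_i(0)^2/(1-e)\} = V_\textup{c}\{r_i(z)\}$, and a variance bound using the fourth-moment condition in Assumption \ref{assume::wls-cluster-average} shows the stochastic deviation is $o_\mathbb{P}(1)$.

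The main obstacle is coordinating the moment conditions: $\max_i \pi_i = o(M^{-2/3})$ and $\sum_i \pi_i \bar Y_{i\cdot}(z)^4 = O(1)$ must be combined to handle the potentially highly unbalanced pseudo-outcomes $r_i(z)$, simultaneously (i) verifying the Lindeberg-type condition for the finite-population CLT on $r_i(z)$, (ii) controlling the ratio-linearization remainder at rate $o_\mathbb{P}(M^{-1/2})$, and (iii) bounding the variance of the sandwich estimator. A secondary subtlety is that the HC sandwich for WLS is not identical to the HC sandwich for an OLS fit on rescaled variables, so writing out the explicit two-by-two form above is essential. Once the moment bookkeeping is done, each piece reduces to a standard argument for difference-in-means in a completely randomized experiment on $M$ clusters.
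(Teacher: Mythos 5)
Your proposal is correct, but it takes a genuinely different route from the paper. The paper proves this theorem indirectly: it first establishes Proposition \ref{t14} (WLS on cluster averages with weights $\omega_i=n_i/N$ coincides with the individual-level OLS estimator), then, for rational weights $\pi_i=m_i/N_\pi$, constructs an artificial finite population in which cluster $i$ contains $m_i$ identical units with outcome $\bar Y_{i\cdot}(z)$, so that $\pi_i$ becomes the cluster proportion and Theorem \ref{thm::tau-i} applies after checking that Assumptions \ref{assume::4} and \ref{assume::1} translate into Assumption \ref{assume::wls-cluster-average}; general real weights are then handled by a rational-approximation and continuity argument. You instead attack the WLS coefficient directly as a ratio of weighted arm means, linearize around $\tau_\pi$ to exhibit the leading term as the Neyman difference-in-means of the pseudo-outcomes $r_i(z)$, and invoke the finite-population CLT (the paper's Lemma \ref{DIMA}); your moment bookkeeping is right, e.g.\ $M^{-1}\sum_i r_i(z)^4 \le M^3(\max_i\pi_i)^3\sum_i\pi_i\{\bar Y_{i\cdot}(z)-\bar Y_\pi(z)\}^4 = o(M)$ under $\max_i\pi_i=o(M^{-2/3})$, and the ratio remainder is $O_\mathbb{P}(\max_i\pi_i)=o_\mathbb{P}(M^{-1/2})$. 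Your route is more elementary and handles arbitrary real weights without the approximation step; the paper's route buys wholesale reuse of Theorems \ref{thm::tau-i}--\ref{thm2} and Proposition \ref{t14} (which it needs anyway for the covariate-adjusted Theorem \ref{thm::wls-with-covariates}) at the cost of the artificial-population and rational-approximation machinery. One small correction: your "secondary subtlety" is a non-issue --- the heteroskedasticity-robust sandwich from WLS is algebraically identical to that from OLS on the variables rescaled by $\pi_i^{1/2}$ (the residuals pick up exactly the factor $\pi_i^{1/2}$ needed to reconcile the two meat matrices), as the paper notes at the start of Section \ref{sec::weightedestimands-reg}; this does not affect your argument since you derive the explicit two-arm form correctly anyway.
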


If $\pi_i=1/M$, then $\hat\tau_{\textsc{a}\pi} = \hat\tau_{\textsc{a}}$ reduces to the regression estimator based on cluster averages. \citet{middleton2008bias} showed that it is inconsistent for estimating $\tau$, but Theorem \ref{thm::wls-without-covariates} shows that it is consistent for estimating $\bar{\tau} = M^{-1} \sumM \tau_i $. We have $\bar{\tau}  = \tau+o(1)$ if and only if the finite-sample covariance between $\tilde{\omega}_i$ and $\tau_i$ has order $o(1)$. There are at least two special cases in which $\bar{\tau}  = \tau+o(1)$ holds.
First, if  $n_i=N/M$, then $\bar{\tau} = \tau$ and $\hat\tau_{\textsc{a}} = \hat\tau_{\textsc{t}}.$ Second, if $\max_{1\le i \le M} |\tau_i-\tau|=o(1)$, then $\hat\tau_{\textsc{a}} =\tau+o_\mathbb P(1)$, but this condition rules out treatment effect heterogeneity across clusters.

Another important special case is $\pi_i=\omega_i = n_i/N$, in which we can consistently estimate $\tau$ by $\hat\tau_{\textsc{a}\omega}$. In fact, in this case, $\hat\tau_{\textsc{a}\omega} = \hat\tau_{\textsc{i}}$. We will revisit this case in the next section.

\begin{assumption}\label{a66}
For $z=0,1$, $  \sum _ { i = 1 } ^ { M } \pi_{i}(c_i-\bar c_\pi) \bar{Y}_{i\cdot}(z)$ converges to a finite vector, $  \sum _ { i = 1 } ^ { M }\pi_{i} (c_i-\bar c_\pi) (c_i-\bar c_\pi)^{\T}$ converges to a finite and invertible matrix, and $\sum_{i=1}^M \pi_i ||c_i||_{\infty}^4=O(1)$.
\end{assumption}

Let $Q_{\textsc{a}\pi}(z)$ be the coefficient of $c_i$ in the WLS fit of $\bar{Y}_{i\cdot}(z)$ on $(1,c_i)$ with weights $\pi_i$ $(z=0,1)$. It depends on all potential outcomes and simplifies the discussion below.

\begin{theorem}\label{thm::wls-with-covariates}
Define $r_i(z)=M\pi_i \{\bar{Y}_{i\cdot}(z)-\bar Y_\pi(z)-(c_i-\bar c_\pi) ^\T Q_{\textsc{a}\pi}(z)\}$   and $\se^2(\hat\tau_{\textsc{a}\pi}^\textup{adj})= V  \{r_i(z)\}/M.$
Under Assumptions \ref{assume::4}, \ref{assume::wls-cluster-average} and \ref{a66}, $\hat\tau_{\textsc{a}\pi}^\textup{adj}=\tau_{\pi}+o_\mathbb P(1)$; if further $\max_{1\le i \le M}\pi_{i}=o(M^{-2/3})$ and $  V  \{r_i(z)\} \not \rightarrow 0$, then $(\hat\tau_{\textsc{a}\pi}^\textup{adj}-\tau_\pi)/\se(\hat\tau_{\textsc{a}\pi}^\textup{adj}) \rightsquigarrow \mathcal{N} (0,1) $
 and
 $M\times \hat{\se}^2_\textsc{hw}(\hat\tau_{\textsc{a}\pi}^\textup{adj})=V_\textup{c}\{r_i(z)\}+o_\mathbb P(1)$.
\end{theorem}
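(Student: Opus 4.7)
Since the WLS specification $\{1,Z_i,c_i-\bar c_\pi, Z_i(c_i-\bar c_\pi)\}$ is saturated in $Z_i$, the estimator admits the separate-regressions form $\hat\tau_{\textsc{a}\pi}^{\textup{adj}}=\hat\alpha_1-\hat\alpha_0$, where $\hat\alpha_z$ is the intercept from the WLS fit of $\bar Y_{i\cdot}$ on $(1,c_i-\bar c_\pi)$ with weights $\pi_i$, restricted to clusters with $Z_i=z$. The plan is to linearize each $\hat\alpha_z$ around the population-level WLS coefficient $Q_{\textsc{a}\pi}(z)$, show that $\hat\tau_{\textsc{a}\pi}^{\textup{adj}}-\tau_\pi$ reduces to an ordinary cluster-level difference-in-means on the pseudo-outcomes $r_i(z)$, and then apply the finite-population CLT already used for Theorem~\ref{thm::tau-T}, exactly as $\hat\tau_{\textsc{t}}^{\textup{adj}}$ was handled in Theorem~\ref{thm::tau-T-adj}.

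The first technical step is a finite-population LLN for the group-specific slope, $\hat\beta_z=Q_{\textsc{a}\pi}(z)+o_\mathbb{P}(1)$. The sample weighted Gram $\sum_{Z_i=z}\pi_i(c_i-\bar c_\pi)(c_i-\bar c_\pi)^\T$ and cross-product $\sum_{Z_i=z}\pi_i(c_i-\bar c_\pi)\bar Y_{i\cdot}(z)$ concentrate around $e_z$ times their population analogues (with $e_1=e$, $e_0=1-e$); the requisite variance bounds follow from the fourth-moment conditions in Assumptions~\ref{assume::wls-cluster-average} and~\ref{a66} combined with $\max_i \pi_i=o(1)$, because $\sum_i \pi_i^2\le \max_i\pi_i$. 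Writing $\hat\alpha_z=\bar Y_{\pi,z}-\hat\beta_z^\T \bar c_{\pi,z}$ for the group-$z$ weighted averages $\bar Y_{\pi,z}$ and $\bar c_{\pi,z}$, using the centering identity $\sum_i\pi_i(c_i-\bar c_\pi)=0$, and letting $S_z=\sum_{Z_i=z}\pi_i$, a direct algebraic manipulation gives the exact identity
\[
\hat\alpha_z-\bar Y_\pi(z)=\frac{1}{S_z M}\sum_{Z_i=z}r_i(z)-(\hat\beta_z-Q_{\textsc{a}\pi}(z))^\T \bar c_{\pi,z}.
\]
Both $\hat\beta_z-Q_{\textsc{a}\pi}(z)$ and $\bar c_{\pi,z}$ are $O_\mathbb{P}(M^{-1/2})$, so the last term is $o_\mathbb{P}(M^{-1/2})$; combined with $S_z=e_z+O_\mathbb{P}(\sqrt{\max_i\pi_i})$ and the fact that $\sum_i r_i(z)=0$ from the WLS normal equations, subtracting across $z$ yields
\[
\hat\tau_{\textsc{a}\pi}^{\textup{adj}}-\tau_\pi=\frac{1}{eM}\sum_{Z_i=1}r_i(1)-\frac{1}{(1-e)M}\sum_{Z_i=0}r_i(0)+o_\mathbb{P}(M^{-1/2}).
\]
This is the classical Neyman difference-in-means on $M$ clusters viewed as a completely randomized experiment with potential outcomes $\{r_i(1),r_i(0)\}$. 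Consistency is immediate, and the asymptotic Normality with variance $V\{r_i(z)\}/M$ follows from the finite-population CLT used in Theorem~\ref{thm::tau-T}: the Lindeberg-type ratio $\max_i r_i(z)^2/\sum_i r_i(z)^2$ vanishes because $r_i(z)=M\pi_i\{\cdot\}$ and $\max_i\pi_i=o(M^{-2/3})$ together with the fourth-moment conditions control both the numerator and denominator.

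For the standard-error claim, the WLS HW sandwich coincides with the HC0 sandwich of the OLS fit on the $\sqrt{\pi_i}$-rescaled data. After Frisch--Waugh--Lovell partialling, $M\hat{\se}^2_{\textsc{hw}}(\hat\tau_{\textsc{a}\pi}^{\textup{adj}})$ reduces to a weighted sum of squared WLS residuals, with each squared residual scaled by $M\pi_i^2$; writing the sample residuals as $\hat r_i/(M\pi_i)$ with $\hat r_i\to r_i(z)$ (in the sense of weighted fourth-moment convergence from Step 1), the expression collapses to $M^{-1}\sum_{Z_i=1}r_i(1)^2/e+M^{-1}\sum_{Z_i=0}r_i(0)^2/(1-e)+o_\mathbb{P}(1)$, which converges to $V_\textup{c}\{r_i(z)\}$ and therefore provides the conservative approximation. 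The main obstacle in the entire argument is Step 1 together with the $\pi_i$-weighted bookkeeping in the linearization: unlike the unweighted cluster-total case, both the numerator $\sum_{Z_i=z}\pi_i(\cdot)$ and the denominator $S_z$ are random, so one has to track these fluctuations simultaneously; once this is absorbed into a single $o_\mathbb{P}(M^{-1/2})$ term, the CLT and HW-conservativeness computations are weighted analogues of the arguments used for Theorems~\ref{thm::tau-T} and~\ref{thm::tau-T-adj}.
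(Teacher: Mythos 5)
Your argument is correct in substance, but it takes a genuinely different route from the paper's. The paper does not analyze the weighted cluster-average regression directly: it first proves Proposition \ref{t14}, an exact algebraic identity showing that when $\pi_i=\omega_i$ the WLS fit on cluster averages coincides (in point estimate and in HW/LZ sandwich) with the individual-level regression; it then handles general rational weights $\pi_i=m_i/N_\pi$ by constructing an artificial finite population with integer cluster sizes $m_i$, so that Theorem \ref{thm2} applies verbatim after checking its regularity conditions, and finally extends to irrational $\pi_i$ by rational approximation and continuity. Your proof instead carries out the weighted analogue of the Theorem \ref{thm2} argument from scratch: the saturated-in-$Z_i$ decomposition, the exact identity $\hat\alpha_z-\bar Y_\pi(z)=(S_zM)^{-1}\sum_{Z_i=z}r_i(z)-\{\hat\beta_z-Q_{\textsc{a}\pi}(z)\}^\T\bar c_{\pi,z}$ (which I verified is correct), the reduction to a Neyman difference-in-means on $r_i(z)$, and the finite-population CLT with the fourth-moment check $M^{-1}\sum_i r_i(z)^4\le M^3(\max_i\pi_i)^3\sum_i\pi_i\{\cdot\}^4=o(M)$, which is exactly where $\max_i\pi_i=o(M^{-2/3})$ enters. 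Your route avoids the rational-approximation technicality entirely and is more self-contained; the paper's route buys reuse of the already-proved individual-level machinery (in particular the block-matrix analysis of the sandwich estimator in Section \ref{sec::2.2}) at the cost of the somewhat artificial population-construction step.

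Two small imprecisions, neither fatal. First, the intermediate rates $\hat\beta_z-Q_{\textsc{a}\pi}(z)=O_\mathbb{P}(M^{-1/2})$ and $\bar c_{\pi,z}=O_\mathbb{P}(M^{-1/2})$ are not what Chebyshev delivers; the correct rate is $O_\mathbb{P}(\sqrt{\max_i\pi_i})$, which is only $o_\mathbb{P}(M^{-1/3})$ under the stated condition. The conclusion survives because the product is then $O_\mathbb{P}(\max_i\pi_i)=o_\mathbb{P}(M^{-2/3})=o_\mathbb{P}(M^{-1/2})$, mirroring the paper's treatment of $\bar x_\TG^\T\{\hat Q_\i(1)-Q_\i(1)\}$ in Theorem \ref{thm2}. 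Second, in the sandwich computation the treated-group term should carry $1/e^2$ (with $\sum_{Z_i=1}\hat r_i^2$ concentrating at $e\sum_i r_i(1)^2$, yielding $e^{-1}M^{-1}\sum_i r_i(1)^2$), not $1/e$ as written; as stated your intermediate expression converges to $M^{-1}\sum_i\{r_i(1)^2+r_i(0)^2\}$ rather than $V_\textup{c}\{r_i(z)\}$. The claimed limit is right once the exponent is fixed.
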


The form of $r_i(z)$ in Theorem \ref{thm::wls-with-covariates} is $M\pi_i \{\bar{Y}_{i\cdot}(z)-\bar Y_\pi(z)\} - M\pi_i   (c_i-\bar c_\pi) ^\T Q$, which suggests that an optimal choice of $Q$ is the coefficient of  $\pi_i(c_i-\bar c_\pi)$ in the OLS fit of $\pi_i\{\bar{Y}_{i\cdot}(z)-\bar Y_\pi(z)\}$ on $\pi_i(c_i-\bar c_\pi)$, rather than $Q_{\textsc{a}\pi}(z)$ from the WLS based on cluster averages.
Similar to the discussion of Theorem \ref{thm2}, $\hat\tau_{\textsc{a}\pi}^\textup{adj}$ may not have better efficiency compared to $\hat\tau_{\textsc{a}\pi}$. We will give an example in Section \ref{sec::6.2}.

\subsection{Regression estimators based on weighted cluster averages}

 In the estimators $\hat\tau_{\pi\textsc{a}}$ and $\hat\tau_{\pi\textsc{a}}^\textup{adj}$, we view $M\pi_i\bar{Y}_{i\cdot}$ as a new outcome and apply \cite{lin2013}'s results for completely randomized experiments. Again the following theorems extend \citet{lin2013} and \citet{li2017general}, which include Theorems \ref{thm::tau-T} and \ref{thm::tau-T-adj} as special cases with $\pi_i=\omega_i$.

\begin{theorem}\label{thm8}
Let $r_i(z) = M \pi_i\bar{Y}_{i\cdot}(z)-\bar Y_\pi(z)$ and $\se^2(\hat\tau_{\pi\textsc{a}})= V \{r_i(z) \}/M.$
Under Assumption \ref{assume::4}, if $\sum_{i=1}^M \pi_i^2\bar{Y}_{i\cdot}(z)^2=o(1)$, then $\hat\tau_{\pi\textsc{a}}=\tau_{\pi}+o_\mathbb P(1)$; if further $\sum_{i=1}^M M^2\pi_i^4\bar{Y}_{i\cdot}(z)^4=o(1)$ and $  V \{ r_i(z)  \} \not \rightarrow 0$, then $(\hat\tau_{\pi\textsc{a}}-\tau_\pi)/\se(\hat\tau_{\pi\textsc{a}})\rightsquigarrow \mathcal{N} (0,1) $ and
$M\times \hat{\se}^2_\textsc{hw}(\hat\tau_{\pi\textsc{a}})=V_\textup{c}\{ r_i(z)\}+o_\mathbb P(1)$.
\end{theorem}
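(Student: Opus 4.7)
The key observation is that $\hat\tau_{\pi\textsc{a}}$ is algebraically the simple difference-in-means of the cluster-level surrogate outcomes $W_i\equiv M\pi_i\bar Y_{i\cdot}$ under the completely randomized experiment on the $M$ clusters, with surrogate potential outcomes $W_i(z)=M\pi_i\bar Y_{i\cdot}(z)$. Because $M^{-1}\sum_i W_i(z)=\bar Y_\pi(z)$, the target $\tau_\pi$ is exactly the finite-population ATE on the $W_i(z)$, and $r_i(z)=W_i(z)-\bar Y_\pi(z)$ is its centered version satisfying $M^{-1}\sum_i r_i(z)=0$. The plan is to reduce all three conclusions to standard finite-population CRE inference on $\{W_i(0),W_i(1)\}_{i=1}^M$, exactly as was done for Theorems \ref{thm::tau-T} and \ref{thm::wls-without-covariates}, and then translate the hypotheses on $(\pi_i,\bar Y_{i\cdot})$ into the CRE-level moment conditions on $W_i(z)$.

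\textbf{Consistency, CLT, and standard error.} Using $\var(Z_i)=e(1-e)$ and $\text{cov}(Z_i,Z_j)=-e(1-e)/(M-1)$, a direct second-moment calculation yields $\var(\hat\tau_{\pi\textsc{a}})=V\{r_i(z)\}/M$, which is bounded by $\{e(1-e)\}^{-1}\sum_i\pi_i^2\{\bar Y_{i\cdot}(1)^2+\bar Y_{i\cdot}(0)^2\}=o(1)$ under the first hypothesis, so Chebyshev yields $\hat\tau_{\pi\textsc{a}}=\tau_\pi+o_\mathbb{P}(1)$. For the CLT I would invoke the finite-population CLT of \citet{li2017general} applied to the $W_i(z)$, whose Lindeberg-type condition reduces by Cauchy--Schwarz to
\[
\sum_{i=1}^M r_i(z)^4\Big/\Bigl\{\sum_{i=1}^M r_i(z)^2\Bigr\}^{2}\to 0.
\]
The numerator is at most a constant times $\sum_i W_i(z)^4+O(M)=M^2\sum_i M^2\pi_i^4\bar Y_{i\cdot}(z)^4+O(M)=o(M^2)$ by the second hypothesis, while the denominator is at least $cM^2$ for some $c>0$ because $V\{r_i(z)\}\not\to 0$. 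For the standard error, the two-sample OLS calculation gives
\[
M\hat{\se}_\textsc{hw}^2(\hat\tau_{\pi\textsc{a}})=\frac{M}{M_\TG^2}\sum_{Z_i=1}(W_i-\bar W_\TG)^2+\frac{M}{M_\CG^2}\sum_{Z_i=0}(W_i-\bar W_\CG)^2.
\]
Since $M_\TG=eM$ and $M_\CG=(1-e)M$ are exact in the CRE, it suffices to show $M_z^{-1}\sum_{Z_i=z}(W_i-\bar W_z)^2=M^{-1}\sum_i r_i(z)^2+o_\mathbb{P}(1)$, which follows by Chebyshev applied separately to the within-group sample first and second moments of $W_i$: the variance of the latter under the CRE is bounded by $\{eM(M-1)\}^{-1}\sum_i W_i(z)^4=o(1)$ by the second hypothesis, and combined with the already-established consistency of $\bar W_z$ this yields $M\hat{\se}_\textsc{hw}^2(\hat\tau_{\pi\textsc{a}})=V_\textup{c}\{r_i(z)\}+o_\mathbb{P}(1)$.

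\textbf{Main obstacle.} Because $W_i(z)=M\pi_i\bar Y_{i\cdot}(z)$ carries an explicit factor of $M$, it can diverge with the sample size, and routine weak-law arguments for $O(1)$ surrogates do not apply. The second hypothesis $\sum_i M^2\pi_i^4\bar Y_{i\cdot}(z)^4=o(1)$ is calibrated precisely to this scale inflation: it simultaneously supplies the $o(M^2)$ bound on $\sum_i W_i(z)^4$ needed to verify Lindeberg for the CLT and the variance control needed for Chebyshev-based consistency of the sample second moments in the standard-error step, so the same moment input drives both the CLT and the validity of $\hat{\se}_\textsc{hw}$.
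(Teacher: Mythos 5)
Your proposal is correct and follows essentially the same route as the paper: both view $W_i(z)=M\pi_i\bar Y_{i\cdot}(z)$ as surrogate potential outcomes in a completely randomized experiment on the $M$ clusters and translate the hypotheses into the second- and fourth-moment conditions $M^{-1}\sum_i W_i(z)^2=o(M)$ and $M^{-1}\sum_i W_i(z)^4=o(M)$. The only difference is packaging — the paper delegates the Chebyshev/CLT/sandwich-variance work to its Lemma A2 for completely randomized experiments, whereas you re-derive that lemma's content inline (your "$+O(M)$" bound on the centering term is stated loosely, but it follows from Jensen's inequality, $M\bar Y_\pi(z)^4\le\sum_i W_i(z)^4$, so nothing is lost).
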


\begin{assumption}\label{a666}
For $z=0,1$, $  \sum _ { i = 1 } ^ { M } \pi_{i}c_i\bar{Y}_{i\cdot}(z)$ converges to a finite vector, and $  M^{-1}\sum _ { i = 1 } ^ { M } c_i c_i^{\T}$ converges to a finite and invertible matrix.
\end{assumption}

\begin{theorem}\label{thm9}
Let $r_i(z)$ be the residual from the OLS fit of $ M\pi_i \bar{Y}_{i\cdot}(z) $ on $(1,c_i)$ and $\se^2(\hat\tau_{\pi\textsc{a}}^\textup{adj})= V  \{r_i(z)\}/M.$
Under Assumptions \ref{assume::4}, \ref{assume::1} and \ref{a666}, if $\sum_{i=1}^M \pi_i^2\bar{Y}_{i\cdot}(z)^2=o(1)$, then $\hat\tau_{\pi\textsc{a}}^\textup{adj}=\tau_{\pi}+o_\mathbb P(1)$; if further $\sum_{i=1}^M M^3\pi_i^4\bar{Y}_{i\cdot}(z)^4=O(1)$ and $  V  \{r_i(z)\} \not \rightarrow 0$, then $(\hat\tau_{\pi\textsc{a}}^\textup{adj}-\tau_\pi)/\se(\hat\tau_{\pi\textsc{a}}^\textup{adj}) \rightsquigarrow \mathcal{N} (0,1) $ and
$M\times \hat{\se}^2_\textsc{hw}(\hat\tau_{\pi\textsc{a}}^\textup{adj})=V_\textup{c}\{r_i(z)\}+o_\mathbb P(1)$.
\end{theorem}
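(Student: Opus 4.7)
The plan is to reduce Theorem 9 to Theorem 4 (and hence the Lin (2013) template) by reinterpretation. Define the cluster-level ``potential outcomes'' $W_i(z) = M\pi_i\bar{Y}_{i\cdot}(z)$ for $z=0,1$, with observed counterpart $W_i = Z_i W_i(1) + (1-Z_i)W_i(0)$. Since a cluster-randomized experiment is a completely randomized experiment at the cluster level, and since $\hat\tau_{\pi\textsc{a}}^\textup{adj}$ is numerically the coefficient of $Z_i$ in the OLS fit of $W_i$ on $(1, Z_i, c_i, Z_i c_i)$, it is exactly Lin's covariate-adjusted estimator with ``outcome'' $W_i$ and cluster-level covariate $c_i$. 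Since $M^{-1}\sum_i W_i(z) = \sum_i \pi_i \bar{Y}_{i\cdot}(z) = \bar Y_\pi(z)$, the cluster-level average treatment effect equals $\tau_\pi$, so the problem is to apply the design-based theory of covariate-adjusted OLS in completely randomized experiments to $\{W_i(z)\}$. Crucially, the moment condition $\sum_{i=1}^M M^3 \pi_i^4 \bar Y_{i\cdot}(z)^4 = O(1)$ is precisely $M^{-1}\sum_i W_i(z)^4 = O(1)$, the standard fourth-moment condition required by the finite-population CLT, while $\sum_{i=1}^M \pi_i^2 \bar Y_{i\cdot}(z)^2 = o(1)$ is $M^{-2}\sum_i W_i(z)^2 = o(M^{-1})$, controlling the scale of the ``outcome'' sum.

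For consistency, I would first invoke Theorem 8 to get $\hat\tau_{\pi\textsc{a}} = \tau_\pi + o_\mathbb P(1)$, and then write the algebraic decomposition
$$
\hat\tau_{\pi\textsc{a}}^\textup{adj} = \hat\tau_{\pi\textsc{a}} - \hat Q(1)^\T \bar c_\TG - \hat Q(0)^\T \bar c_\CG,
$$
where $\bar c_\TG, \bar c_\CG$ are the treated and control sample averages of $c_i$ and $\hat Q(z)$ denote the sample OLS coefficients of $W_i$ on $c_i$ within treatment arm $z$. Under Assumption 10, $M^{-1}\sum c_i c_i^\T$ has an invertible limit, $\bar c_\TG, \bar c_\CG \to \bar c = 0$ in probability by the finite-population law of large numbers, and $\hat Q(z)$ are $O_\mathbb P(1)$ since $\sum \pi_i c_i \bar Y_{i\cdot}(z)$ converges. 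Combining these delivers the consistency claim.

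For asymptotic normality, I would expand $\hat\tau_{\pi\textsc{a}}^\textup{adj} - \tau_\pi$ as a linear function of $Z_i$-weighted theoretical residuals $r_i(z)$ up to an $o_\mathbb P(M^{-1/2})$ remainder, replacing $\hat Q(z)$ with its population analog from the OLS fit of $W_i(z)$ on $(1,c_i)$; this substitution is justified by the $L^4$ control $M^{-1}\sum_i W_i(z)^4 = O(1)$ together with Assumption 10. The finite-population CLT applied to $r_i(z)$ then delivers $\mathcal N(0,1)$ convergence after normalization by $\se(\hat\tau_{\pi\textsc{a}}^\textup{adj})$, with the Lindeberg condition secured because $\max_i r_i(z)^2 = o(M)$ while $\sum_i r_i(z)^2 = M V\{r_i(z)\}$ stays of order $M$ by the non-degeneracy hypothesis.

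For the heteroskedasticity-robust standard error, I would expand
$$
M \hat{\se}^2_\textsc{hw}(\hat\tau_{\pi\textsc{a}}^\textup{adj}) = \frac{M}{n_\TG^2}\sum_{Z_i=1}\hat r_i^2 + \frac{M}{n_\CG^2}\sum_{Z_i=0}\hat r_i^2
$$
where $\hat r_i$ are the sample OLS residuals, then replace $\hat r_i$ by the theoretical residuals $r_i(z)$ using the same OLS-coefficient approximation as in the CLT step, and finally apply the finite-population LLN to each arm to obtain the limit $V_\textup{c}\{r_i(z)\}$. The gap $V_\textup{c}\{r_i(z)\} - V\{r_i(z)\}$ quantifies the usual Neyman-type conservativeness. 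The main obstacle throughout is the uniform control of $\hat Q(z) - Q_{\pi\textsc{a}}(z)$ and of sums of cross-products such as $\sum_{Z_i=1} W_i(z) c_i$ under the finite-population design, since the summands now involve the weighted outcomes $W_i(z)$ whose scale interacts delicately with $\pi_i$; all of this is handled by repeated invocation of the moment bounds $\sum M^3 \pi_i^4 \bar Y_{i\cdot}(z)^4 = O(1)$ and Assumption 10, mirroring the argument used for $\hat\tau_\textsc{t}^\textup{adj}$ in Theorem 4.
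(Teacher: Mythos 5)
Your proposal matches the paper's proof: the paper likewise treats $W_i(z)=M\pi_i\bar Y_{i\cdot}(z)$ as a new cluster-level outcome in a completely randomized experiment on clusters, checks that $\sum_{i=1}^M\pi_i^2\bar Y_{i\cdot}(z)^2=o(1)$ and $\sum_{i=1}^M M^3\pi_i^4\bar Y_{i\cdot}(z)^4=O(1)$ translate into $M^{-1}\sum_{i=1}^M W_i(z)^2=o(M)$ and $M^{-1}\sum_{i=1}^M W_i(z)^4=O(1)$, and then cites its Lemma \ref{l14} (the Lin-type result for completely randomized experiments) rather than re-deriving the linearization, which is exactly the content you sketch. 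Two trivial slips that do not affect the argument: the decomposition should read $\hat\tau_{\pi\textsc{a}}^\textup{adj}=\hat\tau_{\pi\textsc{a}}-\bar c_\TG^\T\hat Q(1)+\bar c_\CG^\T\hat Q(0)$, and under the consistency hypotheses alone one only gets $\hat Q(z)=o_\mathbb P(M^{1/2})$ rather than $O_\mathbb P(1)$, which still suffices because $\bar c_\TG=O_\mathbb P(M^{-1/2})$.
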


Comparing Theorems \ref{thm8} and \ref{thm9}, $\se^2(\hat\tau_{\pi\textsc{a}}^\textup{adj})\le \se^2(\hat\tau_{\pi\textsc{a}})$, so covariate adjustment always improves the asymptotic efficiency for estimating $\tau_\pi$ in the regression estimators based on weighted cluster averages.

\subsection{Comparison}

Similar to Corollary \ref{corollary::improvingIbyT}, we can reduce the asymptotic variances of the WLS estimators by  the OLS estimators based on weighted cluster averages with appropriately chosen covariates. Again, we write covariates in parentheses to emphasize them if they are different from the default choices.

\begin{corollary}\label{corollary::improve-Api-by-piA}
The asymptotic variances satisfy
$$
\se^2\{ \hat\tau_{\pi\textsc{a}}^\textup{adj}(\pi_i,\pi_i c_i )\} \le
\se^2\{\hat\tau_{\pi\textsc{a}}^\textup{adj}(\pi_i)\}\le \se^2(\hat\tau_{\textsc{a}\pi}),\quad
\se^2\{ \hat\tau_{\pi\textsc{a}}^\textup{adj}( \pi_i,\pi_i c_i) \}  \le
\se^2(\hat\tau_{\textsc{a}\pi}^\textup{adj}).
$$
These inequalities hold asymptotically if we replace the standard errors by the estimated ones.
\end{corollary}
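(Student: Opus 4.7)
The plan is to mirror the proof of Corollary \ref{corollary::improvingIbyT} by viewing all four estimators as \citet{lin2013}-type regression estimators applied to the cluster-level pseudo-outcome $y_i(z) = M\pi_i\bar Y_{i\cdot}(z)$. From Theorems \ref{thm::wls-without-covariates}--\ref{thm9}, each asymptotic variance has the form $\se^2 = V\{r_i(z)\}/M$ for a set of cluster-level residuals $r_i(z)$; the comparison then reduces to showing that the OLS-adjusted residuals dominate (in the sense of smaller $V$) residuals that correspond to a specific, sub-optimal linear adjustment of $y_i(z)$. This is exactly the projection and optimality statement underlying \citet{lin2013}.

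For the first inequality, I would note that both $\hat\tau_{\pi\textsc{a}}^\textup{adj}(\pi_i, \pi_i c_i)$ and $\hat\tau_{\pi\textsc{a}}^\textup{adj}(\pi_i)$ are treatment-interacted OLS fits of $y_i(z)$ on a covariate vector, with the former using a strict superset of the latter. Since \citet{lin2013}'s optimal asymptotic variance with a larger invertible covariate set is always no greater than that with a smaller set (any adjustment feasible for the smaller set remains feasible for the larger one), the desired inequality is immediate.

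For the second and third inequalities, the crucial algebraic step is to rewrite the WLS residuals from Theorems \ref{thm::wls-without-covariates} and \ref{thm::wls-with-covariates} as particular linear adjustments of $y_i(z)$. I would expand
\begin{equation*}
M\pi_i\{\bar Y_{i\cdot}(z) - \bar Y_\pi(z)\} \;=\; y_i(z) - 0\cdot 1 - M\bar Y_\pi(z)\cdot \pi_i,
\end{equation*}
exhibiting $\hat\tau_{\textsc{a}\pi}$'s residual as a sub-optimal linear adjustment of $y_i(z)$ by $(1,\pi_i)$; and
\begin{equation*}
M\pi_i\{\bar Y_{i\cdot}(z) - \bar Y_\pi(z) - (c_i-\bar c_\pi)^\T Q_{\textsc{a}\pi}(z)\} \;=\; y_i(z) - \pi_i\cdot M\bigl[\bar Y_\pi(z) - \bar c_\pi^\T Q_{\textsc{a}\pi}(z)\bigr] - (\pi_i c_i)^\T\cdot M Q_{\textsc{a}\pi}(z),
\end{equation*}
exhibiting $\hat\tau_{\textsc{a}\pi}^\textup{adj}$'s residual as a sub-optimal linear adjustment by $(1,\pi_i,\pi_i c_i)$. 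Because $\hat\tau_{\pi\textsc{a}}^\textup{adj}(\pi_i)$ and $\hat\tau_{\pi\textsc{a}}^\textup{adj}(\pi_i, \pi_i c_i)$ use the treatment-arm-specific OLS-optimal coefficients on exactly these two covariate sets, \citet{lin2013}'s minimality of $V\{\cdot\}$ gives the two remaining inequalities.

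The analogous statement for the heteroskedasticity-robust standard errors follows by the same logic applied to $V_\textup{c}\{r_i(z)\}$: by Theorems \ref{thm::wls-without-covariates}--\ref{thm9}, $M\hat{\se}_\textsc{hw}^2 = V_\textup{c}\{r_i(z)\} + o_\mathbb{P}(1)$, and because OLS minimizes the arm-specific residual sums of squares $\sum_i r_i(z)^2$ it also minimizes the positively weighted combination $V_\textup{c}\{r_i(z)\}$. The main obstacle I anticipate is only the algebraic rearrangement above; once each WLS residual is exhibited as a linear adjustment of $y_i(z)$ with the appropriate covariate vector, the projection and optimality arguments from the proof of Corollary \ref{corollary::improvingIbyT} carry over with no further work.
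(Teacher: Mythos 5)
Your proposal is correct and follows essentially the same route as the paper: it exhibits the WLS residuals of $\hat\tau_{\textsc{a}\pi}$ and $\hat\tau_{\textsc{a}\pi}^\textup{adj}$ as sub-optimal linear adjustments of the pseudo-outcome $M\pi_i\bar Y_{i\cdot}(z)$ by $(1,\pi_i)$ and $(1,\pi_i,\pi_i c_i)$ respectively, and then invokes the residual-sum-of-squares/optimality argument of \citet{lin2013} and \citet{li2017general}, exactly as in the paper's treatment of Corollary \ref{corollary::improvingIbyT}. The algebraic rearrangements and the separate handling of $V$ versus $V_\textup{c}$ for the estimated standard errors are both as in the paper.
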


Inequality $\se^2\{ \hat\tau_{\pi\textsc{a}}^\textup{adj}(\pi_i,\pi_i c_i )\} \le
\se^2\{\hat\tau_{\pi\textsc{a}}^\textup{adj}(\pi_i)\}$ follows because \cite{lin2013}'s regression adjustment will always improve the asymptotic variance. Inequality $\se^2\{\hat\tau_{\pi\textsc{a}}^\textup{adj}(\pi_i)\}\le \se^2(\hat\tau_{\textsc{a}\pi})$ follows from the fact that the residual sum of squares from the OLS fit of $\pi_i\bar{Y}_{i\cdot}(z)-\bar Y_\pi(z)/M$ on $\pi_i-1/M$ is smaller than $\sum_{i=1}^M [
\pi_i \{ \bar{Y}_{i\cdot}(z)-\bar Y_\pi(z) \}
]^2$. Inequality $\se^2\{ \hat\tau_{\pi\textsc{a}}^\textup{adj}( \pi_i,\pi_i c_i) \}  \le
\se^2(\hat\tau_{\textsc{a}\pi}^\textup{adj})$ follows from the fact that
$\sum_{i=1}^M \{r_i(z)/M\}^2=
\sum_{i=1}^M [\pi_i \{ \bar{Y}_{i\cdot}(z)-\bar Y_\pi(z)\}-
\pi_i(c_i-\bar c_\pi)^\T Q_{\textsc{a}\pi}(z)  ]^2
$ in Theorem \ref{thm::wls-with-covariates} is larger than the residual sum of squares from the OLS fit of $\pi_i\{\bar{Y}_{i\cdot}(z)-\bar Y_\pi(z)\}$ on $\pi_i(c_i-\bar c_\pi)$, which is larger than that from the OLS fit of $\pi_i \bar{Y}_{i\cdot}(z)-\bar Y_\pi (z)/M$ on $(\pi_i-1/M,\pi_ic_i-\bar c_\pi/M)$.

Corollary \ref{corollary::improve-Api-by-piA} highlights the importance of including $\pi_i$ as a cluster-level covariate to improve the asymptotic efficiency. Overall, given estimators $\hat\tau_{\textsc{a}\pi}$ and $\hat\tau_{\textsc{a}\pi}^\textup{adj}$, we can always improve their efficiency by using $ \hat\tau_{\pi\textsc{a}}^\textup{adj}( \pi_i,\pi_i c_i)  $.

\citet{schochet2021design} propose an even more general weighted estimand, motivated by the need to adjust for nonresponses in cluster-randomized experiments. Their weights can vary with individuals but our weights only vary with clusters. By allowing for general weights, we can also discuss individual-level regression with weights. This is beyond the scope of this paper.

\section{Unification and recommendation}
\label{sec::unification-recommendation}

So far, we have introduced four estimators for each of $\tau$ and $\tau_\pi$. Table \ref{tb::comparisonofestimators} summarizes them and the corresponding standard errors. Table \ref{tb::comparisonofestimators} presents them in parallel, and the ones in the each row are equivalent when $\pi_i=\omega_i$, as demonstrated by the proposition below.

\begin{proposition}\label{t14}
If $ \pi_i = \omega_i $ for all $i=1,\ldots, M,$ then
$$
\begin{array}{lllllll}
\hat\tau_{\textsc{a}\omega} &=&\hat\tau_{\textsc{i}},
& \quad &
\hat{\se}^2_\textsc{hw}(\hat\tau_{\textsc{a}\omega}) &=& \hat{\se}^2_\textsc{lz}(\hat\tau_{\textsc{i}}),\\
\hat\tau_{\textsc{a}\omega}^{\textup{adj}} &=&\hat\tau_\textsc{i}^\textup{adj}(c_i),
& \quad &
\hat{\se}^2_\textsc{hw}(\hat\tau_{\textsc{a}\omega}^{\textup{adj}}) &=& \hat{\se}^2_\textsc{lz}\{\hat\tau_\textsc{i}^\textup{adj}(c_i)\},\\
\hat\tau_{\omega\textsc{a}} &=& \hat\tau_{\textsc{t}},
& \quad &
\hat{\se}^2_\textsc{hw}(\hat\tau_{\omega\textsc{a}}) &=& \hat{\se}^2_\textsc{hw}(\hat\tau_{\textsc{t}}),\\
\hat\tau_{\omega\textsc{a}}^{\textup{adj}} &=& \hat\tau_\textsc{t}^\textup{adj},
& \quad &
\hat{\se}^2_\textsc{hw}(\hat\tau_{\omega\textsc{a}}^{\textup{adj}}) &=& \hat{\se}^2_\textsc{hw}(\hat\tau_\textsc{t}^\textup{adj}).
\end{array}
$$
\end{proposition}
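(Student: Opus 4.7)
The plan is to prove all eight identities through two organizing observations: (i) when the regressors in an individual-level OLS are constant within each cluster, the OLS normal equations collapse to WLS normal equations on cluster averages with weights proportional to $n_i$; and (ii) under the same constant-within-cluster structure, the cluster-robust sandwich has exactly the same algebraic form as the HC0 sandwich from the corresponding WLS.

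First I would handle the coefficient identities. For the unadjusted pair $(\hat\tau_{\textsc{a}\omega}, \hat\tau_\textsc{i})$, direct algebra with $\omega_i=n_i/N$ gives $\sum_{Z_i=1}\omega_i\bar Y_{i\cdot}/\sum_{Z_i=1}\omega_i = n_\TG^{-1}\sum_{ij\in\mathcal T} Y_{ij}=\bar Y_\TG$ and similarly for controls, so $\hat\tau_{\textsc{a}\omega}=\bar Y_\TG-\bar Y_\CG=\hat\tau_\textsc{i}$. For the third and fourth identities, note that $M\omega_i\bar Y_{i\cdot}=(n_iM/N)\bar Y_{i\cdot}=(M/N)\sum_j Y_{ij}=\tilde Y_{i\cdot}$, so both regressions have identical outcome and regressors $(1,Z_i,c_i-\bar c,Z_i(c_i-\bar c))$ (with the same centering $\bar c=M^{-1}\sum c_i$), making the identities immediate.

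For the adjusted pair $(\hat\tau_{\textsc{a}\omega}^\textup{adj},\hat\tau_\textsc{i}^\textup{adj}(c_i))$, I would first check that the centering constants agree: $\bar c_\omega=\sum_i\omega_i c_i=N^{-1}\sum_i n_i c_i=N^{-1}\sum_{ij}c_i=\bar x$ when $x_{ij}=c_i$. Writing $\tilde X_i=(1,Z_i,c_i-\bar c_\omega,Z_i(c_i-\bar c_\omega))$, the individual-level OLS normal equations $\sum_{ij}\tilde X_i(Y_{ij}-\tilde X_i^\T\beta)=0$ collapse to $\sum_i n_i\tilde X_i(\bar Y_{i\cdot}-\tilde X_i^\T\beta)=0$, which are exactly the WLS normal equations with weights $\omega_i\propto n_i$; hence the two $\beta$'s coincide, and in particular the coefficients of $Z_i$ match.

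For the standard errors, I would express both sandwiches in closed form and match them term by term. With $\tilde X_i$ constant within cluster, the cluster-robust formula satisfies
\begin{equation*}
\hat{\se}^2_\textsc{lz}=\bigl(\textstyle\sum_i n_i\tilde X_i\tilde X_i^\T\bigr)^{-1}\bigl(\textstyle\sum_i n_i^2\hat{\bar e}_{i\cdot}^2\tilde X_i\tilde X_i^\T\bigr)\bigl(\textstyle\sum_i n_i\tilde X_i\tilde X_i^\T\bigr)^{-1},
\end{equation*}
because $\sum_j\tilde X_i\hat e_{ij}=n_i\tilde X_i\hat{\bar e}_{i\cdot}$. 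For the WLS with weights $\omega_i$, the HC0 sandwich is $(X^\T WX)^{-1}\sum_i\omega_i^2\hat u_i^2\tilde X_i\tilde X_i^\T(X^\T WX)^{-1}$ with $\hat u_i=\bar Y_{i\cdot}-\tilde X_i^\T\hat\beta=\hat{\bar e}_{i\cdot}$ (since the coefficients agree); substituting $\omega_i=n_i/N$ and $X^\T WX=N^{-1}\sum_i n_i\tilde X_i\tilde X_i^\T$ cancels all factors of $N$, recovering the same expression. For pairs three and four the SE equality is tautological since both use the same OLS fit. Extracting the diagonal entry corresponding to $Z_i$ gives the squared standard error for the treatment effect coefficient.

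The main obstacle I anticipate is book-keeping: ensuring that the centering, the scaling of weights $\omega_i=n_i/N$ versus $n_i$, and the convention for the HC0 versus cluster-robust sandwich match cleanly, so that the algebraic cancellation of $N$'s goes through. Convincing the reader that both the individual-level residuals $\hat e_{ij}$ and their cluster means $\hat{\bar e}_{i\cdot}$ agree with the WLS residuals $\hat u_i$ requires invoking the coefficient identities first, which is why that step must precede the SE computation.
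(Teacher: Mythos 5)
Your proof is correct and rests on the same underlying fact as the paper's --- that when every regressor is constant within a cluster, the individual-level regression collapses to a weighted cluster-average regression with weights proportional to $n_i$ --- but your execution is more direct. You obtain equality of the \emph{entire} coefficient vector in one step from the normal equations $\sum_{ij}\tilde X_i(Y_{ij}-\tilde X_i^\T\beta)=\sum_i n_i\tilde X_i(\bar Y_{i\cdot}-\tilde X_i^\T\beta)=0$, and then match the two sandwich estimators term by term after noting $\sum_j \hat e_{ij}=n_i\hat u_i$ and cancelling the powers of $N$. The paper instead first converts the WLS to an OLS by multiplying outcomes and regressors by $\omega_i^{1/2}$, proves equality of the slope coefficients via the Frisch--Waugh--Lovell theorem (its Lemma A8), recovers the $Z_i$ coefficient separately, and then identifies the WLS meat and bread matrices with the matrices $G$ and $H_i/N^2$ already built for the cluster-robust analysis in Section A2.2, so that the standard-error identity falls out of its reparametrized sandwich formula (Lemma A5). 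Your route avoids both the $\omega_i^{1/2}$ transformation and FWL, at the cost of redoing the sandwich bookkeeping explicitly; the paper's route is longer here but lets it reuse $G$ and $H$ and the $(1,1)+(2,2)-2(1,2)$ machinery it needs anyway for Theorems 1--2. Two small points you handle correctly that are worth keeping explicit: the centering constants agree because $\bar c_\omega=\sum_i\omega_i c_i=N^{-1}\sum_{ij}x_{ij}$ when $x_{ij}=c_i$, and the HC0 convention for WLS must be the one with meat $\sum_i\omega_i^2\hat u_i^2\tilde X_i\tilde X_i^\T$ (equivalently, HC0 applied to the $\omega_i^{1/2}$-rescaled OLS), which is the convention the paper adopts.
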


 \begin{table}[t]
\centering
\caption{Estimands, estimators, standard errors, and corresponding theorems in this paper}\label{tb::comparisonofestimators}
\begin{tabular}{llcllllc}
\hline
\multicolumn{3}{c}{estimand $\tau$} &&& \multicolumn{3}{c}{estimand $\tau_\pi$} \\
\cline{1-3} \cline{6-8}
estimator & standard error & Theorem  &&& estimator & standard error & Theorem \\
$\hat\tau_{\textsc{i}}$ & $\hat{\se}_\textsc{lz}(\hat\tau_{\textsc{i}} )$& \ref{thm::tau-i}
&&&$\hat\tau_{\textsc{a}\pi}$ & $\hat{\se}_\textsc{hw}(\hat\tau_{\textsc{a}\pi})$&\ref{thm::wls-without-covariates}  \\
$\hat\tau_{\textsc{i}}^\textup{adj}$  & $\hat{\se}_\textsc{lz}(\hat\tau_{\textsc{i}}^\textup{adj}  ) $&  \ref{thm2}
&&&$\hat\tau_{\textsc{a}\pi}^\textup{adj} $ & $\hat{\se}_\textsc{hw}(\hat\tau_{\textsc{a}\pi}^\textup{adj})$& \ref{thm::wls-with-covariates} \\
$\hat\tau_{\textsc{t}} $  &$ \hat{\se}_\textsc{hw}(\hat\tau_{\textsc{t}} ) $& \ref{thm::tau-T}
&&&$\hat\tau_{\pi\textsc{a}}$& $\hat{\se}_\textsc{hw}(\hat\tau_{\pi\textsc{a}})$& \ref{thm8}  \\
$\hat\tau_{\textsc{t}}^\textup{adj}$  &$\hat{\se}_\textsc{hw}(\hat\tau_{\textsc{t}}^\textup{adj}  ) $& \ref{thm::tau-T-adj}
&&&$\hat\tau_{\pi\textsc{a}}^\textup{adj}$& $ \hat{\se}_\textsc{hw}(\hat\tau_{\pi\textsc{a}}^\textup{adj}) $& \ref{thm9} \\
\hline
\end{tabular}
\end{table}

The equivalence relationships in Proposition \ref{t14} suggest that, in general, $( \hat\tau_{\textsc{a}\pi}, \hat\tau_{\textsc{a}\pi}^\textup{adj})$ are similar to $( \hat\tau_{\textsc{i}}, \hat\tau_{\textsc{i}}^\textup{adj})$, and $( \hat\tau_{\pi\textsc{a}}, \hat\tau_{\pi\textsc{a}}^\textup{adj})$ are similar to $(\hat\tau_{\textsc{t}}, \hat\tau_{\textsc{t}}^\textup{adj} ) $. Proposition \ref{t14} also explains some discussions before: $\hat\tau_{\textsc{a}\pi}^\textup{adj}$ may not gain efficiency over $\hat\tau_{\textsc{a}\pi}$ for the same reason as that $\hat\tau_{\textsc{i}}^\textup{adj}$ may not gain efficiency over $\hat\tau_{\textsc{i}}$.

Based on the comparison of asymptotic efficiency in Corollaries \ref{corollary::improvingIbyT} and \ref{corollary::improve-Api-by-piA}, we finally recommend the use of $\hat\tau_{\textsc{t}}^\textup{adj} (n_i,\tilde{x}_{i\cdot})$ for estimating $\tau$ and $\hat\tau_{\pi\textsc{a}}^\textup{adj} (\pi_i,\pi_i c_i )$ for estimating $\tau_\pi$. Importantly, the former should include $n_i$ and $\tilde{x}_{i\cdot}$ as cluster-level covariates, and the latter should include $\pi_i$ and $\pi_i c_i$ as cluster-level covariates. For both estimators, the heteroskedasticity-robust standard errors are convenient approximations to the true asymptotic standard errors.

The regression estimators based on cluster totals and averages are more efficient although they use only summary statistics of the individual data. This feature allows for using them when only clustered or grouped administrative data are available and the individual-level data are difficult to obtain due to privacy concerns. \citet{schochet2020analyzing} discusses the efficiency issue in this setting, comparing some estimators based on aggregate data and individual data.

Our theory deals with general individual-level covariates $x_{ij}$ and cluster-level covariates $c_i$, but we must address the issue of  defining them in practice. Assume that $x_{ij}^*$ are the basic covariates varying with both clusters and individuals, $c_i^*$ are the basic covariates varying only with clusters, and $n_i$ is the cluster size. For instance, in an experiment with classroom-level intervention, $x_{ij}^*$ can be the pre-test score of student $j$ in classroom $i$, and $c_i^*$ can be the years of education of the instructor for classroom $i$. To conduct individual-level regressions, we can choose $x_{ij}$ to include $(x_{ij}^*, c_i^*, n_i)$ and their functions, and to conduct cluster-level regressions, we can choose $c_i$ to include $(\tilde{x}_{i\cdot}, c_i^*, n_i)$ and their functions. 
The final covariates $x_{ij}$ or $c_i$ must be properly centered for different estimators  discussed above. 
Since our theory is model-assisted, the consistency and asymptotic Normality hold with any choice of the covariates as long as the regularity conditions hold. However, the number of clusters is not extremely large in most cluster-randomized experiments, so we must specify a parsimonious regression model to avoid the bias and efficiency loss due to overfitting.

\section{Simulation}
\label{sec:simulation-section}

This section uses simulation to evaluate the finite-sample properties of various estimators under cluster-randomized experiments, complementing the asymptotic theory in previous sections that assumes $M\rightarrow \infty$. We compare various estimators and standard errors under different data generating processes. We present the simulation results using Figure \ref{fig:simulation-studies} in the main paper but relegate more detailed information to Tables \ref{tab:simulation-studies} and \ref{tab:simulation-studies2} in the appendix.

\subsection{Comparing estimators and standard errors}\label{sec::compare-est-se}

Assume $M=160$ and $e=0.3$. Cluster $i$ has $n_{i}\sim \text{round}\{\text{Unif}(3000/M\times0.6,3000/M\times1.4)\}$ units. Unit $(i,j)$ has covariate $x_{ij} \sim  i/M+\text{Unif}(-1,1)$, and potential outcomes $Y_{ij}(1)\sim \mathcal N\{2n_iM/N+(x_{ij}-\bar x)^3,1\}, \ Y_{ij}(0)\sim \mathcal N\{i/M+(x_{ij}-\bar x)^2,1\}$, which are nonlinear in $x_{ij}$. We draw independent samples and fix them in simulation. Over 1000 replications, we draw cluster-level treatment indicators, obtain observed individual outcomes, and calculate various estimators and standard errors. Figure \ref{fig1} presents the boxplots of the estimators and coverage rates of the $95\%$ Wald-type confidence intervals based on various variance estimators.

In Figure \ref{fig1}, the horizontal line in the boxplots shows the true $\tau$ and that in the barplots shows the $95\%$ nominal level, respectively. 
Numerically, $\hat{\tau}_{\omega\textsc{a}}$ and $\hat{\tau}_{\omega\textsc{a}}^\textup{adj}$ are the same as $\hat{\tau}_{\textsc{t}}$ and $\hat{\tau}_{\textsc{t}}^\textup{adj}$, so we omit them. From Figure \ref{fig1}, $\hat\tau_{\textsc{t}}^\textup{adj}(n_i)$ performs better than $\hat\tau_{\textsc{i}}$, and $\hat{\tau}_{\textsc{t}}^\textup{adj}(n_i,\tilde{x}_{i\cdot})$ performs the best among all the estimators, confirming Corollary \ref{corollary::improvingIbyT}; $\hat\tau_{\textsc{a}} $ and $\hat\tau_{\textsc{a}}^\textup{adj}(\bar x_{i\cdot})$ are inconsistent for $\tau$, confirming Theorems \ref{thm::wls-without-covariates} and \ref{thm::wls-with-covariates}; $\hat\tau_{\textsc{a}\omega}$ and $\hat\tau_{\textsc{a}\omega}^{\textup{adj}}(\bar x_{i\cdot})$ perform the same as $\hat{\tau}_\textsc{i}$ and $ \hat{\tau}_{\textsc{i}}^\textup{adj}(\bar x_{i\cdot})$, confirming Proposition \ref{t14}. The coverage rates are often lower than the $95\%$ nominal level when using $\hat\se_\textsc{ols}$ and $\hat\se_\textsc{hw}$ for regression estimators based on individual data, and $\hat\se_\textsc{ols}$ for regression estimators based on cluster totals and averages. Therefore, we do not recommend those standard errors, and do not use them in other simulation studies and the application in the next section.

We end this subsection with a note on the improvement of efficiency  of $\hat\tau_\t(n_i)$ over $\hat\tau_\i$ asymptotically. 
The key reason for this result is that the residual sum of squares from the OLS fit of $\sum_{j=1}^{n_i}Y_{ij}(z)M/N -\bar Y(z)$ on $n_iM/N-1$ equals to the residual sum of squares from the OLS fit of $\tilde{\varepsilon}_{i\cdot}(z)$ on $n_iM/N-1$, which is smaller than $
\sum_{i=1}^M\{\tilde{\varepsilon}_{i\cdot}(z)\}^2
$. 
In the above data generating process, we include $n_i$ for generating $Y_{ij}(1)$ so that the improvement of $\hat\tau_\t(n_i)$ over $\hat\tau_\i$ is significant. 
However, if $n_i$ is not predictive to $\tilde{\varepsilon}_{i\cdot}(z)$ at all,  then the difference between $\hat\tau_\t(n_i)$ and $\hat\tau_\i$ is small.
Section \ref{sec::simulation-real-data} will give simulation results to show this.

\subsection{Unrelated covariates}\label{sec::noise_covariate}

The last subsection shows a case in which covariates are predictive to the potential outcomes and thus covariate adjustment greatly improves the estimation efficiency. However, covariates may not be predictive in some applications. Although covariate adjustment cannot hurt the asymptotic efficiency based on our theoretical analysis, it is important to see whether the finite-sample results match the asymptotic analysis.
We modify the data generating process to $Y_{ij}(1)\sim \mathcal N (2n_iM/N,1)$ and $Y_{ij}(0)\sim \mathcal N( i/M,1)$, removing all the $x_{ij}$ term so that $x_{ij}$ is completely unrelated to the potential outcomes. 

Figure \ref{fig::noise_covariate} shows the boxplots and coverage rates. Covariate adjustment based on $x_{ij}$ does not improve the efficiency although the finite-sample cost is negligible in our setting. It is important to include the cluster size $n_i$ in the cluster total regressions because $\hat{\tau}_{\t}^\a(n_i)$ still improves both $\hat{\tau}_{\t}$ and $\hat{\tau}_\i$. The cluster size can be an important predictor of the cluster totals, which is a distinct feature of cluster-randomized experiments compared to completely randomized experiments on individuals.

Nevertheless, our results are fundamentally asymptotic with $M\rightarrow \infty $. With unrelated covariates and small $M$, we should apply covariate adjustment with caution since it might result in undesirable finite-sample bias without any efficiency gain.

\subsection{Counter-example for efficiency loss in covariate adjustment}\label{sec::6.2}

We give an example where the asymptotic variance of $\hat\tau_\textsc{i}^\textup{adj}$ is larger than that of $\hat\tau_{\textsc{i}}$, so covariate adjustment in individual-level regression may cause efficiency loss. By Proposition \ref{t14}, this example also apply to $\se^2(\hat\tau_{\textsc{a}\pi}^\textup{adj})\ge \se^2(\hat\tau_{\textsc{a}\pi})$ by choosing $\pi_i=\omega_i$ and $c_i=\bar{x}_{i\cdot}$. Assume $M=100$ and $e=1/2$. We summarize the data generating process below, where $k=1,\ldots,M/4$ and $j=1,\ldots,n_i$:

\begin{center}
\begin{tabular}{ccccc}
\hline
& $i=4k-3$ & $i=4k-2$& $i=4k-1$& $i=4k$  \\
 \hline
$n_{i}$    & $20$ & $20$ & $30$ & $10$\\
$x_{ij}$   & $-5$ & $-5$ & $4 $ & $8 $\\
$Y_{ij}(1)$& $-1$ & $0 $ & $-1$ & $5 $\\
$Y_{ij}(0)$& $0 $ & $0 $ & $ 0$ & $0 $ \\
 \hline
\end{tabular}
\end{center}

The finite population consists of $M/4$ replications of each of the four types of clusters.
 For instance, the first cluster above has size $n_i=20$, and the covariates $x_{ij}=-5$ and potential outcomes $\{ Y_{ij}(1), Y_{ij}(0)\} =( -1, 0)$ do not vary across units. Other clusters are similar.
Because $x_{ij}$ do not vary within cluster $i$, and we have shown that $\hat\tau_{\textsc{a}\omega}$ and $\hat\tau_{\textsc{a}\omega}^{\textup{adj}}(\bar{x}_{i\cdot})$ perform the same as $\hat\tau_{\textsc{i}}$ and $\hat\tau_\textsc{i}^\textup{adj}(\bar{x}_{i\cdot})$. So we omit them. Figure \ref{fig::2} shows that $\hat{\tau}_{\textsc{i}}^\textup{adj}$ is slightly more variable than $\hat{\tau}_\textsc{i}$, and the former indeed has larger variance than the latter.

\subsection{Cluster-randomized experiment with a large but not dominant cluster}
\label{sec::Extremeclusters1}

In this subsection, we show the finite-sample properties of various estimators under a case with a large but not dominant cluster. The data generating process is the same as Section \ref{sec::6.2} except that the last cluster has $n_M=50$ units. Figure \ref{fig::largecluster1} shows the boxplots and coverage rates. The estimators based on regressions with scaled cluster totals are biased when the cluster-level covariates include the cluster size. To demonstrate that it is the cluster size $n_i$ causing the inconsistency, we also report $\hat{\tau}_{\textsc{t}}^\textup{adj}(\bar{x}_{i\cdot})$ and find that it is nearly unbiased. In contrast, all the estimators based on regressions with individual-level data have small biases, illustrating the trade-off of efficiency and robustness issue we discussed in Section  \ref{sec::compare1}.

 \begin{figure}
\begin{subfigure}{\textwidth}
  \centering
\includegraphics[width=\textwidth]{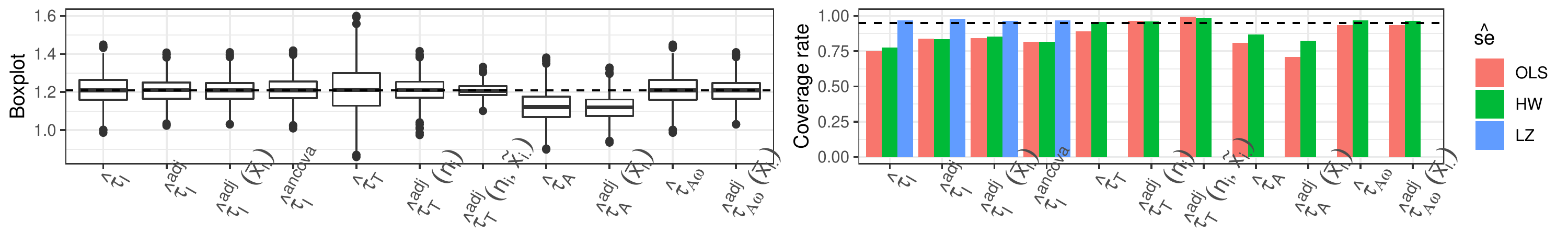}
\caption{Simulation in Section \ref{sec::compare-est-se}: comparing the estimators  }\label{fig1}
\end{subfigure}

\bigskip \bigskip
\begin{subfigure}{\textwidth}
  \centering
\includegraphics[width=\textwidth]{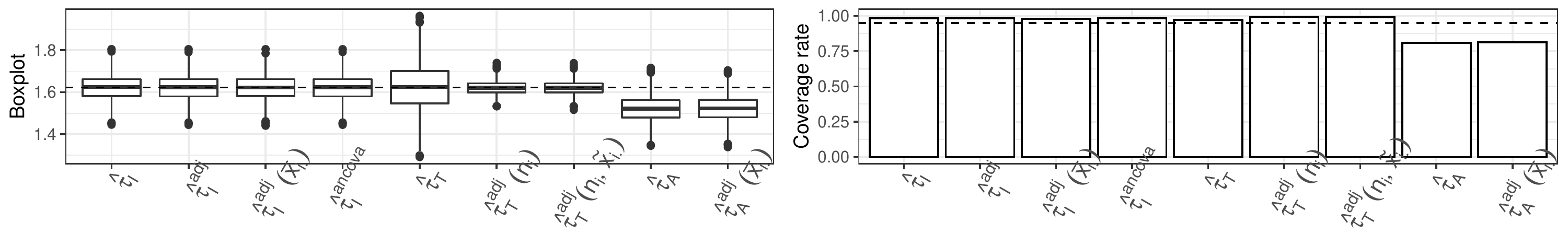}
\caption{Simulation in Section  \ref{sec::noise_covariate}: noise covariate}\label{fig::noise_covariate}
\end{subfigure}

\bigskip \bigskip
\begin{subfigure}{\textwidth}
  \centering
\includegraphics[width=\textwidth]{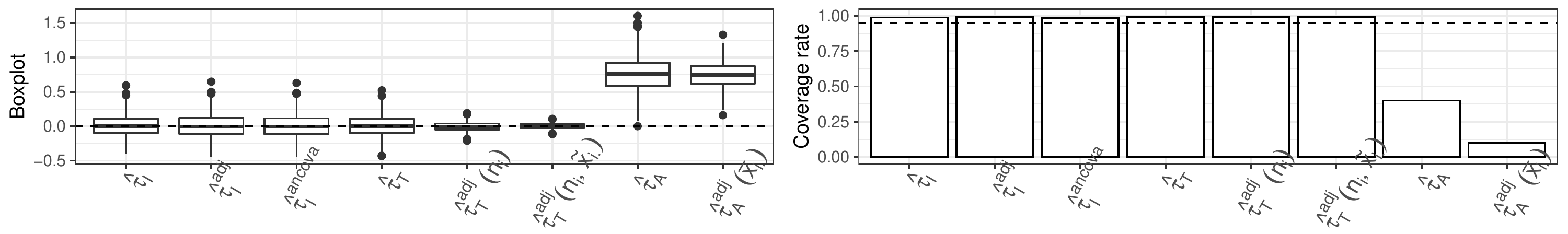}
\caption{Simulation in Section  \ref{sec::6.2}: a counter-example}\label{fig::2}
\end{subfigure}

\bigskip \bigskip
\begin{subfigure}{\textwidth}
  \centering
\includegraphics[width=\textwidth]{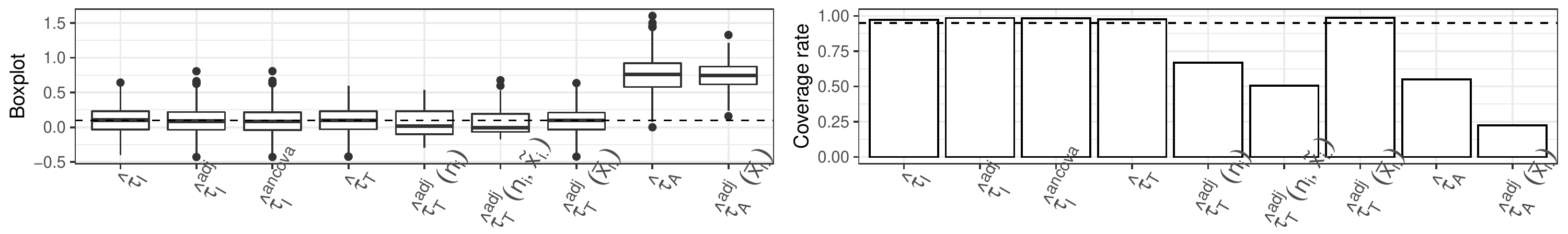}
\caption{Simulation in Section  \ref{sec::Extremeclusters1}: with a large cluster}\label{fig::largecluster1}
\end{subfigure}

\bigskip \bigskip
\begin{subfigure}{\textwidth}
  \centering
\includegraphics[width=\textwidth]{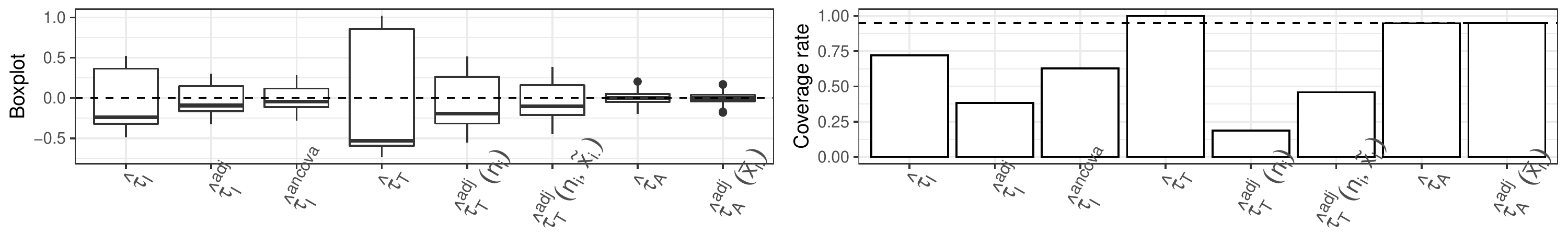}
\caption{Simulation in Section  \ref{sec::extremeclusters2}: with a dominant cluster }\label{fig2}
\end{subfigure}
\caption{Simulation}
\label{fig:simulation-studies}
\end{figure}

\subsection{Cluster-randomized experiment with a dominant cluster}
\label{sec::extremeclusters2}

We further investigate a case with a dominant cluster. Assume $M=100$ and $e=0.4$. For $i=1,\ldots,M-1$, cluster $i$ has $n_{i}\sim \text{round}\{\text{Unif}(2000/M\times0.6,2000/M\times1.4)\}$ units. Cluster $M$ has $n_M=800$ units. Unit $(i,j)$ has covariate $x_{ij}=i/M$ and potential outcomes $Y_{ij}(1)=Y_{ij}(0)\sim \mathcal N\{(i/M)^2+(x_{ij}-\bar x)^2,1\}$ so that the sharp null hypothesis of no treatment effect holds. Figure \ref{fig2} shows the boxplots and coverage rates.  In this case, the regularity conditions on the cluster sizes are violated rendering estimators based on regression with individual data and cluster totals not asymptotically Normal. Ironically, the unbiased estimator $\hat\tau_{\textsc{t}}$ is the worst among all estimators. In contrast, the estimators based on regression with cluster averages $\hat\tau_{\textsc{a}}$ and $\hat\tau_{\textsc{a}}^{\textup{adj}}$ are well-behaved: they are centering at the truth zero with coverage rates close to the nominal level as shown in  Figure \ref{fig2}. 

This example shows that changing the estimand from $\tau$ to $\tau_\pi$, we can use regressions based on cluster averages to reduce bias and increase efficiency when the cluster sizes vary dramatically. They can be particularly useful as test statistics for testing the sharp null hypothesis that $Y_{ij}(1)=Y_{ij}(0) \ (i=1,\ldots,M; \ j=1,\ldots,n_i)$ because $\tau = \tau_\pi = 0$ under this null hypothesis.

\section{Application}\label{sec::application}

\subsection{Data analysis}

Poor sanitation leads to morbidity and mortality in developing countries. In 2012, \citet{guiteras2015encouraging} conducted a cluster-randomized experiment in rural Bangladesh to evaluate the effectiveness of different policies on the use of hygienic latrines. To illustrate our theory, we use a subset of their original data: 63 villages were randomly assigned to the treatment arm receiving subsidies, and 22 villages were randomly assigned to the control arm without subsidies.  We exclude the households not eligible for subsidies or with missing outcomes, resulting in 10,125 households in total. The median, mean, and maximum village sizes are 83, 119, and 500, respectively. We choose the outcome $Y_{ij}$ as the binary indicator for whether the household $(i,j)$ had access to a hygienic latrine or not, measured in June 2013, and covariate $x_{ij}$ as the access rate to hygienic latrines in the community that household $(i,j)$ belonged to, measured in January 2012 before the experiment.

We report various estimators in  Figure \ref{fig::app} and present more details in Table \ref{table::version2} in the appendix. We omit $\hat\tau_{\textsc{a}\omega}$ and $\hat\tau_{\textsc{a}\omega}^{\textup{adj}}(\bar{x}_{i\cdot})$ because they are identical to $\hat\tau_{\textsc{i}}$ and $\hat\tau_\textsc{i}^\textup{adj}(\bar{x}_{i\cdot})$.  Figure \ref{fig::app} shows that $\hat{\tau}_{\textsc{t}}^\textup{adj}(n_i,\tilde{x}_{i\cdot})$ has the smallest standard error, and $\hat\tau_{\textsc{t}}$ has the largest standard error. Comparing these two estimators, covariate adjustment in the cluster total regression reduces the standard error to about 25\%, which is equivalent to increase the number of clusters about $16$ times. This dramatic improvement is not surprising given the poor performance of $\hat\tau_{\textsc{t}}$ in our simulation studies. Compared to the regression estimators with individual data, $\hat{\tau}_{\textsc{t}}^\textup{adj}(n_i,\tilde{x}_{i\cdot})$ has modest improvement in the estimated standard error. Our theory suggests the use of $\hat{\tau}_{\textsc{t}}^\textup{adj}(n_i,\tilde{x}_{i\cdot})$, based on which we conclude that averaged over all households, subsidies increased the probability of getting access to hygienic latrines by 15.2\% with statistical significance at level 95\%, corroborating the analysis of \citet{guiteras2015encouraging}.

\subsection{Simulation based on the data}\label{sec::simulation-real-data}

We only have $22$ villages under control, making the asymptotic approximations suspicious. To evaluate the estimators further, we conduct additional simulation with missing potential outcomes in \citet{guiteras2015encouraging} imputed based on model fitting. Although not all estimators perform well, $\hat{\tau}_{\textsc{t}}^\textup{adj}(n_i,\tilde{x}_{i\cdot})$ has small bias and the associated Wald-type confidence interval achieves the nominal coverage rate, reassuring our choice of the estimator and confidence interval.

We fit a linear probability model of $Y_{ij}$ on $(1,Z_{ij},x_{ij},Z_{ij}x_{ij})$, and then impute the missing potential outcomes as Bernoulli random variables. Pretending that these are the true potential outcomes, we can compute $\tau$, various estimators and the associated standard errors. Figure \ref{fig::simulation-real-data} shows the boxplots  and coverage rates of estimators.
In this case, we have a relatively small number of clusters in the control group, and several estimators perform poorly: $\hat\tau_{\textsc{a}}$
and $\hat\tau_{\textsc{a}}^{\a}(\bar{x}_{i\cdot})$ are  biased because $\bar{\tau}$ differs from $\tau$ for our chosen seed for simulation; $\hat{\tau}_{\t}^\a(n_i)$ is biased although the asymptotic theory suggests that it is consistent. Fortunately, $\hat{\tau}_\i$, $\hat{\tau}_{\i}^{\a}$, $\hat{\tau}_{\i}^{\a}(\bar x_{i\cdot})$, $\hat{\tau}_{\i}^{\textup{ancova}}$, $\hat{\tau}_{\t}$, and $\hat{\tau}_{\t}^\a(n_i,\tilde{x}_{i\cdot})$ have small biases and the associated Wald-type confidence intervals are conservative. 
We  relegate more detailed results to Table \ref{tb::simulation-real-data} in the appendix.

   \begin{figure}[t] 
   
\begin{subfigure}{\textwidth}
  \centering
\includegraphics[width=\textwidth]{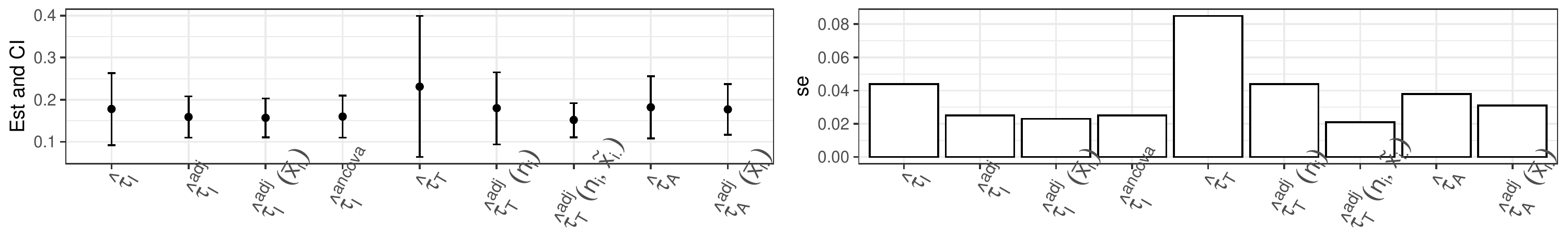}
\caption{ Point estimates with confidence intervals, and standard errors based on \citet{guiteras2015encouraging} }\label{fig::app}
\end{subfigure}

\bigskip \bigskip   
   
   \begin{subfigure}{\textwidth}
\includegraphics[width=\textwidth]{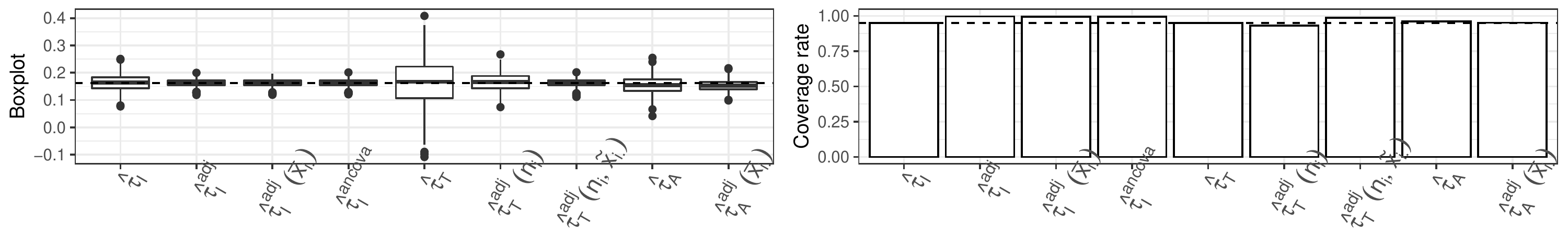}
\caption{Boxplots and coverage rates of the estimators based on the simulated data in Section  \ref{sec::simulation-real-data}}\label{fig::simulation-real-data}
\end{subfigure}

\caption{Application}
\label{fig:application}
\end{figure}

\section{Discussion}
\label{sec::discussion}

Cluster-randomized experiments are widely used in social sciences and public health. Currently, model-based inference is the dominant framework for analyzing them \citep{donner2000design, turner2017reviewanalysis}. Following a growing literature that focuses on the experimental design itself, we re-investigate the theoretical properties of several regression estimators under the design-based perspective that does not require correctly-specified models. We found that the commonly-used OLS estimators with individual data are suboptimal, and a simpler regression estimator based on scaled cluster totals is consistent and efficient with a Normal limiting distribution. We also proposed a regression estimator for a general weighted average treatment effect over clusters.

With clustered individual data, we have discussed regression estimators based on OLS, which are special cases of the generalized estimating equation with the independent working correlation matrix \citep{liang1986longitudinal}. Surprisingly, the simplest version of the generalized estimating equation suffices for generating a consistent and asymptotically Normal estimator for the average treatment effect.  It is interesting to extend the discussion to other distributions, link functions, and working correlation matrices. Another widely-used method  to analyze cluster-randomized experiments is the linear mixed effects model \citep{donner2000design, green2008analysis}. However, the causal interpretation of the estimator from this model is unknown under the design-based perspective that allows the model to be misspecified. This is an intriguing problem that is worth further investigations.

We focused on the asymptotic theory requiring a large number of clusters, but many cluster-randomized experiments have only a small number of clusters. In those cases, we can use bias-corrected robust standard errors \citep{angrist2008mostly} to improve the finite-sample coverage rates. More fundamentally, the asymptotic theory can break down with a small fixed number of clusters. Alternative asymptotic regimes are possible, but they must further restrict the outcome-generating process. One alternative approach is to apply the Bayesian machinery by imposing some parametric assumptions on the data generating process and prior distributions on the unknown parameters. \citet{feller2015hierarchical} reviewed the standard approach of using hierarchical models for causal inference with cluster-randomized experiments. This approach requires strong modeling assumptions, and it is a curious research topic to study its design-based properties. Another alternative approach is to use the Fisher randomization tests that yield finite-sample exact $p$-values \citep{gail1996design, zhang2012powerful, ding2018rank}.  This approach often restricts the model of the individual treatment effect, e.g., testing the sharp null hypothesis. Our asymptotic analysis, coupled with \citet{zhao2020covariate}, suggests that Fisher randomization tests with the estimators studentized by the appropriate standard errors deliver $p$-values that are exact under the sharp null hypothesis and asymptotically valid under the weak null hypothesis of $\tau = 0$. This is our final recommendation for analyzing cluster-randomized experiments with small $M$.

We ignored noncompliance, a common issue in cluster-randomized experiments \citep{frangakis2002clustered, small2008randomization, jo2008cluster, kang2018estimation, kang2018spillover}. Our analysis covers the case with complete randomization on clusters, but many clustered experiments are matched or stratified based on pretreatment covariates \citep{donner2000design, imai2009essential, schochet2021design}. We will investigate these directions in future research.

\section*{Acknowledgement}

We thank the associate editor, two reviewers, Zhichao Jiang, and Anqi Zhao for their insightful comments. 
The research of Peng Ding was partially funded by the U.S. National Science Foundation (grant \# 1945136).

\bibliographystyle{plainnat}
\bibliography{causal}

\begin{thebibliography}{52}
\providecommand{\natexlab}[1]{#1}
\providecommand{\url}[1]{\texttt{#1}}
\expandafter\ifx\csname urlstyle\endcsname\relax
  \providecommand{\doi}[1]{doi: #1}\else
  \providecommand{\doi}{doi: \begingroup \urlstyle{rm}\Url}\fi

\bibitem[Abadie et~al.(2017)Abadie, Athey, Imbens, and
  Wooldridge]{abadie2017should}
A.~Abadie, S.~Athey, G.~W. Imbens, and J.~Wooldridge.
\newblock When should you adjust standard errors for clustering?
\newblock Technical report, National Bureau of Economic Research, 2017.

\bibitem[Angrist and Pischke(2008)]{angrist2008mostly}
J.~D. Angrist and J.-S. Pischke.
\newblock \emph{Mostly Harmless Econometrics: An Empiricist's Companion}.
\newblock Princeton: Princeton University Press, 2008.

\bibitem[Athey and Imbens(2017)]{ATHEY201773}
S.~Athey and G.~W. Imbens.
\newblock The econometrics of randomized experiments.
\newblock In A.~Banerjee and E.~Duflo, editors, \emph{Handbook of Economic
  Field Experiments}, volume~1, chapter~3, pages 73--140. North-Holland,
  Amsterdam, 2017.

\bibitem[Barrios et~al.(2012)Barrios, Diamond, Imbens, and
  Koles{\'a}r]{barrios2012clustering}
T.~Barrios, R.~Diamond, G.~W. Imbens, and M.~Koles{\'a}r.
\newblock Clustering, spatial correlations, and randomization inference.
\newblock \emph{Journal of the American Statistical Association}, 107:\penalty0
  578--591, 2012.

\bibitem[Basse and Feller(2018)]{basse2018analyzing}
G.~Basse and A.~Feller.
\newblock Analyzing two-stage experiments in the presence of interference.
\newblock \emph{Journal of the American Statistical Association}, 113:\penalty0
  41--55, 2018.

\bibitem[Bloniarz et~al.(2016)Bloniarz, Liu, Zhang, Sekhon, and
  Yu]{bloniarz2015lasso}
A.~Bloniarz, H.~Liu, C.~H. Zhang, J.~Sekhon, and B.~Yu.
\newblock Lasso adjustments of treatment effect estimates in randomized
  experiments.
\newblock \emph{Proceedings of the National Academy of Sciences of the United
  States of America}, 113:\penalty0 7383--7390, 2016.

\bibitem[Cameron and Miller(2015)]{cameron2015practitioner}
A.~C. Cameron and D.~L. Miller.
\newblock A practitioner's guide to cluster-robust inference.
\newblock \emph{Journal of Human Resources}, 50:\penalty0 317--372, 2015.

\bibitem[Cornfield(1978)]{cornfield1978randomizationbygroup}
J.~Cornfield.
\newblock Randomization by group: a formal analysis.
\newblock \emph{American Journal of Epidemiology}, 108:\penalty0 100--102,
  1978.

\bibitem[Dasgupta et~al.(2015)Dasgupta, Pillai, and Rubin]{dasgupta2014causal}
T.~Dasgupta, N.~S. Pillai, and D.~B. Rubin.
\newblock {Causal inference from $2^K$ factorial designs by using potential
  outcomes}.
\newblock \emph{Journal of the Royal Statistical Society: Series B (Statistical
  Methodology)}, 77:\penalty0 727--753, 2015.

\bibitem[Ding and Keele(2018)]{ding2018rank}
P.~Ding and L.~Keele.
\newblock Rank tests in unmatched clustered randomized trials applied to a
  study of teacher training.
\newblock \emph{Annals of Applied Statistics}, 12:\penalty0 2151--2174, 2018.

\bibitem[Donner and Klar(2000)]{donner2000design}
A.~Donner and N.~Klar.
\newblock \emph{Design and Analysis of Cluster Randomization Trials in Health
  Research}.
\newblock London, UK: Arnold Publishers Limited, 2000.

\bibitem[Feller and Gelman(2015)]{feller2015hierarchical}
A.~Feller and A.~Gelman.
\newblock Hierarchical models for causal effects.
\newblock In R.~Scott and S.~Kosslyn, editors, \emph{Emerging Trends in the
  Social and Behavioral Sciences}. New York: Wiley, 2015.

\bibitem[Fisher(1935)]{Fisher:1935}
R.~A. Fisher.
\newblock \emph{The {D}esign of {E}xperiments, 1st Edition}.
\newblock Edinburgh, London: Oliver and Boyd, 1935.

\bibitem[Fogarty(2018{\natexlab{a}})]{fogarty2018mitigating}
C.~B. Fogarty.
\newblock On mitigating the analytical limitations of finely stratified
  experiments.
\newblock \emph{Journal of the Royal Statistical Society. Series B (Statistical
  Methodology)}, 80:\penalty0 1035--1056, 2018{\natexlab{a}}.

\bibitem[Fogarty(2018{\natexlab{b}})]{fogarty2018regression}
C.~B. Fogarty.
\newblock Regression assisted inference for the average treatment effect in
  paired experiments.
\newblock \emph{Biometrika}, 105:\penalty0 994--1000, 2018{\natexlab{b}}.

\bibitem[Frangakis et~al.(2002)Frangakis, Rubin, and
  Zhou]{frangakis2002clustered}
C.~E. Frangakis, D.~B. Rubin, and X.-H. Zhou.
\newblock Clustered encouragement designs with individual noncompliance:
  Bayesian inference with randomization, and application to advance directive
  forms.
\newblock \emph{Biostatistics}, 3:\penalty0 147--164, 2002.

\bibitem[Freedman(2008{\natexlab{a}})]{freedman2008regression_a}
D.~A. Freedman.
\newblock On regression adjustments in experiments with several treatments.
\newblock \emph{The Annals of Applied Statistics}, 2:\penalty0 176--196,
  2008{\natexlab{a}}.

\bibitem[Freedman(2008{\natexlab{b}})]{freedman2008regression_b}
D.~A. Freedman.
\newblock On regression adjustments to experimental data.
\newblock \emph{Advances in Applied Mathematics}, 40:\penalty0 180--193,
  2008{\natexlab{b}}.

\bibitem[Fuller(2009)]{fuller2009sampling}
W.~A. Fuller.
\newblock \emph{Sampling Statistics}.
\newblock John Wiley and Sons, 2009.

\bibitem[Gail et~al.(1996)Gail, Mark, Carroll, Green, and Pee]{gail1996design}
M.~H. Gail, S.~D. Mark, R.~J. Carroll, S.~B. Green, and D.~Pee.
\newblock On design considerations and randomization-based inference for
  community intervention trials.
\newblock \emph{Statistics in Medicine}, 15:\penalty0 1069--1092, 1996.

\bibitem[Graubard and Korn(1994)]{graubard1994regression}
B.~I. Graubard and E.~L. Korn.
\newblock Regression analysis with clustered data.
\newblock \emph{Statistics in Medicine}, 13:\penalty0 509--522, 1994.

\bibitem[Green and Vavreck(2008)]{green2008analysis}
D.~P. Green and L.~Vavreck.
\newblock Analysis of cluster-randomized experiments: A comparison of
  alternative estimation approaches.
\newblock \emph{Political Analysis}, 16:\penalty0 138--152, 2008.

\bibitem[Guiteras et~al.(2015)Guiteras, Levinsohn, and
  Mobarak]{guiteras2015encouraging}
R.~Guiteras, J.~Levinsohn, and A.~M. Mobarak.
\newblock Encouraging sanitation investment in the developing world: A
  cluster-randomized trial.
\newblock \emph{Science}, 348:\penalty0 903--906, 2015.

\bibitem[Huber(1967)]{huber::1967}
P.~J. Huber.
\newblock The behavior of maximum likelihood estimates under nonstandard
  conditions.
\newblock In Lucien M.~Le Cam and Jerzy Neyman, editors, \emph{Proceedings of
  the Fifth Berkeley Symposium on Mathematical Statistics and Probability},
  volume~1, pages 221--233. Berkeley, California: University of California
  Press, 1967.

\bibitem[Imai et~al.(2009)Imai, King, and Nall]{imai2009essential}
K.~Imai, G.~King, and C.~Nall.
\newblock {The essential role of pair matching in cluster-randomized
  experiments, with application to the Mexican universal health insurance
  evaluation}.
\newblock \emph{Statistical Science}, 24:\penalty0 29--53, 2009.

\bibitem[Imbens and Rubin(2015)]{imbens2015causal}
G.~W. Imbens and D.~B. Rubin.
\newblock \emph{Causal Inference for Statistics, Social, and Biomedical
  Sciences: An Introduction}.
\newblock Cambridge: Cambridge University Press, 2015.

\bibitem[Jo et~al.(2008)Jo, Asparouhov, Muth{\'e}n, Ialongo, and
  Brown]{jo2008cluster}
B.~Jo, T.~Asparouhov, B.~O. Muth{\'e}n, N.~S. Ialongo, and C.~H. Brown.
\newblock Cluster randomized trials with treatment noncompliance.
\newblock \emph{Psychological methods}, 13:\penalty0 1--18, 2008.

\bibitem[Kang and Keele(2018{\natexlab{a}})]{kang2018estimation}
H.~Kang and L.~Keele.
\newblock Estimation methods for cluster randomized trials with noncompliance:
  A study of a biometric smartcard payment system in india.
\newblock \emph{arXiv preprint arXiv:1805.03744}, 2018{\natexlab{a}}.

\bibitem[Kang and Keele(2018{\natexlab{b}})]{kang2018spillover}
H.~Kang and L.~Keele.
\newblock Spillover effects in cluster randomized trials with noncompliance.
\newblock \emph{arXiv preprint arXiv:1808.06418}, 2018{\natexlab{b}}.

\bibitem[Lei and Ding(2021)]{lei2018regression}
L.~Lei and P.~Ding.
\newblock Regression adjustment in completely randomized experiments with a
  diverging number of covariates.
\newblock \emph{Biometrika}, page in press, 2021.

\bibitem[Li and Ding(2017)]{li2017general}
X.~Li and P.~Ding.
\newblock General forms of finite population central limit theorems with
  applications to causal inference.
\newblock \emph{Journal of the American Statistical Association}, 112:\penalty0
  1759--1769, 2017.

\bibitem[Li and Ding(2020)]{li2019rerandomization}
X.~Li and P.~Ding.
\newblock Rerandomization and regression adjustment.
\newblock \emph{Journal of the Royal Statistical Society, Series B},
  82:\penalty0 241--268, 2020.

\bibitem[Liang and Zeger(1986)]{liang1986longitudinal}
K.-Y. Liang and S.~L. Zeger.
\newblock Longitudinal data analysis using generalized linear models.
\newblock \emph{Biometrika}, 73:\penalty0 13--22, 1986.

\bibitem[Lin(2013)]{lin2013}
W.~Lin.
\newblock {Agnostic notes on regression adjustments to experimental data:
  Reexamining Freedman's critique}.
\newblock \emph{Annals of Applied Statistics}, 7:\penalty0 295--318, 2013.

\bibitem[Liu and Yang(2019)]{liu2019regression}
H.~Liu and Y.~Yang.
\newblock Regression-adjusted average treatment effect estimates in stratified
  randomized experiments.
\newblock \emph{Biometrika}, page in press, 2019.

\bibitem[Lohr(2019)]{lohr2019sampling}
S.~L. Lohr.
\newblock \emph{Sampling: Design and Analysis}.
\newblock New York: Chapman and Hall/CRC, 2nd edition, 2019.

\bibitem[Middleton(2008)]{middleton2008bias}
J.~A. Middleton.
\newblock Bias of the regression estimator for experiments using clustered
  random assignment.
\newblock \emph{Statistics and Probability Letters}, 78:\penalty0 2654--2659,
  2008.

\bibitem[Middleton and Aronow(2015)]{Middleton2015cluster}
J.~A. Middleton and P.~M. Aronow.
\newblock Unbiased estimation of the average treatment effect in
  cluster-randomized experiments.
\newblock \emph{Statistics, Politics and Policy}, 6:\penalty0 39--75, 2015.

\bibitem[Miratrix et~al.(2013)Miratrix, Sekhon, and Yu]{miratrix:2013}
L.~W. Miratrix, J.~S. Sekhon, and B.~Yu.
\newblock Adjusting treatment effect estimates by post-stratification in
  randomized experiments.
\newblock \emph{Journal of the Royal Statistical Society: Series B (Statistical
  Methodology)}, 75:\penalty0 369--396, 2013.

\bibitem[Mukerjee et~al.(2018)Mukerjee, Dasgupta, and Rubin]{mukerjee2018using}
R.~Mukerjee, T.~Dasgupta, and D.~B. Rubin.
\newblock Using standard tools from finite population sampling to improve
  causal inference for complex experiments.
\newblock \emph{Journal of the American Statistical Association}, 113:\penalty0
  868--881, 2018.

\bibitem[Neyman(1923)]{Neyman:1923}
J.~Neyman.
\newblock On the application of probability theory to agricultural experiments.
  essay on principles (with discussion). section 9 (translated). reprinted ed.
\newblock \emph{Statistical Science}, 5:\penalty0 465--472, 1923.

\bibitem[Raudenbush(1997)]{raudenbush1997statistical}
S.~W. Raudenbush.
\newblock Statistical analysis and optimal design for cluster randomized
  trials.
\newblock \emph{Psychological Methods}, 2:\penalty0 173--185, 1997.

\bibitem[Raudenbush and Schwartz(2020)]{raudenbush2020randomized}
S.~W. Raudenbush and D.~Schwartz.
\newblock Randomized experiments in education, with implications for multilevel
  causal inference.
\newblock \emph{Annual Review of Statistics and Its Application}, 7:\penalty0
  177--208, 2020.

\bibitem[Schochet(2013)]{schochet2013estimators}
P.~Z. Schochet.
\newblock {Estimators for clustered education RCTs using the Neyman model for
  causal inference}.
\newblock \emph{Journal of Educational and Behavioral Statistics}, 38:\penalty0
  219--238, 2013.

\bibitem[Schochet(2020)]{schochet2020analyzing}
P.~Z. Schochet.
\newblock Analyzing grouped administrative data for {RCTs} using design-based
  methods.
\newblock \emph{Journal of Educational and Behavioral Statistics}, 45:\penalty0
  32--57, 2020.

\bibitem[Schochet et~al.(2021)Schochet, Pashley, Miratrix, and
  Kautz]{schochet2021design}
P.~Z. Schochet, N.~E. Pashley, L.~W. Miratrix, and T.~Kautz.
\newblock {Design-based ratio estimators and central limit theorems for
  clustered, blocked RCTs}.
\newblock \emph{Journal of the American Statistical Association}, page in
  press, 2021.

\bibitem[Small et~al.(2008)Small, Ten~Have, and
  Rosenbaum]{small2008randomization}
D.~S. Small, T.~R. Ten~Have, and P.~R. Rosenbaum.
\newblock Randomization inference in a group--randomized trial of treatments
  for depression: covariate adjustment, noncompliance, and quantile effects.
\newblock \emph{Journal of the American Statistical Association}, 103:\penalty0
  271--279, 2008.

\bibitem[Turner et~al.(2017{\natexlab{a}})Turner, Li, Gallis, Prague, and
  Murray]{turner2017reviewdesign}
E.~L. Turner, F.~Li, J.~A. Gallis, M.~Prague, and D.~M. Murray.
\newblock Review of recent methodological developments in group-randomized
  trials: part 1---design.
\newblock \emph{American Journal of Public Health}, 107:\penalty0 907--915,
  2017{\natexlab{a}}.

\bibitem[Turner et~al.(2017{\natexlab{b}})Turner, Prague, Gallis, Li, and
  Murray]{turner2017reviewanalysis}
E.~L. Turner, M.~Prague, J.~A. Gallis, F.~Li, and D.~M. Murray.
\newblock Review of recent methodological developments in group-randomized
  trials: part 2---analysis.
\newblock \emph{American Journal of Public Health}, 107:\penalty0 1078--1086,
  2017{\natexlab{b}}.

\bibitem[White(1980)]{white::1980}
H.~White.
\newblock A heteroskedasticity-consistent covariance matrix estimator and a
  direct test for heteroskedasticity.
\newblock \emph{Econometrica}, 48:\penalty0 817--838, 1980.

\bibitem[Zhang et~al.(2012)Zhang, Traskin, and Small]{zhang2012powerful}
K.~Zhang, M.~Traskin, and D.~S. Small.
\newblock A powerful and robust test statistic for randomization inference in
  group-randomized trials with matched pairs of groups.
\newblock \emph{Biometrics}, 68:\penalty0 75--84, 2012.

\bibitem[Zhao and Ding(2021)]{zhao2020covariate}
A.~Zhao and P.~Ding.
\newblock {Covariate-adjusted Fisher randomization tests for the average
  treatment effect}.
\newblock \emph{Journal of Econometrics}, page in press, 2021.

\end{thebibliography}


\newpage
\setcounter{equation}{0}
\setcounter{section}{0}
\setcounter{figure}{0}
\setcounter{example}{0}
\setcounter{proposition}{0}
\setcounter{corollary}{0}
\setcounter{theorem}{0}
\setcounter{table}{0}
\setcounter{condition}{0}

\renewcommand {\theproposition} {A\arabic{proposition}}
\renewcommand {\theexample} {A\arabic{example}}
\renewcommand {\thefigure} {A\arabic{figure}}
\renewcommand {\thetable} {A\arabic{table}}
\renewcommand {\theequation} {A\arabic{equation}}
\renewcommand {\thelemma} {A\arabic{lemma}}
\renewcommand {\thesection} {A\arabic{section}}
\renewcommand {\thetheorem} {A\arabic{theorem}}
\renewcommand {\thecorollary} {A\arabic{corollary}}
\renewcommand {\thecondition} {A\arabic{condition}}
\renewcommand {\thepage} {A\arabic{page}}

\setcounter{page}{1}

\begin{center}
\bf \Large
Supplementary material
\end{center}


\bigskip

Section \ref{appendix::complete-random} reviews and extends the results for completely randomized experiments.

Section \ref{sec::reg-individualdata} proves the results related to regression estimators based on individual-level data.

Section \ref{sec::reg-clustertotaldata} proves the results related to regression estimators based on scaled cluster totals.

Section  \ref{sec::weightedestimands-reg} proves the results related to the general weighted estimand and estimators.

Section \ref{sec::details-simulation} gives more details for  Sections \ref{sec:simulation-section} and \ref{sec::application}.

We will   use
 ``$\leqholder$'' for the steps invoking H\"older's inequality. Similar to the main text, ``$\se^2$'' denotes the variances of the estimators and ``var'' denotes the variances of other quantities.

\section{Completely randomized experiments with $n_i=1$}\label{appendix::complete-random}

We review and extend the results of completely randomized experiments to pave the way for the proofs in later sections. The regularity conditions here are weaker than the existing ones. We analyze  $\hat\tau$ with   standard error $\hat {\se}_\textsc{hw}(\hat\tau)$, and \cite{lin2013}'s estimator $\hat\tau^{\textup{adj}}$ with  standard error $\hat {\se}_\textsc{hw}(\hat\tau^{\textup{adj}})$.
When $n_i=1$, $N=M$, $\mathcal{T}$ and $\mathcal{C}$ index the treated and control units, respectively, and $n_\TG=eN$ and $n_\CG=(1-e)N$ are fixed. Let $\bar{Y}_\TG$ and $\bar Y_\CG$ be the means of the unit-level outcomes, and $\bar{x}_\TG$ and $\bar x_\CG$  be the means of the unit-level covariates, under treatment and control, respectively.

\subsection{Difference in means}

\begin{lemma}\label{l1}
$\hat\tau$ has expectation $E(\hat\tau)=\tau$ and variance
$
\se^2(\hat\tau) = V\{  \varepsilon_i(z)  \} / (M-1).
$

\end{lemma}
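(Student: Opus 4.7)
The plan is to treat $\hat\tau = \bar Y_\TG - \bar Y_\CG$ as the difference of two sample means taken without replacement from the finite populations $\{Y_i(1)\}_{i=1}^M$ and $\{Y_i(0)\}_{i=1}^M$, and compute the exact first and second moments using the properties of the assignment vector. Since the design forces $\sum_i Z_i = eM$ almost surely, I would first rewrite
\[
\hat\tau = \frac{1}{eM}\sum_{i=1}^M Z_i Y_i(1) - \frac{1}{(1-e)M}\sum_{i=1}^M (1-Z_i) Y_i(0),
\]
a linear function of $(Z_1,\dots,Z_M)$. From the complete-randomization construction, $\pr(Z_i=1) = e$, $\var(Z_i) = e(1-e)$, and $\cov(Z_i,Z_j) = -e(1-e)/(M-1)$ for $i\neq j$, which is the only stochastic input I will need.

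For unbiasedness, I would plug $E(Z_i)=e$ directly into the display above: the $Y_i(1)$ term has mean $\bar Y(1)$, the $Y_i(0)$ term has mean $\bar Y(0)$, so $E(\hat\tau)=\tau$. For the variance, I would compute $\var(\bar Y_\TG)$, $\var(\bar Y_\CG)$, and $\cov(\bar Y_\TG,\bar Y_\CG)$ separately. By the standard finite-population sampling identity for the mean of a simple random sample of size $eM$ from $\{Y_i(z)\}_{i=1}^M$,
\[
\var(\bar Y_\TG) = \frac{1-e}{eM}\,S^2(1),\qquad \var(\bar Y_\CG) = \frac{e}{(1-e)M}\,S^2(0),
\]
where $S^2(z) = (M-1)^{-1}\sum_i \varepsilon_i(z)^2$, and a direct covariance calculation using $\cov(Z_i,Z_j) = -e(1-e)/(M-1)$ yields
\[
\cov(\bar Y_\TG,\bar Y_\CG) = -\frac{1}{M-1}\,\frac{1}{M}\sum_{i=1}^M \varepsilon_i(1)\varepsilon_i(0).
\]
Assembling these three pieces gives Neyman's identity $\var(\hat\tau) = S^2(1)/(eM) + S^2(0)/((1-e)M) - S^2(\tau)/M$, where $S^2(\tau)=(M-1)^{-1}\sum_i\{\varepsilon_i(1)-\varepsilon_i(0)\}^2$.

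The last step is purely algebraic: rearrange the three terms with a common factor of $1/\{M(M-1)\}$, recognize $M^{-1}\sum_i\{\varepsilon_i(1)^2/e+\varepsilon_i(0)^2/(1-e)\}$ as $V_\textup{c}\{\varepsilon_i(z)\}$ and $M^{-1}\sum_i\{\varepsilon_i(1)-\varepsilon_i(0)\}^2$ as the penalty subtracted inside the definition of $V\{\varepsilon_i(z)\}$, giving $\se^2(\hat\tau) = V\{\varepsilon_i(z)\}/(M-1)$. None of the steps is a real obstacle; the only mild book-keeping nuisance is the covariance computation via $\cov(Z_i,Z_j) = -e(1-e)/(M-1)$, which must be handled carefully so that the cross terms in the expansion of $\var(\bar Y_\TG-\bar Y_\CG)$ collapse into the single $S^2(\tau)/M$ correction. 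Everything else is standard Neymanian arithmetic.
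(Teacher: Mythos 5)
Your derivation is correct and is exactly the classical Neyman argument that the paper invokes by citing Theorems 6.1 and 6.2 of \citet{imbens2015causal}: unbiasedness from $E(Z_i)=e$, the finite-population variance of each sample mean, the covariance term via $\mathrm{cov}(Z_i,Z_j)=-e(1-e)/(M-1)$, and the algebraic rearrangement into $V\{\varepsilon_i(z)\}/(M-1)$ all check out. The only difference is that you write out the computation the cited theorems encapsulate, which is fine.
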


\begin{proof}[Proof of Lemma \ref{l1}]
It follows from \citet[][Theorems 6.1 and 6.2]{imbens2015causal}.
\end{proof}

\begin{lemma}\label{DIMA}
Under Assumption \ref{assume::4}, if $M^{-1}\sum_{i=1}^M Y_i(z)^2=o(M)$, then $\hat\tau=\tau+o_\mathbb P(1)$; if $M^{-1}\sum_{i=1}^M Y_i(z)^4=o(M)$ and $M\times \se^2(\hat\tau)\not\rightarrow 0$, then $(\hat\tau-\tau)/\se(\hat\tau)
\rightsquigarrow \mathcal{N} (0,1) $ and
$
M\times \hat{\se}_\textsc{hw}^2(\hat\tau)=  V_\textup{c}\{\varepsilon_i(z)\} +o_\mathbb P(1).
$
\end{lemma}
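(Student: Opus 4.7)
The plan is to decompose the lemma into its three claims — consistency, asymptotic Normality, and validity of the heteroskedasticity-robust variance estimator — and handle each with standard finite-population tools, leveraging Lemma~\ref{l1} throughout. Since $n_i=1$, a cluster-randomized design reduces to simple random sampling of $eM$ units into treatment, so the machinery of finite-population sampling theory (Hájek-style central limit theorems and weak laws for sample means without replacement) applies directly. Write $\varepsilon_i(z)=Y_i(z)-\bar Y(z)$ and note $\sum_i \varepsilon_i(z)^2 \le \sum_i Y_i(z)^2$, so the moment hypotheses transfer from the raw to the centered outcomes.

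For consistency, Lemma~\ref{l1} gives $E(\hat\tau)=\tau$ and $\se^2(\hat\tau)=V\{\varepsilon_i(z)\}/(M-1)$. Because $V\{\varepsilon_i(z)\}\le V_\textup{c}\{\varepsilon_i(z)\}\le e^{-1}(1-e)^{-1}\cdot M^{-1}\sum_i\{\varepsilon_i(1)^2+\varepsilon_i(0)^2\}$, and because the right-hand side is $o(M)$ under the assumed second-moment condition, we conclude $\se^2(\hat\tau)=o(1)$, whence $\hat\tau=\tau+o_\mathbb{P}(1)$ by Chebyshev.

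For asymptotic Normality, I would invoke a Hájek-type finite-population central limit theorem for the sample mean under simple random sampling. Writing $\hat\tau-\tau=(eM)^{-1}\sum_{i\in\mathcal T}\varepsilon_i(1)+\{(1-e)M\}^{-1}\sum_{i\in\mathcal C}\varepsilon_i(0)$ and using the one-to-one correspondence between $\mathcal T$ and a random subset of $\{1,\dots,M\}$, the Hájek CLT applies once a Lindeberg-type condition is verified. The key input is $M^{-1}\sum_i Y_i(z)^4=o(M)$, which gives $\max_i \varepsilon_i(z)^2/\sum_i\varepsilon_i(z)^2\to 0$ together with the non-degeneracy assumption $M\times\se^2(\hat\tau)\not\to 0$; this forces the Lindeberg remainder to vanish. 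This step is the main obstacle because it requires passing from a fourth-moment bound to a Lindeberg condition under sampling without replacement, but it is standard in the finite-population literature (e.g., \citet{li2017general}), and I would cite that machinery rather than re-derive it.

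For the variance estimator, I would show that
\[
\frac{1}{|\mathcal T|}\sum_{i\in\mathcal T}\hat\varepsilon_i^2 = \frac{1}{|\mathcal T|}\sum_{i\in\mathcal T} Y_i^2 - \bar Y_{\mathcal T}^2 \;\longrightarrow\; \frac{1}{M}\sum_{i=1}^M Y_i(1)^2 - \bar Y(1)^2 \;=\; \frac{1}{M}\sum_{i=1}^M \varepsilon_i(1)^2
\]
in probability, and analogously for $\mathcal C$. Both the sample-mean convergence $\bar Y_{\mathcal T}\to\bar Y(1)$ and the sample-second-moment convergence $|\mathcal T|^{-1}\sum_{i\in\mathcal T}Y_i^2\to M^{-1}\sum_i Y_i(1)^2$ follow from a finite-population weak law of large numbers under simple random sampling, which in turn is justified by Chebyshev applied to the quantities $Y_i(z)$ and $Y_i(z)^2$; the latter needs $M^{-1}\sum_i Y_i(z)^4=o(M)$, exactly the hypothesis provided. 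Plugging these into $M\times\hat{\se}_\textsc{hw}^2(\hat\tau)=e^{-2}M^{-1}\sum_{i\in\mathcal T}\hat\varepsilon_i^2 + (1-e)^{-2}M^{-1}\sum_{i\in\mathcal C}\hat\varepsilon_i^2$ and simplifying using $|\mathcal T|=eM$, $|\mathcal C|=(1-e)M$ yields the target limit $V_\textup{c}\{\varepsilon_i(z)\}$, completing the proof.
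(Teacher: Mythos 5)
Your proposal is correct and follows essentially the same route as the paper's proof: Chebyshev via Lemma \ref{l1} for consistency, the finite-population CLT of \citet{li2017general} with the Lindeberg-type condition extracted from the fourth-moment bound $\max_i\varepsilon_i(z)^2\le\{\sum_i\varepsilon_i(z)^4\}^{1/2}=o(M)$ for asymptotic Normality, and simple-random-sampling weak laws (Chebyshev applied to $Y_i$ and $Y_i^2$, the latter requiring the fourth-moment condition) plugged into the \citet{angrist2008mostly} formula for the Huber--White variance. The only cosmetic difference is that you phrase the Lindeberg input as $\max_i\varepsilon_i(z)^2/\sum_i\varepsilon_i(z)^2\to 0$ combined with non-degeneracy, while the paper verifies $M^{-1}\max_i\varepsilon_i(z)^2=o(1)$ directly; these are equivalent given $M\times\se^2(\hat\tau)\not\rightarrow 0$.
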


\begin{proof}[Proof of Lemma \ref{DIMA}] We first prove the consistency. By Lemma \ref{l1} and Assumption \ref{assume::4},
$$E(\hat\tau)=\tau, \quad \se^2(\hat\tau)
=O(M^{-2})\sum_{i=1}^M \left\{ \varepsilon_i(1)^2+\varepsilon_i(0)^2 \right\}
\le O(M^{-2})\sum_{i=1}^M \left\{ Y_i(1)^2+Y_i(0)^2 \right\}=o(1).$$
So  Chebyshev's inequality implies the consistency of $\hat{\tau}$.

We then prove the asymptotic Normality. Because $|\v_i(z)|
=|Y_i(z)-\bar Y(z)|\le |Y_i(z)|+|\bar Y(z)|\le 2\max_{1\le i \le M}|Y_i(z)|
$, we have
\begin{eqnarray*}
M^{-1}\max_{1\le i \le M} \varepsilon_i(z)^2
& \le&  O(M^{-1})\max_{1\le i \le M} Y_i(z)^2
= O(M^{-1})\left\{
\max_{1\le i \le M} M^{-1}Y_i(z)^4
\right\}^{1/2}M^{1/2}   \\
&\le & O(M^{-1})\left\{
M^{-1}  \sum_{i=1}^M Y_i(z)^4
\right\}^{1/2}M^{1/2}   =o(1) .
\end{eqnarray*}
So the asymptotic Normality of $\hat\tau$ follows from
Theorem 1 of \citet{li2017general}.

We finally prove the conservativeness of $\hat{\se}_\textsc{hw}(\hat\tau)$. \citet[][page 228]{angrist2008mostly} gave
$$
M\times \hat{\se}_\textsc{hw}^2(\hat\tau)=M\left\{ \frac{
\sum_{i\in \mathcal{T}}(Y_i-\bar Y_\TG)^2
}{n_\TG^2}+
\frac{
\sum_{i\in \mathcal{C}}(Y_i-\bar Y_\CG)^2
}{n_\CG^2}\right\}.
$$
Using the properties of simple random sampling \citep{li2017general}, we have
$$
E\left(\frac{M\sum_{i\in \mathcal{T}}Y_i^2}{n_\TG^2}\right)
=\frac{\sum_{i=1}^M Y_i(1)^2}{eM}, \quad \var\left(\frac{M\sum_{i\in \mathcal{T}}Y_i^2}{n_\TG^2}\right)
=O(M^{-2})\sum_{i=1}^M Y_i(1)^4=o(1),
$$
which, coupled with Chebyshev's inequality, imply
$$
\frac{M\sum_{i\in \mathcal{T}}Y_i^2}{n_\TG^2} =\frac{\sum_{i=1}^M Y_i(1)^2}{eM}+o_\mathbb P(1).
$$
Similar to the proof of consistency, we can show
$
\bar{Y}_\TG=\bar Y (1)+o_\mathbb P(1).
$
These two results imply that
\begin{eqnarray}
\label{eqn::S5}
M\left\{ \frac{
\sum_{i\in \mathcal{T}}(Y_i-\bar Y_\TG)^2
}{n_\TG^2}\right\}=
   \frac{1}{M}\sum_{i=1}^M \frac{ \varepsilon_i(1)^2}{e}+o_\mathbb P(1).
\end{eqnarray}
A result similar to \eqref{eqn::S5} holds for the control group. Therefore, the lemma holds.
\end{proof}

\subsection{\citet{lin2013}'s estimator}

\begin{assumption}\label{assume::S2}
The potential outcomes satisfy
$M^{-1} \sum_{i=1}^M Y_i(z)^2=o(M)
$
for $z=0,1$. The covariates satisfy
$
M^{-1} \sum_{i=1}^M
||x_i||_{\infty}^2 =O(1)$ and $M^{-1}  \max_{1\le i \le M} ||x_i||_{\infty}^2=o(1).$

\end{assumption}

Let $\hat Q(1)$ and $\hat Q(0)$ be the coefficients of $x_i$ in the OLS fit of $Y_i$ on $(1,x_i)$ using the observed data under treatment and control, respectively.

\begin{lemma}\label{l11}  Under Assumptions \ref{assume::4}, \ref{a4} and \ref{assume::S2}, $\hat Q(z) = o_\mathbb P(M^{1/2})$ for $z=0,1.$
\end{lemma}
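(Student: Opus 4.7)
The plan is to express the OLS coefficient as $\hat Q(1) = \hat V_\TG^{-1} \hat U_\TG$, where $\hat V_\TG = n_\TG^{-1}\sum_{i\in\mathcal T}(x_i-\bar x_\TG)(x_i-\bar x_\TG)^\T$ and $\hat U_\TG = n_\TG^{-1}\sum_{i\in\mathcal T}(x_i-\bar x_\TG)Y_i$, and then to show that $\hat V_\TG^{-1} = O_\mathbb P(1)$ while $\hat U_\TG = o_\mathbb P(M^{1/2})$; the argument for $\hat Q(0)$ is identical. Since $\hat Q(1)$ is invariant to an affine shift of the covariates (the intercept absorbs it), I may assume $\bar x = M^{-1}\sum_i x_i = 0$ without loss of generality.

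First, I would show $\hat V_\TG$ converges in probability to the invertible limit of $M^{-1}\sum_i x_ix_i^\T$ guaranteed by Assumption \ref{a4}. Under simple random sampling, $n_\TG^{-1}\sum_{i\in\mathcal T} x_ix_i^\T$ has mean $M^{-1}\sum_i x_ix_i^\T$ and finite-sample variance bounded by $O(M^{-1})\cdot M^{-1}\sum_i \|x_i\|_\infty^4$. Assumption \ref{assume::S2} gives $\max_i \|x_i\|_\infty^2 = o(M)$ together with $M^{-1}\sum_i \|x_i\|_\infty^2 = O(1)$, so $M^{-1}\sum_i \|x_i\|_\infty^4 \le \max_i \|x_i\|_\infty^2\cdot M^{-1}\sum_i \|x_i\|_\infty^2 = o(M)$, and Chebyshev's inequality yields the convergence in probability. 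Likewise $\bar x_\TG$ has mean $\bar x = 0$ and variance $O(M^{-1})$, so $\bar x_\TG = O_\mathbb P(M^{-1/2})$ and the centering correction $\bar x_\TG \bar x_\TG^\T = O_\mathbb P(M^{-1})$ is asymptotically negligible. Hence $\hat V_\TG^{-1} = O_\mathbb P(1)$.

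Second, I would bound $\hat U_\TG$ via a deterministic Cauchy--Schwarz argument. Using $Y_i = Y_i(1)$ on $\mathcal T$ and extending the sum to all units,
\[
\left\|n_\TG^{-1}\sum_{i\in\mathcal T} x_i Y_i\right\|_\infty \le n_\TG^{-1}\Bigl(\sum_i \|x_i\|_\infty^2\Bigr)^{1/2}\Bigl(\sum_i Y_i(1)^2\Bigr)^{1/2} = O(M^{-1})\cdot O(M^{1/2})\cdot o(M) = o(M^{1/2}),
\]
since Assumption \ref{assume::S2} gives $\sum_i \|x_i\|_\infty^2 = O(M)$ and $\sum_i Y_i(1)^2 = o(M^2)$. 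The residual cross-term $\bar x_\TG \bar Y_\TG$ contributes at most $O_\mathbb P(M^{-1/2})\cdot o(M^{1/2}) = o_\mathbb P(1)$, upon noting that $\bar Y(1)^2 \le M^{-1}\sum_i Y_i(1)^2 = o(M)$ and $\bar Y_\TG = \bar Y(1) + o_\mathbb P(1)$ by an analogous Chebyshev bound. Combining the two pieces yields $\hat Q(1) = O_\mathbb P(1)\cdot o_\mathbb P(M^{1/2}) = o_\mathbb P(M^{1/2})$, and $\hat Q(0)$ follows by symmetry.

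The main obstacle is purely bookkeeping under the weaker moment conditions of Assumption \ref{assume::S2}: only a second moment on $Y_i(z)$ is assumed, so I cannot invoke an $L^2$ variance bound on $n_\TG^{-1}\sum_{i\in\mathcal T} x_i Y_i$ directly and must instead exploit the deterministic Cauchy--Schwarz inequality together with $\max_i \|x_i\|_\infty^2 = o(M)$ to convert the second-moment bound on $\|x_i\|_\infty$ into a sufficient control of its fourth moment for the Gram matrix step. No deeper difficulty arises.
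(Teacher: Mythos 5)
Your proposal is correct and follows the same overall decomposition as the paper's proof: write $\hat Q(1)$ as the inverse of the centered Gram matrix of the covariates times the centered cross-product, show the Gram matrix converges in probability to the finite invertible limit guaranteed by Assumption \ref{a4} (so its inverse is $O_\mathbb P(1)$), and show the cross-product is $o_\mathbb P(M^{1/2})$. The one substantive difference is how the cross-product is bounded. The paper argues probabilistically: it computes $E(M^{-1}\sum_{i\in\mathcal T}x_iY_i)=O(1)$ from Assumption \ref{a4} and bounds the sampling variance by $O(M^{-2})\sum_i x_i^2Y_i(1)^2\le O(1)\{M^{-1}\sum_iY_i(1)^2\}(M^{-1}\max_{1\le i\le M}x_i^2)=o(M)$, then applies Chebyshev's inequality. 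You instead bound the same quantity deterministically via Cauchy--Schwarz, $n_\TG^{-1}(\sum_i\|x_i\|_\infty^2)^{1/2}(\sum_iY_i(1)^2)^{1/2}=O(M^{-1})O(M^{1/2})o(M)=o(M^{1/2})$, which is valid since $n_\TG=eM$ is fixed here. Your route is slightly more elementary for that step --- it does not invoke the convergence of $M^{-1}\sum_ix_iY_i(1)$ from Assumption \ref{a4} at all, only the second-moment conditions of Assumption \ref{assume::S2} --- while the paper's variance calculation is the one that generalizes to the clustered setting later on. The treatment of the centering terms $\bar x_\TG$, $\bar Y_\TG$ and of the Gram matrix is identical in substance, and both arguments deliver exactly the $o_\mathbb P(M^{1/2})$ rate claimed.
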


\begin{proof}[Proof of Lemma \ref{l11}]
By symmetry, we only analyze
\begin{eqnarray}\label{e4}
\hat Q(1) =\left\{\sum_{i\in \mathcal{T}} (x_i-\bar x_\TG) (x_i-\bar x_\TG)^{\T}\right\}^{-1}
\sum_{i\in \mathcal{T}} (x_i-\bar x_\TG) (Y_i-\bar{Y}_\TG).
\end{eqnarray}
Without loss of generality, we assume that $x_i$ is one-dimensional. We first consider the numerator of $\hat Q(1) $.
By Assumption \ref{a4}, $
E(M^{-1}\sum_{i\in \mathcal{T}} x_iY_i)=M^{-1}\sum_{i=1}^M  ex_iY_i(1)=O(1)
$.
The properties of simple random sampling \citep{li2017general} and Assumption \ref{assume::S2} imply that
\begin{eqnarray*}
\var\left(M^{-1}\sum_{i\in \mathcal{T}} x_iY_i\right)
=O(M^{-2})\sum_{i=1}^Mx_i^2Y_i(1)^2
&\le& O(1)\left\{ M^{-1} \sum_{i=1}^M Y_i(1)^2\right\} \left(  M^{-1} \max_{1\le i \le M} x_i^2 \right) \\
&=& O(1) o(M) o(1) = o(M).
\end{eqnarray*}
By Chebyshev's inequality, we have
$
M^{-1}\sum_{i\in \mathcal{T}} x_iY_i=o_\mathbb P(M^{1/2}).
$
Similar to the proof of Lemma \ref{DIMA}, we can show that $\bar x_\TG=o_\mathbb P(1)$ and $\bar Y_\TG=\bar Y(1) +o_\mathbb P(1)$. Therefore,
\begin{eqnarray}\label{e5}
 M^{-1}\sum_{i\in \mathcal{T}} (x_i-\bar x_\TG) (Y_i-\bar{Y}_\TG)=o_\mathbb P(M^{1/2}).
\end{eqnarray}

We then consider the denominator of $\hat Q(1) $.
Similar to the proof of \eqref{eqn::S5}, we have
\begin{eqnarray}\label{e6}
\frac{1}{M}\sum_{i\in \mathcal{T}}(x_i-\bar{x}_\TG)^2=
\frac{e}{M}\sum_{i=1}^M x_i^2+o_\mathbb P(1) .
\end{eqnarray}
By Assumptions \ref{assume::4} and \ref{a4}, $e\sum_{i=1}^M x_i^2/M$ converges to a finite and invertible matrix. So by \eqref{e6},
\begin{eqnarray}\label{e66}
\left\{\frac{1}{M}\sum_{i\in \mathcal{T}}(x_i-\bar{x}_\TG)^2\right\}^{-1}=
\left(\frac{e}{M}\sum_{i=1}^M x_i^2\right)^{-1}+o_\mathbb P(1)=O_\mathbb P(1) .
\end{eqnarray}
From \eqref{e4}, \eqref{e5} and \eqref{e66}, we have $\hat Q(1)=o_\mathbb P(M^{1/2}).$
\end{proof}

\begin{lemma}\label{l14}
Let $r_i(z)$ be the residual from the OLS fit of $\varepsilon_i(z)$ on $x_i$, and
let
$
\se^2(\hat\tau^{\textup{adj}}) = V\{ r_i(z) \} / M.
   $
Under Assumptions \ref{assume::4}, \ref{a4} and \ref{assume::S2}, $ \hat\tau^{\textup{adj}}=\tau+o_\mathbb P(1)$; if further $M^{-1}\sum_{i=1}^M Y_i(z)^2=O(1),$ $M^{-1} \max_{1\le i \le M} Y_i(z)^2=o(1)$, and
$M\times \se^2(\hat\tau^{\textup{adj}})\not\rightarrow 0$, then $(\hat\tau^{\textup{adj}}-\tau)/\se(\hat\tau^{\textup{adj}})
\rightsquigarrow \mathcal{N} (0,1) $ and
$
M\times \hat{\se}_\textsc{hw}^2(\hat\tau^{\textup{adj}}) = V_\textup{c}\{ r_i(z)  \} + o_\mathbb P(1).
$
\end{lemma}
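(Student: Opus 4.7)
The plan is to reduce Lin's estimator to a difference-in-means of ``adjusted'' potential outcomes and then invoke Lemma \ref{DIMA}. Let $Q(z)$ denote the population OLS coefficient of $Y_i(z)$ on $x_i$, and set $Y_i^{\textup{adj}}(z) = Y_i(z) - x_i^{\T} Q(z)$. Since $\bar x = 0$, we have $\bar Y^{\textup{adj}}(z) = \bar Y(z)$, and the centered adjusted outcomes coincide with $r_i(z)$. Let $\hat\tau^{\ast}$ denote the difference-in-means applied to the $Y_i^{\textup{adj}}(z)$'s. The OLS algebra for the fully-interacted model with centered $x_i$ yields
\[
\hat\tau^{\textup{adj}} = \hat\tau - \bar x_\TG^{\T} \hat Q(1) + \bar x_\CG^{\T} \hat Q(0),
\qquad
\hat\tau^{\ast} = \hat\tau - \bar x_\TG^{\T} Q(1) + \bar x_\CG^{\T} Q(0),
\]
so $\hat\tau^{\textup{adj}} - \hat\tau^{\ast} = -\bar x_\TG^{\T}\{\hat Q(1) - Q(1)\} + \bar x_\CG^{\T}\{\hat Q(0) - Q(0)\}$.

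For consistency, standard simple-random-sampling variance calculations combined with Assumption \ref{assume::S2} give $\bar x_\TG = O_{\mathbb P}(M^{-1/2})$ and similarly for $\bar x_\CG$. Coupled with Lemma \ref{l11}'s bound $\hat Q(z) = o_{\mathbb P}(M^{1/2})$, the adjustment terms in the first identity are $o_{\mathbb P}(1)$, whence $\hat\tau^{\textup{adj}} = \hat\tau + o_{\mathbb P}(1)$; applying Lemma \ref{DIMA} (whose consistency hypothesis $M^{-1}\sum Y_i(z)^2 = o(M)$ is supplied by Assumption \ref{assume::S2}) delivers the probability limit $\tau$.

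For asymptotic Normality, the strengthened moments $M^{-1}\sum Y_i(z)^2 = O(1)$ and $M^{-1}\max Y_i(z)^2 = o(1)$, together with Assumptions \ref{assume::S2} and \ref{a4}, imply via H\"older's inequality that $M^{-1}\sum\{Y_i^{\textup{adj}}(z)\}^4 = o(M)$; applying Lemma \ref{DIMA} to $\hat\tau^{\ast}$ then gives $(\hat\tau^{\ast} - \tau)/\se(\hat\tau^{\textup{adj}}) \rightsquigarrow \mathcal{N}(0,1)$. To transfer the limit to $\hat\tau^{\textup{adj}}$ I need the remainder bound $\hat\tau^{\textup{adj}} - \hat\tau^{\ast} = o_{\mathbb P}(M^{-1/2})$, which requires the refinement $\hat Q(z) = Q(z) + o_{\mathbb P}(1)$. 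This is obtained by sharpening the argument of Lemma \ref{l11}: under the strengthened moments the numerator $M^{-1}\sum_{i\in\TG}(x_i-\bar x_\TG)(Y_i-\bar Y_\TG)$ concentrates around $e \cdot M^{-1}\sum x_iY_i(1)$ up to $o_{\mathbb P}(1)$, the denominator around the invertible limit of $e \cdot M^{-1}\sum x_i x_i^{\T}$ (Assumption \ref{a4}), and Slutsky combined with $\bar x_\TG = O_{\mathbb P}(M^{-1/2})$ produces the desired $o_{\mathbb P}(M^{-1/2})$ remainder.

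For the variance estimator, algebraic manipulation of the Huber--White formula for the fully-interacted regression reduces $M \cdot \hat\se_\textsc{hw}^2(\hat\tau^{\textup{adj}})$, up to $o_{\mathbb P}(1)$, to the arm-wise HW variance applied to the within-arm residuals $\hat r_i = Y_i - \bar Y_{\TG\text{ or }\CG} - (x_i - \bar x_{\TG\text{ or }\CG})^{\T}\hat Q(Z_i)$. Substituting $\hat Q(z) = Q(z) + o_{\mathbb P}(1)$ and repeating the calculation leading to \eqref{eqn::S5}, now applied to the adjusted outcomes $Y_i^{\textup{adj}}(z)$, yields the limit $V_\textup{c}\{r_i(z)\}$. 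The main obstacle is not a single conceptual step but rather the bookkeeping needed to verify, under the weak moment conditions assumed, that every cross term involving $\hat Q(z) - Q(z)$ in both the remainder $\hat\tau^{\textup{adj}} - \hat\tau^{\ast}$ and the expanded squared residuals is $o_{\mathbb P}(M^{-1/2})$ or $o_{\mathbb P}(1)$ as appropriate; reducing the HW variance from the interacted regression to the plug-in form is likewise the step most prone to algebraic error.
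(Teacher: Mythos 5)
Your proof is correct, and it is worth separating the two halves. The consistency step coincides exactly with the paper's: both write $\hat\tau^{\textup{adj}} = \hat\tau - \bar x_\TG^\T\hat Q(1) + \bar x_\CG^\T\hat Q(0)$ and kill the adjustment terms using $\bar x_\TG = O_\mathbb P(M^{-1/2})$ together with the crude bound $\hat Q(z)=o_\mathbb P(M^{1/2})$ from Lemma \ref{l11}. Where you genuinely diverge is in the asymptotic Normality and variance-estimation claims: the paper simply cites Proposition 3 and Theorem 8 of \citet{li2019rerandomization} (with a remark that the cited proof still applies when $M^{-1}\sum_i Y_i(z)^2=O(1)$ rather than convergent), whereas you reconstruct the argument by comparing $\hat\tau^{\textup{adj}}$ with the difference-in-means $\hat\tau^{\ast}$ of the adjusted potential outcomes $Y_i(z)-x_i^\T Q(z)$ and invoking Lemma \ref{DIMA}. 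Your route needs one ingredient that Lemma \ref{l11} does not state, namely $\hat Q(z)=Q(z)+o_\mathbb P(1)$, and you correctly identify that the strengthened moment conditions deliver it: the variance of the numerator of $\hat Q(1)$ is bounded by $O(M^{-2})\sum_i ||x_i||_{\infty}^2 Y_i(1)^2 \le O(M^{-1})\bigl(M^{-1}\max_i ||x_i||_{\infty}^2\bigr)\sum_i Y_i(1)^2=o(1)$, and the denominator converges to an invertible limit by Assumption \ref{a4}. The fourth-moment hypothesis of Lemma \ref{DIMA} for the adjusted outcomes also checks out, since $\sum_i Y_i(z)^4\le \max_i Y_i(z)^2\sum_i Y_i(z)^2=o(M^2)$ and likewise for $||x_i||_{\infty}^4$. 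What the paper's citation buys is brevity; what your reconstruction buys is a self-contained argument that makes explicit which moment condition drives each step, and it parallels the strategy the paper itself uses for the clustered analogues (Theorems \ref{thm::tau-i} and \ref{thm2}). Your treatment of the Huber--White variance is only a sketch, but the claimed reduction to arm-wise residual sums of squares --- valid because the off-block entries of the Gram matrix vanish when $\bar x_\TG,\bar x_\CG=o_\mathbb P(1)$ --- is precisely the calculation the paper carries out in detail in Section \ref{sec::2.2} for the cluster-robust case, so no new idea is missing.
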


\begin{proof}[Proof of Lemma \ref{l14}]
By the property of the OLS,
$$\hat\tau^{\textup{adj}}=n_\TG^{-1}\sum_{i\in \mathcal T} \{Y_i-x_i^{\T}\hat Q(1)\}- n_\CG^{-1}\sum_{i\in \mathcal C}\{Y_i-x_i^{\T}\hat{Q}(0)\}.$$
By Lemma \ref{l1}, $ M^{-1}\sum_{i=1}^M
||x_i||_{\infty}^2=O(1)$, and Chebyshev's inequality, we have $\bar x_\TG=O_\mathbb P(M^{-1/2})$. Lemma \ref{l11} ensures $\hat Q(1)=o_\mathbb P(M^{1/2})$, which further implies $\bar x_\TG^{\T}\hat Q(1)=o_\mathbb P(1)$. Similarly, $\bar x_\CG^{\T}\hat Q(0)=o_\mathbb P(1)$. Thus, $\hat\tau^{\textup{adj}}-\hat\tau=o_\mathbb P(1)$. By Lemma \ref{DIMA}, we have $\hat\tau^{\textup{adj}}-\tau=o_\mathbb P(1)$.

The asymptotic Normality and variance estimation of $\hat\tau^{\textup{adj}}$ follow from \citet[][Proposition 3 and Theorem 8]{li2019rerandomization}. Although \citet{li2019rerandomization} assumed that $M^{-1} \sum_{i=1}^M Y_i(z)^2$ has a finite limit while we only assume $M^{-1} \sum_{i=1}^M Y_i(z)^2=O(1)$, the proof of \citet{li2019rerandomization} is still applicable.
\end{proof}

\section{Regression estimators based on individual-level data}
\label{sec::reg-individualdata}

\subsection{Some basic lemmas and their proofs}\label{subsection::lemmas-individualregressions}

We first show Lemma \ref{lem::18}, which is useful for deriving the probability limit of the inverse of a matrix.
\begin{lemma}\label{lem::18}
If $\Delta=O_\mathbb P(\mu)$ with $\mu=o(1)$, and $\Lambda$ converges in probability to a finite and invertible matrix, then
$(  {\Lambda} +  {\Delta} ) ^ { - 1 } -  {\Lambda} ^ { - 1 } =O_\mathbb P(\mu).$
\end{lemma}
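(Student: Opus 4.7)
The plan is to reduce the problem to a standard resolvent-type identity and then use the continuous mapping theorem to control the residual factor. Specifically, I would start from the algebraic identity
\begin{equation*}
(\Lambda + \Delta)^{-1} - \Lambda^{-1} = -(\Lambda + \Delta)^{-1}\,\Delta\,\Lambda^{-1},
\end{equation*}
which follows by multiplying the difference $\Lambda - (\Lambda+\Delta) = -\Delta$ on the left by $(\Lambda+\Delta)^{-1}$ and on the right by $\Lambda^{-1}$. This decomposition isolates the $O_\mathbb P(\mu)$ factor $\Delta$ and leaves two inverse factors that I need to show are $O_\mathbb P(1)$.

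The second step is to verify that both $\Lambda^{-1}$ and $(\Lambda+\Delta)^{-1}$ are $O_\mathbb P(1)$. Let $\Lambda_0$ denote the finite and invertible probability limit of $\Lambda$. Since $\mu = o(1)$, we have $\Delta = O_\mathbb P(\mu) = o_\mathbb P(1)$, so $\Lambda + \Delta \rightarrow_\mathbb P \Lambda_0$. The map $A \mapsto A^{-1}$ is continuous on the open set of invertible matrices, so by the continuous mapping theorem, $\Lambda^{-1} \rightarrow_\mathbb P \Lambda_0^{-1}$ and $(\Lambda+\Delta)^{-1} \rightarrow_\mathbb P \Lambda_0^{-1}$. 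In particular, both are $O_\mathbb P(1)$.

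The final step is to combine the pieces: using a submultiplicative matrix norm,
\begin{equation*}
\|(\Lambda + \Delta)^{-1} - \Lambda^{-1}\| \le \|(\Lambda + \Delta)^{-1}\|\cdot \|\Delta\|\cdot \|\Lambda^{-1}\| = O_\mathbb P(1)\cdot O_\mathbb P(\mu)\cdot O_\mathbb P(1) = O_\mathbb P(\mu),
\end{equation*}
which is the desired conclusion. There is essentially no main obstacle here, since the identity is exact and the $O_\mathbb P(1)$ control of the inverses is an immediate consequence of continuous mapping; the only care needed is to ensure $\Lambda+\Delta$ is invertible with probability tending to one, which is guaranteed by $\Lambda+\Delta \rightarrow_\mathbb P \Lambda_0$ invertible together with the continuity of the determinant.
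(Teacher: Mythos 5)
Your proof is correct and follows essentially the same strategy as the paper: an exact algebraic (resolvent) identity that factors out $\Delta$, combined with the observation that $\Lambda^{-1}$ and $(\Lambda+\Delta)^{-1}$ are both $O_\mathbb{P}(1)$ because $\Lambda+\Delta \rightarrow_\mathbb{P} \Lambda_0$ invertible. The only difference is cosmetic: you use the one-term identity $(\Lambda+\Delta)^{-1}-\Lambda^{-1}=-(\Lambda+\Delta)^{-1}\Delta\Lambda^{-1}$, whereas the paper invokes a two-term second-order expansion from \citet{li2019rerandomization}; both yield the same $O_\mathbb{P}(1)\cdot O_\mathbb{P}(\mu)\cdot O_\mathbb{P}(1)$ bound.
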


\begin{proof}[Proof of Lemma \ref{lem::18}]
By $\Delta=o_\mathbb P(1)$, we have ${\Lambda} +  {\Delta}={\Lambda} +o_\mathbb P(1)$. Because $\Lambda$ converges in probability to a finite and invertible matrix,
$
 (  {\Lambda} +  {\Delta} ) ^ { - 1 }={\Lambda}^{-1} +o_\mathbb P(1).
$
\citet[][Lemma A10]{li2019rerandomization} stated that
$$
(  {\Lambda} +  {\Delta} ) ^ { - 1 } -  {\Lambda} ^ { - 1 } = {\Lambda} ^ { - 1 }  {\Delta} (  {\Lambda} +  {\Delta} ) ^ { - 1 }  {\Delta}  {\Lambda} ^ { - 1 } - {\Lambda} ^ { - 1 }  {\Delta}  {\Lambda} ^ { - 1 },
$$
which implies
$
(  {\Lambda} +  {\Delta} ) ^ { - 1 } -  {\Lambda} ^ { - 1 } =
O_\mathbb P(1) O_\mathbb P(\mu) O_\mathbb P(1) O_\mathbb P(\mu) O_\mathbb P(1) - O_\mathbb P(1) O_\mathbb P(\mu) O_\mathbb P(1)
= O_\mathbb P(\mu).
$
\end{proof}

In a cluster-randomized experiment with varying cluster sizes, $\bar Y_\TG, \ \bar x_\TG$, and $n_\TG/N$ are random but have probability limits $\bar Y(1), \ \bar x$, and $e$, respectively. The following lemma is useful for deriving their properties.

\begin{lemma}\label{moment}
Under Assumption \ref{assume::4}, if $\left(a_{ij}\right)_{1\le i \le M,1\le j\le n_i}$ satisfies $ N^{-1}\sum_{ij}a_{ij}^2=O(1)$, then
\begin{eqnarray}\label{e91}
N^{-1}\sum_{ij\in \mathcal T} a_{ij}- N^{-1} \sum_{ij} a_{ij}e=O_\mathbb P  (\Omega^{1/2})   .
\end{eqnarray}
When $a_{ij}=1$ for all $(i,j)$, \eqref{e91} reduces to
$
n_\TG/N-e=O_{\mathbb P}   (\Omega^{1/2})   .
$
If further $\Omega=o(1)$, then
\begin{eqnarray}\label{e92}
n_\TG^{-1} \sum_{ij\in \mathcal T} a_{ij}-  N^{-1} \sum_{ij} a_{ij}=O_\mathbb P
  (\Omega^{1/2})   .
\end{eqnarray}
\end{lemma}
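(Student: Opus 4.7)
The plan is to reduce everything to a simple random sampling calculation over the $M$ clusters. First I would aggregate at the cluster level: writing $A_i = \sum_{j=1}^{n_i} a_{ij}$, we have $\sum_{ij\in\mathcal{T}} a_{ij} = \sum_{i=1}^M Z_i A_i$, where $(Z_1,\ldots,Z_M)$ is a uniform random permutation of $eM$ ones and $(1-e)M$ zeros. This turns the original sum over individual units into a sum of $eM$ values sampled without replacement from the finite population $\{A_1,\ldots,A_M\}$. Standard SRS formulas then give $E\bigl(\sum_i Z_i A_i\bigr) = e\sum_i A_i = e\sum_{ij} a_{ij}$ and $\operatorname{var}\bigl(\sum_i Z_i A_i\bigr) \le e(1-e)\,\frac{M}{M-1}\sum_{i=1}^M A_i^2$.

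The key quantitative step is to control $\sum_i A_i^2$ in terms of $\Omega$ and $\sum_{ij} a_{ij}^2$. By Cauchy--Schwarz, $A_i^2 \le n_i \sum_{j=1}^{n_i} a_{ij}^2$, so
\begin{equation*}
\sum_{i=1}^M A_i^2 \;\le\; \Bigl(\max_{1\le i \le M} n_i\Bigr) \sum_{ij} a_{ij}^2 \;=\; N\Omega \cdot N \cdot \Bigl( N^{-1} \sum_{ij} a_{ij}^2 \Bigr) \;=\; O(N^2 \Omega).
\end{equation*}
Dividing the variance bound by $N^2$ yields $\operatorname{var}\bigl(N^{-1}\sum_i Z_i A_i\bigr) = O(\Omega)$, and Chebyshev's inequality then delivers \eqref{e91}. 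Claim (b) is an immediate specialization to $a_{ij} \equiv 1$, for which $N^{-1}\sum_{ij} a_{ij}^2 = 1 = O(1)$.

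For \eqref{e92} I would combine the two previous bounds through a ratio argument. Write $\alpha_M = N^{-1}\sum_{ij \in \mathcal{T}} a_{ij}$, $\mu_M = N^{-1}\sum_{ij} a_{ij}$, and $\beta_M = n_\TG/N$. Claims (a) and (b) give $\alpha_M = e\mu_M + O_\mathbb{P}(\Omega^{1/2})$ and $\beta_M = e + O_\mathbb{P}(\Omega^{1/2})$. Then
\begin{equation*}
\frac{\alpha_M}{\beta_M} - \mu_M \;=\; \frac{\alpha_M - \beta_M \mu_M}{\beta_M} \;=\; \frac{O_\mathbb{P}(\Omega^{1/2}) - \mu_M \cdot O_\mathbb{P}(\Omega^{1/2})}{\beta_M}.
\end{equation*}
Since $\mu_M^2 \le N^{-1}\sum_{ij} a_{ij}^2 = O(1)$ by Jensen, the numerator is $O_\mathbb{P}(\Omega^{1/2})$; and since $\Omega = o(1)$, $\beta_M \to e \in (0,1)$ in probability, so $1/\beta_M = O_\mathbb{P}(1)$. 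This yields \eqref{e92}.

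The proof is essentially routine, and I do not expect a serious obstacle. The only subtle point is recognizing that the relevant variance scaling is driven by $\max_i n_i$ rather than by $N/M$, which is what forces the $\Omega^{1/2}$ rate rather than the $M^{-1/2}$ rate one would get under equal cluster sizes; the Cauchy--Schwarz step on $A_i^2$ is what makes this visible. Once that bound is in hand, everything else is bookkeeping with Chebyshev and the standard SRS variance formula.
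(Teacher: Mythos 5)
Your proposal is correct and follows essentially the same route as the paper: aggregate to cluster totals $A_i=\sum_j a_{ij}$, apply the simple-random-sampling mean and variance formulas, bound $\sum_i A_i^2$ via Cauchy--Schwarz by $N\Omega\sum_{ij}a_{ij}^2$, and conclude by Chebyshev; for \eqref{e92} the paper expands $N/n_\TG=1/e+O_\mathbb{P}(\Omega^{1/2})$ via its inverse-perturbation lemma and multiplies out, which is equivalent to your ratio manipulation. Your explicit note that $\mu_M=O(1)$ by Jensen is a small point the paper leaves implicit but is needed in both versions.
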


\begin{proof}[Proof of Lemma \ref{moment}]
First,
$E\left(
\sum_{ij\in \mathcal T}  a_{ij}/N \right)=
\sum_{ij} a_{ij}e/N.$
By the properties of simple random sampling \citep{li2017general},
\begin{eqnarray*}
\var\left( \sum_{ij\in \mathcal T}a_{ij}/N\right)
&=&
\frac{(eM)^2}{N^2}\var\left( \frac{1}{eM} \sum_{i=1}^M
Z_i\sum_{j=1}^{n_{i}}a_{ij} \right)
\le  \frac{(eM)^2}{N^2}\frac{
\sum_{i=1}^M \left(\sum_{j=1}^{n_i}a_{ij}\right)^2
}{eM(M-1)}\\
&\leqholder&
\frac{eM}{(M-1)N^2}  \sum_{i=1}^M n_i \sum_{j=1}^{n_i}a_{ij}^2
\leq \frac{eM}{M-1}  \Omega \left( N^{-1} \sum_{ij} a_{ij}^2 \right)
=O(1)  \Omega O(1)
=O\left( \Omega\right).
\end{eqnarray*}
Chebyshev's inequality ensures \eqref{e91}.

As a special case, \eqref{e91} implies $n_\TG/N-e=O_\mathbb P(\Omega^{1/2})$. If we further assume $\Omega=o(1)$, then Lemma \ref{lem::18} implies $N/n_\TG=1/e+O_{\mathbb P}   (\Omega^{1/2})   $ with $\Lambda + \Delta = n_\TG/N, \Lambda = e$ and $\Delta = n_\TG/N-e$. So
\begin{eqnarray*}
n_\TG^{-1} \sum_{ij\in \mathcal T} a_{ij}
&=&N^{-1}\sum_{ij\in \mathcal T} a_{ij}\times  ( N/n_\TG )
= \left\{  N^{-1} \sum_{ij} a_{ij} e +O_\mathbb P  (\Omega^{1/2}) \right\}  \left\{ 1/e+O_{\mathbb P}   (\Omega^{1/2})  \right\} \\
&=& N^{-1} \sum_{ij} a_{ij}  +O_{\mathbb P}   (\Omega^{1/2})+  O_{\mathbb P}   (\Omega)
= N^{-1} \sum_{ij} a_{ij}  +O_{\mathbb P}   (\Omega^{1/2}),
\end{eqnarray*}
implying \eqref{e92}.
\end{proof}

Let $\hat{Q}_\textsc{i}(1)$ and $\hat{Q}_\textsc{i}(0)$ be the coefficients of $x_{ij}$ in the OLS fit of $Y_{ij}$ on $(1,x_{ij})$ using the units in the treatment and control groups, respectively.

\begin{lemma}\label{l3}
Under Assumptions \ref{assume::4}--\ref{a4}, if $\Omega  =o(1)$, then $\hat{Q}_\textsc{i}(z)-Q_\textsc{i}(z)=O_\mathbb P  (\Omega^{1/2})   $ for $z=0,1.$
\end{lemma}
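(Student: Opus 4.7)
The plan is to apply the Frisch--Waugh--Lovell representation and compare $\hat Q_\textsc{i}(z)$ to $Q_\textsc{i}(z)$ entrywise via Lemmas \ref{moment} and \ref{lem::18}; by symmetry I treat $z=1$. Write $\hat Q_\textsc{i}(1) = \hat A^{-1}\hat b$, where
\[
\hat A = n_\TG^{-1}\sum_{ij\in\mathcal T}(x_{ij}-\bar x_\TG)(x_{ij}-\bar x_\TG)^\T, \quad \hat b = n_\TG^{-1}\sum_{ij\in\mathcal T}(x_{ij}-\bar x_\TG)(Y_{ij}-\bar Y_\TG).
\]
Since $\bar x = 0$ and $\varepsilon_{ij}(z)$ has zero mean, the population analogue simplifies to $Q_\textsc{i}(1) = A_0^{-1} b_0$ with $A_0 = N^{-1}\sum_{ij}x_{ij}x_{ij}^\T$ and $b_0 = N^{-1}\sum_{ij}x_{ij}Y_{ij}(1) = N^{-1}\sum_{ij}x_{ij}\varepsilon_{ij}(1)$.

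First I would verify that Lemma \ref{moment} applies with $a_{ij}$ taken to be each scalar entry of $x_{ij}$, $Y_{ij}(1)$, $x_{ij}x_{ij}^\T$, and $x_{ij}Y_{ij}(1)$. The second-moment bound $N^{-1}\sum_{ij}a_{ij}^2 = O(1)$ follows from Assumption \ref{assume::1} and H\"older's inequality, e.g.
\[
N^{-1}\sum_{ij}\|x_{ij}\|_\infty^2 Y_{ij}(1)^2 \leqholder \Bigl(N^{-1}\sum_{ij}\|x_{ij}\|_\infty^4\Bigr)^{1/2}\Bigl(N^{-1}\sum_{ij}Y_{ij}(1)^4\Bigr)^{1/2} = O(1).
\]
Lemma \ref{moment}, combined with the hypothesis $\Omega=o(1)$, then gives $\bar x_\TG = O_\mathbb P(\Omega^{1/2})$, $\bar Y_\TG - \bar Y(1) = O_\mathbb P(\Omega^{1/2})$, $n_\TG^{-1}\sum_{ij\in\mathcal T}x_{ij}x_{ij}^\T - A_0 = O_\mathbb P(\Omega^{1/2})$, and $n_\TG^{-1}\sum_{ij\in\mathcal T}x_{ij}Y_{ij}(1) - b_0 = O_\mathbb P(\Omega^{1/2})$.

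Next I would expand $\hat A$ and $\hat b$, absorbing the centering cross terms. Because $\bar x_\TG\bar x_\TG^\T = O_\mathbb P(\Omega)$ and $\bar x_\TG \bar Y_\TG = O_\mathbb P(\Omega^{1/2})\cdot O_\mathbb P(1) = O_\mathbb P(\Omega^{1/2})$, the above bounds yield $\hat A = A_0 + O_\mathbb P(\Omega^{1/2})$ and $\hat b = b_0 + O_\mathbb P(\Omega^{1/2})$. Assumption \ref{a4} guarantees that $A_0$ converges to a finite invertible matrix, so Lemma \ref{lem::18} applied with $\Lambda = A_0$ and $\Delta = \hat A - A_0$ gives $\hat A^{-1} = A_0^{-1} + O_\mathbb P(\Omega^{1/2})$. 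Multiplying the two expansions,
\[
\hat Q_\textsc{i}(1) - Q_\textsc{i}(1) = \hat A^{-1}\hat b - A_0^{-1} b_0 = O_\mathbb P(\Omega^{1/2}),
\]
and the case $z=0$ is identical.

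The main obstacle is purely bookkeeping: one must certify each application of Lemma \ref{moment} via a H\"older estimate and then track that every cross-product contribution to $\hat A$ and $\hat b$ is at worst $O_\mathbb P(\Omega^{1/2})$, so the target rate is preserved. Compared to Lemma \ref{l11} for the $n_i=1$ regime (which only achieved $o_\mathbb P(M^{1/2})$), the stronger fourth-moment condition in Assumption \ref{assume::1} is exactly what enables the sharper $O_\mathbb P(\Omega^{1/2})$ rate, while the invertibility half of Assumption \ref{a4} is precisely what licenses the use of Lemma \ref{lem::18}.
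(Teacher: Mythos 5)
Your proof is correct and follows essentially the same route as the paper's: both express $\hat Q_\textsc{i}(1)$ as a ratio of centered sample moments, invoke Lemma \ref{moment} (with the second-moment hypotheses certified via Assumption \ref{assume::1} and H\"older) to get $O_\mathbb P(\Omega^{1/2})$ expansions of the numerator and denominator, and then apply Lemma \ref{lem::18} to invert; the only cosmetic difference is that you normalize by $n_\TG$ where the paper normalizes by $N$ and carries the factor $e$.
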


\begin{proof}[Proof of Lemma \ref{l3}]
By symmetry, we only  analyze
$$
\hat{Q}_\textsc{i}(1)=\left\{ \sum_{ij\in \mathcal T} (x_{ij}-\bar{x}_\TG)
(x_{ij}-\bar{x}_\TG)^{\T}  \right\}^{-1} \sum_{ij\in \mathcal T} (x_{ij}-\bar{x}_\TG)
(Y_{ij}-\bar{Y}_\TG).
$$
Because $\Omega  =o(1)$, Lemma \ref{moment} ensures
$$
\frac{1}{N}\sum_{ij\in \mathcal T}x_{ij} Y_{ij}
= \frac{e}{N}\sum_{ij} x_{ij} Y_{ij}(1)+O_\mathbb P  (\Omega^{1/2})
,\quad \bar x_\TG=O_\mathbb P  (\Omega^{1/2})   , \quad \bar Y_\TG=\bar Y(1)+ O_\mathbb P  (\Omega^{1/2})   .$$
Therefore, the numerator and denominator of $\hat Q_\i(1)$ satisfy
\begin{eqnarray*}
\frac{1}{N}\sum_{ij\in \mathcal T} (x_{ij}-\bar{x}_\TG)
(Y_{ij}-\bar{Y}_\TG) &=& \frac{e}{N}\sum_{ij} x_{ij} Y_{ij}(1)+O_\mathbb P  (\Omega^{1/2})   ,\\
\frac{1}{N} \sum_{ij\in \mathcal T} (x_{ij}-\bar{x}_\TG)
(x_{ij}-\bar{x}_\TG)^{\T} &=&  \frac{e}{N}\sum_{ij} x_{ij} x_{ij}^\T + O_\mathbb P  (\Omega^{1/2}) .
\end{eqnarray*}
We can use Lemma \ref{lem::18} to derive $\hat{Q}_\textsc{i}(1)-Q_\textsc{i}(1)=O_\mathbb P  (\Omega^{1/2})   $.
\end{proof}

\subsection{Proof of consistency and asymptotic Normality}

\begin{proof}[Proof of Theorem \ref{thm::tau-i}]
We first prove the consistency. Define $\bar\varepsilon_\TG=
n_\TG^{-1}
\sum_{ij\in \mathcal T} \varepsilon_{ij}(1)
$. Because ${\Omega}=o(1)$, Lemma \ref{moment} ensures
$\bar \varepsilon_\TG=
O_\mathbb P(\Omega^{1/2}).
$
We have an analogous result for the control group. Therefore, $\hat\tau_{\textsc{i}} =\tau+o_\mathbb P(1)$.

We then prove the asymptotic Normality. Lemma \ref{moment}  ensures
\begin{eqnarray*}
M^{1/2}\left(\frac{1}{n_\TG} -\frac{1}{eN} \right)\sum_{ij\in \mathcal T} \varepsilon_{ij}(1)
= M^{1/2} e^{-1} \left(e-\frac{n_\TG}{N}\right)
 n_\TG^{-1} {\sum_{ij\in \mathcal T} \varepsilon_{ij}(1)}
=O_\mathbb P\left(M^{1/2}\Omega  \right) = O_\mathbb P\left(M^{1/2} M^{-2/3}  \right) = o_\mathbb{P}(1).
\end{eqnarray*}
An analogous result holds for the control group. These two results imply that $M^{1/2}(\hat\tau_{\textsc{i}}-\tau)  - R_M  =o_\mathbb P(1)$, where
\begin{eqnarray*}
R_M &=& M^{1/2} \left[  (eN)^{-1} \sum_{ij}Z_{ij}\varepsilon_{ij}(1) - \{ (1-e)N \}^{-1} \sum_{ij} (1-Z_{ij})\varepsilon_{ij}(0) \right]
\\
&=& M^{1/2}  \left[  (eM)^{-1}  \sum_{i=1}^M  Z_i
 \sum_{j=1}^{n_{i}}\varepsilon_{ij}(1)M/ N
-  \{ (1-e)M\}^{-1}    \sum_{i=1}^M   (1-Z_i) \sum_{j=1}^{n_{i}}\varepsilon_{ij}(0)M / N  \right] \\
&=&  M^{1/2}  \left[  (eM)^{-1}  \sum_{i=1}^M  Z_i \tilde{\varepsilon}_{i\cdot}(1)
-  \{ (1-e)M\}^{-1}    \sum_{i=1}^M   (1-Z_i)  \tilde{\varepsilon}_{i\cdot}(0)  \right]
\end{eqnarray*}
is the difference-in-means of the scaled cluster totals of the residuals in a completely randomized experiment on clusters. Because $V\{ \tilde{\varepsilon}_{i\cdot}(z) \}\not\rightarrow 0$, $( \hat\tau_{\textsc{i}} - \tau )/
\se(\hat\tau_{\textsc{i}})=M^{1/2}(\hat\tau_{\textsc{i}}-\tau)/V\{ \tilde{\varepsilon}_{i\cdot}(z) \}^{1/2}$ has the same limiting distribution as  $R_M/V\{ \tilde{\varepsilon}_{i\cdot}(z) \}^{1/2}\rightsquigarrow \mathcal{N} (0,1) $ ensured by Lemma \ref{DIMA} because the regularity condition
\begin{eqnarray*}
\frac{1}{M}\sum_{i=1}^M
\tilde{\varepsilon}_{i\cdot}(z)^4 &=&
\frac{1}{M}\sum_{i=1}^M \left\{\sum_{j=1}^{n_{i}}\varepsilon_{ij}(z)M/N \right\}^4
\leqholder \frac{M^3}{N^4}\sumM
n_i^3\sum_{j=1}^{n_i}\varepsilon_{ij}(z)^4 \\
&\le& \frac{M^3}{N^4}\max_{1\le i \le M} n_i^3 O(N)
=M^3 \Omega^3 O(1)
=o(M)
\end{eqnarray*}
holds by $N^{-1}\sum_{ij}\varepsilon_{ij}(z)^4=O(1)$ and $\Omega = o(M^{-2/3})$.

We leave the proofs of the standard errors to the next subsection due to its length.
\end{proof}

\begin{proof}[Proof of Theorem \ref{thm2}]
We first prove the consistency.
By the property of the OLS,
$$
\hat\tau_{\textsc{i}}^\textup{adj}
=n_\TG^{-1}{\sum_{ij\in \mathcal T}  \{ Y_{ij} -  x_{ij}^{\T} \hat{Q}_\textsc{i}(1)\}  }
-n_\CG^{-1} {\sum_{ij\in \mathcal C}  \{ Y_{ij} -  x_{ij}^{\T} \hat{Q}_\textsc{i}(0)\} }.
$$
Because $\hat{Q}_\textsc{i}(1) = O_\mathbb P (1)$ and $\bar x_\TG
=o_\mathbb P (1)$, we have $\bar x_\TG^\T \hat{Q}_\textsc{i}(1) = o_\mathbb P (1)$. Similarly, $\bar x_\CG^\T  \hat{Q}_\textsc{i}(0)  =o_\mathbb P (1)$. Thus, $\hat\tau_{\textsc{i}}^\textup{adj}-\hat\tau_{\textsc{i}}=o_\mathbb P (1)$, which, coupled with Theorem \ref{thm::tau-i}, implies $\hat\tau_{\textsc{i}}^\textup{adj}
=\tau+o_\mathbb P(1)$.

We then prove the asymptotic Normality. Lemma \ref{moment} ensures $\bar x_\TG=O_\mathbb P (\Omega^{1/2})$ and Lemma \ref{l3} ensures
$
\hat{Q}_\textsc{i}(1) - Q_\textsc{i}(1)=O_\mathbb P (\Omega^{1/2}).
$
These two results, coupled with the assumption $\Omega  =o(M^{-2/3})$, imply
$$
M^{1/2}\bar x_\TG^{\T} \{\hat{Q}_\textsc{i}(1) - Q_\textsc{i}(1) \}
=M^{1/2} O_\mathbb P (\Omega^{1/2})O_\mathbb P (\Omega^{1/2})
=M^{1/2}O_\mathbb P (\Omega)
=M^{1/2} o_\mathbb P ( M^{-2/3})
=o_\mathbb P(1).
$$
Similarly, $M^{1/2}\bar x_\CG ^{\T} \{\hat{Q}_\textsc{i}(0) - Q_\textsc{i}(0)\}=o_\mathbb P(1)$ holds for the control group. So $
(\hat\tau_{\textsc{i}}^\textup{adj} - \tau)/\se(\hat\tau_{\textsc{i}}^\textup{adj}) =
M^{1/2}(\hat\tau_{\textsc{i}}^\textup{adj} - \tau)/V\{\tilde r_{i\cdot}(z)\}^{1/2}$ has the same asymptotic distribution as
$$
M^{1/2}\left[n_\TG^{-1} {\sum_{ij\in \mathcal T}  \left\{\varepsilon_{ij}(1) -  x_{ij}^{\T}  Q_\textsc{i}(1) \right\}    }
-n_\CG^{-1}{\sum_{ij\in \mathcal C}  \left\{\varepsilon_{ij}(0) -  x_{ij}^{\T}  Q_\textsc{i}(0)\right\}    }\right]/V\{\tilde r_{i\cdot}(z)\}^{1/2}
.
$$
Define the potential outcomes as $\varepsilon_{ij}(z)-x_{ij}^{\T} Q_\textsc{i}(z)$ for $z=0,1$, then we can apply Theorem \ref{thm::tau-i} to show the asymptotic distribution of $\hat\tau_{\textsc{i}}^\textup{adj}$.

We leave the proofs of the standard errors to the next subsection due to its length. Here we verify that if the $n_i$'s are equal and $x_{ij}=\bar{x}_{i\cdot}$ do not vary with $j$, then $\se^2(\hat\tau_{\textsc{i}}^\textup{adj})\le \se^2(\hat\tau_{\textsc{i}})$. This holds because
\begin{eqnarray*}
Q_\textsc{i}(z)
&=& \left( \sum_{ij} x_{ij}   x_{ij}^{\T} \right) ^ { - 1 } \sum _ { ij } x_{ij} \varepsilon_{ij}(z)
= \left( \sum_{i=1}^M n_i \bar{x}_{i\cdot}  \bar{x}_{i\cdot}^{\T} \right)^{-1}\sum_{i=1}^M \bar{x}_{i\cdot}  \sum_{j=1}^{n_i}\varepsilon_{ij}(z)\\
&=&\left( \sum_{i=1}^M n_i \bar{x}_{i\cdot}  n_i \bar{x}_{i\cdot}^{\T} \right)^{-1}\sum_{i=1}^M n_i\bar{x}_{i\cdot} \sum_{j=1}^{n_i}\varepsilon_{ij}(z)
= \left( \sum_{i=1}^M \tilde{x}_{i\cdot}\tilde{x}_{i\cdot}^{\T} \right)
^ { - 1 } \sum_{i=1}^M \tilde{x}_{i\cdot}\tilde{\varepsilon}_{i\cdot}(z),
\end{eqnarray*}
which equals the coefficient of $\tilde{x}_{i\cdot}$ in the OLS fit of $\tilde{\varepsilon}_{i\cdot}(z)$ on $\tilde{x}_{i\cdot}$.
\end{proof}

\begin{proof}[Proof of Corollary \ref{pro2}]
We only derive the probability limit of $\hat{Q}_\textsc{i}$, the coefficient of $x_{ij}$ in the OLS fit of $Y_{ij}$ on $(1, Z_{ij}, x_{ij})$. Other derivations are similar to those of Theorem \ref{thm2}. Therefore, we omit them.

We can obtain $\hat{Q}_\textsc{i}$ using the Frisch--Waugh--Lovell Theorem \citep{angrist2008mostly}.
The residual from the OLS fit of $Y_{ij}$ on $(1, Z_{ij})$ is $\check{Y}_{ij}=Y_{ij}-Z_{ij}\bar Y_{\TG}-(1-Z_{ij})\bar Y_{\CG}$, and the
 residual from the OLS fit of $ x_{ij}$ on $(1,Z_{ij})$ is $\check{x}_{ij}=x_{ij}-Z_{ij}\bar x_{\TG}-(1-Z_{ij})\bar x_{\CG}$. So
$\hat{Q}_\textsc{i}  = (\sum_{ij} \check x_{ij}\check x_{ij}^\T)^{-1}\sum_{ij} \check x_{ij}\check Y_{ij}$. By Lemma \ref{moment},
the numerator of $\hat{Q}_\textsc{i}$ satisfies
\begin{eqnarray*}
N^{-1}\sum_{ij} \check x_{ij}\check Y_{ij}
&=&N^{-1}\sum_{ij\in \mathcal T} (x_{ij}-\bar x_{\TG})(Y_{ij}-\bar Y_{\TG})+N^{-1}
\sum_{ij\in \mathcal C} (x_{ij}-\bar x_{\CG})(Y_{ij}-\bar Y_{\CG})\\
&=&N^{-1}\sum_{ij\in \mathcal T} x_{ij}Y_{ij}- n_\TG\bar x_\TG \bar Y_\TG/N+
N^{-1}\sum_{ij\in \mathcal C} x_{ij}Y_{ij}-n_\CG \bar x_\CG \bar Y_\CG/N\\
&=&e\sum_{ij}x_{ij}Y_{ij}(1)/N+(1-e)\sum_{ij}x_{ij}Y_{ij}(0)/N+O_\mathbb P  (\Omega^{1/2}).
\end{eqnarray*}
Its denominator has a similar result. By Lemma \ref{lem::18},
$\hat{Q}_\textsc{i} -e Q_\textsc{i}(1)-(1-e)Q_\textsc{i}(0)=O_\mathbb P  (\Omega^{1/2}) $.
\end{proof}

\subsection{Proofs of the results about the standard errors}\label{sec::2.2}

We use the following notation in the following proofs. Let $[\cdot]_{(2,2)}$ denote the $(2,2)$ element of the matrix inside $[\cdot]$; let $[\cdot]_{(1,1)+(2,2)-2(1,2)}$ denote the sum of the $(1,1)$th and the $(2,2)$th elements minus twice the $ (1,2)$th element of the matrix $[\cdot]$; let $(1-2,1-2)$ denote the submatrix of the first two rows and the first two columns.

Define $X_i$ as an $n_i\times (2+2p_x)$ matrix with row $j$ equaling $[1 ,Z_ { ij } ,  x _ {ij}^{\T} , Z_{ij}x _ { ij }^{\T}]$, $j=1,\ldots,n_i$. Stack $X_i$ together to obtain an $N\times (2+2p_x)$ matrix $X$. Define $\hat r_{ij}$ as the residual from the OLS fit of $Y_{ij}$ on $(1, Z_{ij}, x_{ij}, Z_{ij}x_{ij})$.
Define an $n_i\times n_i$ matrix $\hat{U}_{i}=(\hat{r}_{ij}\hat{r}_{ik})_{1\le j,k\le n_{i}}$. The cluster-robust standard error for the coefficient of $Z_{ij}$ equals
\begin{eqnarray}\label{eq::lzse-individual-reg-form1}
\hat{\se}_\textsc{lz}^2(\hat\tau_{\textsc{i}}^\textup{adj})
= \left[ \left(  { X } ^ { \T}  { X } \right) ^ { - 1 } \left( \sum _ { i = 1 } ^ {M}  { X } _ { i} ^ {\T} \hat{U} _ { i }  { X } _ { i } \right)  \left(  { X } ^ \T  { X } \right) ^ { - 1 } \right]_{(2,2)} .
\end{eqnarray}
This formula covers both the cases with and without covariate adjustment, and we simply set $x_{ij}$ to be empty in the latter case. When $n_i=1$, it reduces the heteroskedasticity-robust standard error.
The lemma below gives an equivalent form of \eqref{eq::lzse-individual-reg-form1} that is easier to analyze.

\begin{lemma}\label{LZ}
Define $\tilde{X}_i$ as an $n_i\times (2+2p_x)$ matrix with row $j$ equaling $\tilde{x}_{ij}^\T= [ Z_ { ij } , 1 - Z_ { ij } ,  Z_{ij}x _ {ij}^{\T} , (1-Z_{ij}) x _ { ij }^{\T} ]$, stacked as an $N\times (2+2p_x)$ matrix  $\tilde{X}$. We have
\begin{eqnarray}\label{eq::lzse-individual-reg-form2}
 \hat \se_\textsc{lz}^2(\hat\tau_{\textsc{i}}^\textup{adj})
 = \left[ \left(\tilde {  { X } } ^\T \tilde {  { X } }  \right) ^ { - 1 } \left( \sum_{i=1}^M
 {  \tilde{ X}_{i} }  ^\T\hat{U}_{i}{  {\tilde  X_{i} } }
 \right)  \left( \tilde {  { X } } ^\T \tilde {  { X } }  \right) ^ { - 1 }\right]_ { (1,1)+(2,2)-2(1,2) }.
\end{eqnarray}
\end{lemma}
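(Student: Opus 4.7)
The strategy is to recognize that the two design matrices $X$ and $\tilde{X}$ span the same column space and are related by an invertible linear reparametrization, and then track how the sandwich formula transforms. Using the identities $1 = Z_{ij} + (1-Z_{ij})$ and $x_{ij} = Z_{ij}x_{ij} + (1-Z_{ij})x_{ij}$, together with $Z_{ij} = Z_{ij}$ and $Z_{ij}x_{ij} = Z_{ij}x_{ij}$, we can write $\tilde{X} = XB$ for a fixed invertible $(2+2p_x)\times(2+2p_x)$ block matrix
$$
B = \begin{pmatrix} 0 & 1 & 0 & 0 \\ 1 & -1 & 0 & 0 \\ 0 & 0 & 0 & I_{p_x} \\ 0 & 0 & I_{p_x} & -I_{p_x} \end{pmatrix},
$$
whose second row $(1,-1,0,\ldots,0)$ records that the original coefficient on $Z_{ij}$ equals the difference of the coefficients on $Z_{ij}$ and $1-Z_{ij}$ in the reparametrized model. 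Since the two column spaces coincide, the fitted values and hence the residuals $\hat{r}_{ij}$ and the matrix $\hat{U}_{i}$ are identical under either parametrization.

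Next I would push the relation $X = \tilde{X}B^{-1}$ through the sandwich. Direct substitution gives $(X^\T X)^{-1} = B(\tilde{X}^\T\tilde{X})^{-1}B^\T$ and $\sum_{i=1}^M X_i^\T \hat{U}_i X_i = (B^{-1})^\T \bigl(\sum_{i=1}^M \tilde{X}_i^\T \hat{U}_i \tilde{X}_i\bigr) B^{-1}$. Multiplying the three factors in \eqref{eq::lzse-individual-reg-form1}, the inner pairs $B^\T(B^{-1})^\T = I$ and $B^{-1}B = I$ cancel, leaving
$$
(X^\T X)^{-1}\left(\sum_{i=1}^M X_i^\T \hat{U}_i X_i\right)(X^\T X)^{-1} \;=\; B\,M\,B^\T,
$$
where $M = (\tilde{X}^\T\tilde{X})^{-1}\bigl(\sum_{i=1}^M \tilde{X}_i^\T \hat{U}_i \tilde{X}_i\bigr)(\tilde{X}^\T\tilde{X})^{-1}$ is precisely the matrix on the right-hand side of \eqref{eq::lzse-individual-reg-form2}.

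To finish, I would extract the $(2,2)$ entry. Writing $b^\T = (1,-1,0,\ldots,0)$ for the second row of $B$, the $(2,2)$ entry of $BMB^\T$ equals $b^\T M b = M_{(1,1)} + M_{(2,2)} - 2M_{(1,2)}$, using that $M$ is symmetric because each $\hat{U}_i = \hat{r}_i\hat{r}_i^\T$ (with $\hat{r}_i = (\hat{r}_{i1},\ldots,\hat{r}_{in_i})^\T$) is symmetric. This is exactly the $[\,\cdot\,]_{(1,1)+(2,2)-2(1,2)}$ operation defined before the lemma, completing the identity.

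The argument is essentially linear-algebraic, so no deep obstacle arises; the main care-point is simply checking the block structure of $B$, so that the interaction columns are correctly paired, and confirming that the sandwich truly cancels to $BMB^\T$ without leftover cross terms. Once that is in hand, both the reduction to $\hat\gamma_1-\hat\gamma_2$ for the coefficient on $Z_{ij}$ and the corresponding quadratic form for its variance follow immediately.
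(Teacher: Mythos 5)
Your proof is correct and follows essentially the same route as the paper's: both arguments exploit the invertible reparametrization linking the two design matrices (your $B$ is exactly the paper's $R^{-1}$, so $\tilde X = XB$ versus the paper's $X=\tilde XR$), push it through the sandwich so the inner factors cancel, and read off the $(2,2)$ entry as the quadratic form in $(1,-1,0,\ldots,0)$, which gives the $(1,1)+(2,2)-2(1,2)$ combination by symmetry.
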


\begin{proof}[Proof of Lemma \ref{LZ}]
Define
$$
R=
  \left( \begin{array} { c c c c c } { 1 } & { 1 } & {0} & { 0 } \\ { 1 } & { 0 } & { 0} & { 0 } \\ { 0} & { 0 } & { I_{p_x} } & { I_{p_x}} \\{ 0} & { 0 } & { I_{p_x} } & { 0} \end{array} \right),\qquad
  R^{-1}=
\left( \begin{array} { c c c c } { 0 } & { 1 } & { 0 } & { 0 } \\ { 1 } & { -1 } & { 0 } & { 0 } \\ { 0 } & { 0 } & { 0 } & { I_{p_x} } \\ { 0 } & { 0 } & { I_{p_x} } & { -I_{p_x} } \end{array} \right),\qquad
X_i=\tilde X_i R,\qquad
{X}=\tilde X R.
$$
  Then the cluster-robust standard error has the following equivalent forms:
\begin{eqnarray*}
\eqref{eq::lzse-individual-reg-form1}
&=&\left[\left(R^{\T}\tilde{{X}}^{\T} \tilde{{X}}
R\right)^{-1}\left(\sum_{i=1}^M R^{\T}\tilde{{X}}_{i}^{\T} \hat{U}_{i} \tilde{{X}}_{i}R
\right)
\left(R^{\T}\tilde{{X}}^{\T} \tilde{X}
R\right)^{-1}\right]_{(2,2)}\\
&=&\left[R^{-1} \left( \tilde {  { X } } ^\T \tilde {  { X } }  \right) ^ { - 1 } \left( \sum_{i=1}^M
 {  \tilde{ X}_{i} }  ^\T\hat{U}_{i}{  {\tilde  X_{i} } }
 \right)  \left( \tilde {  { X } } ^\T \tilde {  { X } }  \right) ^ { - 1 }R^{-1}\right]_{(2,2)} = \eqref{eq::lzse-individual-reg-form2}.
\end{eqnarray*}
\end{proof}

Partition $G= N^{-1} \tilde{{X}}^{\T} \tilde{X}$ into
$$
G= \left( \begin{array}{c c ;{2pt/2pt} c c } n_\TG/N & {0} & {{
\sum_{ij\in \mathcal T}x_{ij}^{\T}
}/N} & {0} \\ {0} & n_\CG/{N} & {0} & {
\sum_{ij\in \mathcal C}x_{ij}^{\T}/{N}
} \\
\hdashline[2pt/2pt]
\sum_{ij\in \mathcal T}x_{ij}/N
 & {0} & { {
\sum_{ij\in \mathcal T}x_{ij}
x_{ij}^{\T}
}/{N}} & {0} \\
0 & {\sum_{ij\in \mathcal C}x_{ij}}/{N}& {0} & {
{\sum_{ij\in \mathcal C}x_{ij}x_{ij}^{\T}}/{N}
}\end{array}\right)=\left(\begin{array} {c c} G_{11} &
 G_{12}
 \\ G_{21}& G_{22} \end{array}\right).
$$
Define
$H_i=\tilde{X}_i^{\T} \hat{U}_i \tilde{X}_i$ and $H=\sum_{i=1}^M H_i M/N^2
$. Below Lemma \ref{l7} gives an explicit formula for $H_i$, and Lemma \ref{l8} gives the probability limit of $H$.

\begin{lemma}\label{l7}
$H_i$ is symmetric and
{\small
$$H_i= \left( \begin{array}{cccc}
{Z_i(\sum_{j=1}^{n_{i}}\hat{r}_{ij})^2} & 0 & {Z_i
\sum_{j=1}^{n_{i}}\hat{r}_{ij}\sum_{j=1}^{n_{i}}
\hat{r}_{ij}x_{ij}^{\T}
} & {0} \\
0 & { (1-Z_i)(\sum_{j=1}^{n_{i}} \hat{r}_{ij})^2 } & 0 &
{(1-Z_i)}
\sum_{j=1}^{n_{i}}\hat{r}_{ij}\sum_{j=1}^{n_{i}}
\hat{r}_{ij}x_{ij}^{\T}
 \\
{*} & {0} & Z_i
\sum_{j=1}^{n_{i}}\hat{r}_{ij}x_{ij}
\sum_{j=1}^{n_{i}}\hat{r}_{ij}x_{ij}^{\T}
& 0 \\
0  & {*} & {0} & (1-Z_i)
\sum_{j=1}^{n_{i}}\hat{r}_{ij}x_{ij}
\sum_{j=1}^{n_{i}}\hat{r}_{ij}x_{ij}^{\T}\end{array}\right),
$$}
where the * elements can be determined by symmetry.
\end{lemma}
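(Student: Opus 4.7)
The plan is a direct computation exploiting two structural features: first, that randomization is at the cluster level so $Z_{ij}=Z_i$ is constant within cluster $i$, and second, that $\hat U_i = \hat r_i \hat r_i^\T$ has rank one where $\hat r_i = (\hat r_{i1},\ldots,\hat r_{in_i})^\T$. The second fact immediately gives the factorization
\[
H_i \;=\; \tilde X_i^\T \hat U_i \tilde X_i \;=\; \bigl(\tilde X_i^\T \hat r_i\bigr)\bigl(\tilde X_i^\T \hat r_i\bigr)^\T,
\]
so $H_i$ is automatically symmetric and positive semidefinite; the stated symmetry in the lemma then requires no extra work.

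Next I would compute the $(2+2p_x)$-vector $\tilde X_i^\T \hat r_i$ block-by-block using the definition $\tilde x_{ij}^\T = [Z_{ij},\,1-Z_{ij},\,Z_{ij}x_{ij}^\T,\,(1-Z_{ij})x_{ij}^\T]$. Using $Z_{ij}=Z_i$, the four blocks collapse to
\[
\tilde X_i^\T \hat r_i \;=\; \Bigl[\, Z_i\, s_i,\ (1-Z_i)\, s_i,\ Z_i\, t_i^\T,\ (1-Z_i)\, t_i^\T\,\Bigr]^\T,
\]
where $s_i = \sum_{j=1}^{n_i}\hat r_{ij}$ and $t_i = \sum_{j=1}^{n_i}\hat r_{ij} x_{ij}$. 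Writing out the outer product $\bigl(\tilde X_i^\T \hat r_i\bigr)\bigl(\tilde X_i^\T \hat r_i\bigr)^\T$ entry-by-entry and applying the elementary identities $Z_i(1-Z_i)=0$, $Z_i^2 = Z_i$, and $(1-Z_i)^2 = 1-Z_i$ kills all ``cross-treatment'' blocks and reduces the diagonal blocks to the exact entries displayed in the lemma, for example $Z_i s_i^2$ in position $(1,1)$ and $(1-Z_i)s_i^2$ in position $(2,2)$, with the $(1,2)$ and $(2,1)$ entries equal to $Z_i(1-Z_i)s_i^2 = 0$.

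There is essentially no obstacle here; the proof is a one-page bookkeeping exercise. The only point worth stating explicitly is why $Z_{ij}$ may be replaced by $Z_i$ throughout, which is the defining feature of a cluster-randomized design as set up in Section \ref{sec::notation-framework}. I would conclude by noting that the starred entries are obtained from the displayed ones by transposition, consistent with the rank-one symmetry of $H_i$.
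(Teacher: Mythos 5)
Your proof is correct and takes essentially the same approach as the paper: both are direct computations resting on $Z_{ij}=Z_i$ within a cluster and the identities $Z_i^2=Z_i$, $(1-Z_i)^2=1-Z_i$, $Z_i(1-Z_i)=0$. The only cosmetic difference is that you make the rank-one factorization $H_i=(\tilde X_i^\T\hat r_i)(\tilde X_i^\T\hat r_i)^\T$ explicit before expanding, which gives the symmetry for free, whereas the paper expands the double sum $\sum_j\sum_k\hat r_{ij}\hat r_{ik}\tilde x_{ij}\tilde x_{ik}^\T$ directly.
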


\begin{proof}[Proof of Lemma \ref{l7}]
By definition,
{\small
\begin{eqnarray*}
H_i &=& \sum_{j=1}^{n_{i}}\sum_{k=1}^{n_{i}}\hat{r}_{ij} \hat{r}_{ik}
\tilde{x}_{ij}
\tilde{x}_{ik}^{\T}
\\
&=&\sum_{j=1}^{n_{i}}\sum_{k=1}^{n_{i}}\hat{r}_{ij} \hat{r}_{ik}
\left( \begin{array}{cccc}{Z_{ij}Z_{ik}} & {
Z_{ij}(1-Z_{ik})
} & {Z_{ij}Z_{ik}x_{ik}^{\T}} & {Z_{ij} (1-Z_{ik})x_{ik}^{\T}}
 \\
 {(1-Z_{ij})Z_{ik}} & {(1-Z_{ij})(1-Z_{ik})} & {
 (1-Z_{ij})Z_{ik}x_{ik}^{\T}
 } & {
 (1-Z_{ij})
 \left(1-Z_{ik}\right)x_{ik}^{\T}
 }
 \\{Z_{ij}x_{ij}Z_{ik}} & {
Z_{ij}x_{ij} (1-Z_{ik})
} & {Z_{ij}x_{ij} Z_{ik}x_{ik}^{\T}} & {Z_{ij} x_{ij}  (1-Z_{ik})x_{ik}^{\T}}
  \\
{(1-Z_{ij})x_{ij} Z_{ik}} & {
(1-Z_{ij})x_{ij} (1-Z_{ik})
} & {(1-Z_{ij})x_{ij}
Z_{ik}x_{ik}^{\T}} & {(1-Z_{ij}) x_{ij} (1-Z_{ik})x_{ik}^{\T}}\end{array}\right)
\\
&=&\sum_{j=1}^{n_{i}}\sum_{k=1}^{n_{i}} \hat{r}_{ij} \hat{r}_{ik}
\left( \begin{array}{cccc}{Z_i} &0& {Z_ix_{ik}^{\T}} & 0\\
0 & 1-Z_i& 0& {
 (1-Z_i)x_{ik}^{\T}
 }
 \\{Z_ix_{ij} } & {
0
} & {Z_ix_{ij} x_{ik}^{\T}} & 0
  \\
0 & {
(1-Z_i)x_{ij}
} & 0 & {(1-Z_{i}) x_{ij} x_{ik}^{\T}}\end{array}\right)
\\
&=&\left( \begin{array}{cccc}
{Z_i(\sum_{j=1}^{n_{i}}\hat{r}_{ij})^2} & 0 & {Z_i
\sum_{j=1}^{n_{i}}\hat{r}_{ij}\sum_{j=1}^{n_{i}}
\hat{r}_{ij}x_{ij}^{\T}
} & {0} \\
0 & { (1-Z_i)(\sum_{j=1}^{n_{i}} \hat{r}_{ij})^2 } & 0 &
{(1-Z_i)}
\sum_{j=1}^{n_{i}}\hat{r}_{ij}\sum_{j=1}^{n_{i}}
\hat{r}_{ij}x_{ij}^{\T}
 \\
{*} & {0} & Z_i
\sum_{j=1}^{n_{i}}\hat{r}_{ij}x_{ij}
\sum_{j=1}^{n_{i}}\hat{r}_{ij}x_{ij}^{\T}
& 0 \\
0  & {*} & {0} & {(1-Z_i)
\sum_{j=1}^{n_{i}}\hat{r}_{ij}x_{ij}
\sum_{j=1}^{n_{i}}\hat{r}_{ij}x_{ij}^{\T}}\end{array}\right).
\end{eqnarray*}
}
\end{proof}

\begin{lemma}\label{l8}
Let $r_{ij}(z)$ be the residual from the OLS fit of $\varepsilon_{ij}(z)$ on $x_{ij}$. Under Assumptions \ref{assume::4}--\ref{a4}, if $ \Omega  =o(M^{-2/3} )$, then
\begin{eqnarray}
\frac{M}{N^2}\sum_{i=1}^M Z_i
\left(\sum_{j=1}^{n_{i}} \hat r_{ij}\right)^{2}-\frac{M}{N^2} \sum_{i=1}^M e \left\{ \sum_{j=1}^{n_{i}}r_{ij}(1) \right\}^{2}
&=&o_\mathbb P (1),\label{a10}\\
\frac{M}{N^2}\sum_{i=1}^MZ_i \sum_{j=1}^{n_{i}}
 \hat{r}_{ij} \sum_{j=1}^{n_{i}} \hat{r}_{ij} x_{ij}
&=&O_\mathbb P ( M \Omega  ),\label{a11}\\
\frac{M}{N^2}\sum_{i=1}^MZ_i\sum_{j=1}^{n_{i}} \hat{r}_{ij}x_{ij}  \sum_{j=1}^{n_{i}} \hat{r}_{ij}x_{ij}^{\T}
&=&O_\mathbb P\left( M \Omega   \right).\label{aa12}
\end{eqnarray}
We have similar results for the control group. Partitioning $H$ in the same way as $G$, we have $H_{11},H_{12},H_{21}$, and $H_{22}$ all of order $O_{\mathbb P}(   M\Omega    )$, so $H=O_{\mathbb P}(   M\Omega    ).$
\end{lemma}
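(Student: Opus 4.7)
The plan is to first linearize the fitted residual against the population residual. For a treated unit $(i,j)$, subtract $r_{ij}(1)=\varepsilon_{ij}(1)-x_{ij}^\T Q_\i(1)$ from $\hat r_{ij}=Y_{ij}(1)-\bar Y_\TG-(x_{ij}-\bar x_\TG)^\T \hat Q_\i(1)$ to obtain
\[
\hat r_{ij}-r_{ij}(1)=\bigl[\bar Y(1)-\bar Y_\TG+\bar x_\TG^\T \hat Q_\i(1)\bigr]-x_{ij}^\T\bigl[\hat Q_\i(1)-Q_\i(1)\bigr]\equiv A+x_{ij}^\T B,
\]
where Lemmas \ref{moment} and \ref{l3} give $A$ and $\|B\|$ are both $O_\mathbb P(\Omega^{1/2})$. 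Summing over $j$ in cluster $i$ yields $\sum_j\hat r_{ij}=\sum_j r_{ij}(1)+n_iA+n_i\bar x_{i\cdot}^\T B$, which serves as the workhorse identity for the entire argument. The control side is handled symmetrically.

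For \eqref{a10}, I would split the proof into a concentration step and a replacement step. Concentration: set $T_i=\{\sum_j r_{ij}(1)\}^2$ and apply Chebyshev's inequality to $\sum_i(Z_i-e)T_i$. Under the simple random sampling design, its variance is bounded by $e\sum_i T_i^2$, and H\"older gives $T_i^2\le n_i^3\sum_j r_{ij}(1)^4$, so $\sum_iT_i^2=O(\Omega^3 N^4)$ under Assumption \ref{assume::1} (the fourth-moment bound on $r_{ij}(1)$ follows from those on $\varepsilon_{ij}$ and $x_{ij}$ together with the finiteness of $Q_\i(1)$). The variance of $(M/N^2)\sum_i(Z_i-e)T_i$ is therefore $O(M^2\Omega^3)=o(1)$ precisely because $\Omega=o(M^{-2/3})$. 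Replacement: expand $(\sum\hat r_{ij})^2-(\sum r_{ij}(1))^2$ with the identity above and bound each of the resulting sums $(M/N^2)\sum_{i:Z_i=1}(\cdot)$ by pulling out the $O_\mathbb P(\Omega^{1/2})$ factors in $A,B$ and using moment bounds such as $(M/N^2)\sum_i n_i^2=O(M\Omega)$ and $(M/N^2)\sum_i\|\tilde x_{i\cdot}\|^2=O(M\Omega)$, both of which follow from Assumption \ref{assume::1} by H\"older. Each cross term is then $O_\mathbb P(M\Omega\cdot\Omega)=o_\mathbb P(1)$.

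For \eqref{a11} and \eqref{aa12}, the plan is to bound directly by Cauchy--Schwarz, without needing convergence to a specific limit. The in-cluster inequality $\|\sum_j\hat r_{ij}x_{ij}\|^2\le(\sum_j\hat r_{ij}^2)(\sum_j\|x_{ij}\|^2)$, combined with H\"older inside each factor and a final Cauchy--Schwarz across clusters, yields $\sum_i\|\sum_j\hat r_{ij}x_{ij}\|^2\le\max_in_i\cdot\{\sum_{ij}\hat r_{ij}^4\}^{1/2}\{\sum_{ij}\|x_{ij}\|^4\}^{1/2}=O_\mathbb P(\Omega N^2)$, so \eqref{aa12} is $O_\mathbb P(M\Omega)$. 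Then \eqref{a11} follows from a further Cauchy--Schwarz whose product factors are \eqref{a10} and \eqref{aa12}, hence $O_\mathbb P(M\Omega)$ again. The conclusion about the partition of $H$ reads off directly from Lemma \ref{l7}: the $H_{11}$ block is governed by \eqref{a10}, the $H_{12}$ and $H_{21}$ blocks by \eqref{a11}, and $H_{22}$ by \eqref{aa12}; since $M\Omega\ge 1$ always holds, the $O_\mathbb P(1)$ entries from \eqref{a10} fit inside $O_\mathbb P(M\Omega)$.

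The main obstacle will be the replacement step in \eqref{a10}, where one must carefully bound several cross terms such as $\sum_{i:Z_i=1}n_iA\sum_j r_{ij}(1)$ and $\sum_{i:Z_i=1}n_i\bar x_{i\cdot}^\T B\sum_j r_{ij}(1)$. Each goes through only after repeated H\"older applications tied back to Assumption \ref{assume::1}, and requires the sharper rate $\Omega=o(M^{-2/3})$ rather than merely $\Omega=o(1)$; the parallel control of $\hat r_{ij}^4$ via $r_{ij}(1)^4$ uses the small-remainder structure from Step 1 and is the other slightly delicate piece.
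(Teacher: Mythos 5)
Your proposal is correct, and for \eqref{a10} it is essentially the paper's own argument: the paper likewise writes $\hat r_{ij}=r_{ij}(1)+x_{ij}^\T\{Q_\i(1)-\hat Q_\i(1)\}-\{\bar\varepsilon_\TG-\bar x_\TG^\T\hat Q_\i(1)\}$ (your $A$ and $B$), expands the square into a leading term plus five cross terms, kills the cross terms with the $O_\mathbb P(\Omega^{1/2})$ rates from Lemmas \ref{moment} and \ref{l3} together with H\"older bounds of the form $(M/N^2)\sum_i\{\sum_j r_{ij}(1)\}^2=O(M\Omega)$, and concentrates the leading term by Chebyshev with exactly your variance bound $O(M^2\Omega^3)=o(1)$. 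Where you genuinely diverge is \eqref{a11} and \eqref{aa12}: the paper expands $\sum_jZ_i\hat r_{ij}\sum_j\hat r_{ij}x_{ij}$ around $\varepsilon_{ij}(1)$ into nine terms $T_7$--$T_{15}$ and bounds each one separately by $O_\mathbb P(M\Omega)$, whereas you bound \eqref{aa12} in one stroke by within-cluster and across-cluster Cauchy--Schwarz, $\sum_i\|\sum_j\hat r_{ij}x_{ij}\|^2\le\max_in_i\{\sum_{ij}\hat r_{ij}^4\}^{1/2}\{\sum_{ij}\|x_{ij}\|^4\}^{1/2}$, and then obtain \eqref{a11} from \eqref{a10} and \eqref{aa12} by a further Cauchy--Schwarz across clusters (noting $M\Omega\ge1$ so the resulting $O_\mathbb P\{(M\Omega)^{1/2}\}$ or $O_\mathbb P(M\Omega)$ bound suffices). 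Your route is more compact and avoids the bookkeeping, at the cost of needing the extra ingredient $\sum_{ij}\hat r_{ij}^4=O_\mathbb P(N)$, which does follow from your linearization, Assumption \ref{assume::1}, and the finiteness of $Q_\i(1)$ under Assumption \ref{a4}, so this is not a gap. Two cosmetic points: the quantity you call $(M/N^2)\sum_i\|\tilde x_{i\cdot}\|^2$ should be the unscaled cluster total $\sum_jx_{ij}$ rather than the paper's $\tilde x_{i\cdot}=\sum_jx_{ij}M/N$ (otherwise the powers of $M/N$ are off, though the intended bound $O(M\Omega)$ is the right one); and your SRS variance bound "$e\sum_iT_i^2$" matches the paper's up to the harmless factor $M/(M-1)$.
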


\begin{proof}[Proof of Lemma \ref{l8}]
Without loss of generality, we assume that $x_i$ is one-dimensional and omit the superscript $^{\T}$s.
First, we prove (\ref{a10}). By the definition of $\hat{r}_{ij}$,
\begin{eqnarray*}
\frac{M}{N^2} \sum_{i=1}^M Z_i  \left(\sum_{j=1}^{n_{i}} \hat{r}_{ij} \right)^{2}
&=&\frac{M}{N^2}\sum_{i=1}^M Z_i   \left[ \sum_{j=1}^{n_{i}} \left\{Y_{ij}(1)
-\bar{Y}_\TG-(x_{ij}-\bar{x}_\TG)\hat{Q}_\i(1)
\right\}  \right]^2 \\
&=&\frac{M}{N^2}\sum_{i=1}^M Z_i \left(
\sum_{j=1}^{n_{i}} \left[\left\{\varepsilon_{ij}(1)-x_{ij}Q_\i(1)\right\}
+x_{ij}Q_\i(1)-\bar{\varepsilon}_\TG
-(x_{ij}-\bar x_\TG)\hat{Q}_\i(1)
\right]\right)^2\\
&=&\frac{M}{N^2} \sum_{i=1}^M Z_i\left( \sum_{j=1}^{n_{i}} \left[r_{ij}(1)
+x_{ij}\{Q_\i(1)-\hat{Q}_\i(1)\}
-\{\bar{\varepsilon}_\TG -\bar{x}_\TG\hat{Q}_\i(1) \}
\right]
  \right)^2\\
  & =& T_1 + T_2 + T_3 + T_4 - T_5 - T_6 .
\end{eqnarray*}
where
{\small
$$
\begin{array}{lllllll}
T_1 &=&  \frac{M}{N^2} \sum_{i=1}^M Z_i \{\sum_{j=1}^{n_{i}}  r_{ij}(1) \}^2,&&
T_2 &=& \frac{M}{N^2} \sum_{i=1}^M Z_i ( \sum_{j=1}^{n_{i}}  x_{ij} )^2 \{Q_\i(1)-\hat{Q}_\i(1)\}^2,\\
T_3 &=& \frac{M}{N^2} \sum_{i=1}^M Z_i n_i^2  \{\bar{\varepsilon}_\TG -\bar{x}_\TG\hat{Q}_\i(1) \}^2 ,&&
T_4 &=&  \frac{2M}{N^2} \sum_{i=1}^M Z_i  \sum_{j=1}^{n_{i}}  r_{ij}(1) \sum_{j=1}^{n_{i}}  x_{ij} \{Q_\i(1)-\hat{Q}_\i(1)\},\\
T_5 &=& \frac{2M}{N^2} \sum_{i=1}^M Z_i  \sum_{j=1}^{n_{i}}  r_{ij}(1) n_i  \{\bar{\varepsilon}_\TG -\bar{x}_\TG\hat{Q}_\i(1)\}, &&
T_6 &=& \frac{2M}{N^2} \sum_{i=1}^M Z_i  \sum_{j=1}^{n_{i}}  x_{ij}  n_i \{Q_\i(1)-\hat{Q}_\i(1)\} \{\bar{\varepsilon}_\TG -\bar{x}_\TG\hat{Q}_\i(1) \} .
\end{array}
$$}
We claim that except for $T_1$, all other terms $T_2$--$T_6$ are of order $o_\mathbb P (1)$. We show that $T_4=o_\mathbb P (1)$ and omit the proofs for other terms. It is bounded from the above by
$$
\left| T_4\right|
\le  \frac{2M}{N^2} \sum_{i=1}^M \left|    \sum_{j=1}^{n_i} r_{ij}(1) \sum_{j=1}^{n_i}x_{ij}      \right| \left|Q_\i(1)-\hat Q_\i(1)\right|
\le  \frac{M}{N^2} \sum_{i=1}^M \left[\left\{\sum_{j=1}^{n_i} r_{ij}(1)\right\}^2+\left( \sum_{j=1}^{n_i}x_{ij}      \right)^2 \right]\left|Q_\i(1)-\hat Q_\i(1)\right| .
$$
By Lemma \ref{l3}, $Q_\i(1)-\hat{Q}_\i(1)=O_\mathbb P  (\Omega^{1/2})$, and by
 Assumption \ref{assume::1},
$$
 \frac{M}{N^2}\sum_{i=1}^M
\left\{ \sum_{j=1}^{n_{i}}r_{ij}(1) \right\}^2
 {\leqholder}  \frac{M}{N^2}\sum_{i=1}^M
n_i\sum_{j=1}^{n_{i}} r_{ij}(1)^2
= O\left(M\Omega  \right),\quad
\frac{M}{N^2}\sum_{i=1}^M {\left(\sum_{j=1}^{n_{i}}x_{ij}\right)^2} = O\left(M\Omega  \right),
$$
we have
$\left| T_4\right| \leq O(M\Omega) O_\mathbb P(\Omega^{1/2})=o_\mathbb P(1)$ using the assumption $\Omega  =o(M^{-2/3})$. To finish the proof of \eqref{a10}, we only need to verify that $T_1$ differs from its mean by a term of order $o_\mathbb P(1)$, which follows from Chebyshev's inequality and the variance calculation:
$$
\var(T_1) \le
\frac{M^2}{N^4}\sum_{i=1}^M \left\{ \sum_{j=1}^{n_{i}}r_{ij}(1) \right\}^4
\leqholder \frac{M^2}{N^4}\sum_{i=1}^M n_i^3\sum_{j=1}^{n_{i}}r_{ij}(1)^4
\le O\left(M^2 \Omega^3\right)=o(1) .
$$

%

Second, we prove \eqref{a11}. By definition of $\hat{r}_{ij}$,
\begin{eqnarray*}\label{aa19}
&&\frac{M}{N^2}\sum_{i=1}^MZ_i\sum_{j=1}^{n_{i}}
 \hat{r}_{ij} \sum_{j=1}^{n_{i}} \hat{r}_{ij}x_{ij}  \\
&=&\frac{M}{N^2}\sum_{i=1}^M Z_i\sum_{j=1}^{n_{i}}\left[
\varepsilon_{ij}(1)-
x_{ij}\hat{Q}_\i(1)
- \{\bar{\varepsilon}_\TG-\bar x_\TG  \hat{Q}_\i(1) \}
\right]\sum_{j=1}^{n_{i}}\left[
\varepsilon_{ij}(1)-
x_{ij}\hat{Q}_\i(1)
- \{\bar{\varepsilon}_\TG-\bar x_\TG  \hat{Q}_\i(1) \}
\right]x_{ij} \\
&=& T_7 - T_8 - T_9 - T_{10} + T_{11} + T_{12} - T_{13} + T_{14} + T_{15},
\end{eqnarray*}
where
{\small
$$
\begin{array}{lllllll}
T_7&=& \frac{M}{N^2}
\sum_{i=1}^M Z_i \sum_{j=1}^{n_i}
\varepsilon_{ij}(1)\sum_{j=1}^{n_i}\varepsilon_{ij}(1)x_{ij},&&
T_8 &=& \frac{M}{N^2}\sum_{i=1}^M Z_i\sum_{j=1}^{n_i}\varepsilon_{ij}(1)
\sum_{j=1}^{n_i}x_{ij}^2\hat{Q}_\i(1) ,\\
T_9&=&  \frac{M}{N^2}\sum_{i=1}^M Z_i\sum_{j=1}^{n_i}\varepsilon_{ij}(1)
\sum_{j=1}^{n_i}x_{ij}\{\bar \varepsilon_\TG - \bar x_\TG \hat{Q}_\i(1)\}, &&
T_{10} &=& \frac{M}{N^2}\sum_{i=1}^M Z_i
\sum_{j=1}^{n_i}x_{ij}\sum_{j=1}^{n_i}\varepsilon_{ij}(1)x_{ij}\hat{Q}_\i(1), \\
T_{11} &=& \frac{M}{N^2}
\sum_{i=1}^M Z_i \sum_{j=1}^{n_i}
x_{ij}\sum_{j=1}^{n_i}x_{ij}^2 \hat{Q}_\i(1)^2,&&
T_{12} &=& \frac{M}{N^2}\sum_{i=1}^M
 Z_i (\sum_{j=1}^{n_i}x_{ij} )^2\hat{Q}_\i(1)\{\bar\v_\TG
 -\bar x_\TG \hat{Q}_\i(1)\},\\
 T_{13} &=& \frac{M}{N^2}
\sum_{i=1}^M Z_i n_i\sum_{j=1}^{n_i}
\varepsilon_{ij}(1)x_{ij}\{\bar \v_\TG-
\bar x_\TG\hat{Q}_\i(1)\}, &&
T_{14} &=& \frac{M}{N^2}
\sum_{i=1}^M Z_i
n_i\sum_{j=1}^{n_i}x_{ij}^2\{\bar \v_\TG-
\bar x_\TG\hat{Q}_\i(1)\}
\hat{Q}_\i(1),\\
T_{15} &=& \frac{M}{N^2}
\sum_{i=1}^M Z_in_i \sum_{j=1}^{n_i}
x_{ij}\{\bar\varepsilon_\TG-\bar x_\TG \hat{Q}_\i(1)\}^2.
\end{array}
$$}
We claim that all terms $T_7$--$T_{15}$ are of order $O_\mathbb P\left(M\Omega  \right)$. We show that $T_{7}=O_\mathbb P\left(M\Omega  \right)$ and omit the proofs for other terms: by Assumption \ref{assume::1},
\begin{eqnarray*}
|T_7|&\le& \frac{M}{N^2}\sum_{i=1}^M\left|
\sum_{j=1}^{n_{i}} \varepsilon_{ij}(1)\sum_{j=1}^{n_{i}}\varepsilon_{ij}(1)x_{ij}
\right|\\
&\le& O\left(\frac{M}{N^2}\right)\sum_{i=1}^M \left[
 \left\{\sum_{j=1}^{n_{i}} \varepsilon_{ij}(1)\right\}^2+\left\{\sum_{j=1}^{n_{i}}\varepsilon_{ij}(1)x_{ij}\right\}^2 \right]\\
&\leqholder & O\left(\frac{M}{N^2}\right)
\sum_{i=1}^M \left\{
n_i\sum_{j=1}^{n_{i}}\varepsilon_{ij}(1)^2+n_i\sum_{j=1}^{n_{i}}\varepsilon_{ij}(1)^2
x_{ij}^2
 \right\}\\
&\le& O\left(\frac{M}{N^2}\right)\sum_{i=1}^M
n_i
\sum_{j=1}^{n_{i}} \left\{ \varepsilon_{ij}(1)^2+ \varepsilon_{ij}(1)^4+ x_{ij}^4\right\}\le O\left(M\Omega  \right).
\end{eqnarray*}
Therefore, \eqref{a11} holds.

The proof of \eqref{aa12} is similar to (\ref{a11}), and we omit it. Therefore, ${H}=O_{\mathbb P}(   M\Omega    )$ follows from \eqref{a10}--\eqref{aa12}.
\end{proof}

Now we analyze the standard errors in Theorems \ref{thm::tau-i} and \ref{thm2}.

\begin{proof}[Proof of the results about the standard errors in Theorem \ref{thm::tau-i}]
We first analyze $\hat{\se}_\textsc{lz}^2(\hat\tau_{\textsc{i}})$. The matrix $\tilde{X}^\T \tilde{X}$ simplifies to
$$
\tilde{X}^\T \tilde{X}=\sumM \tilde{X}_i^\T \tilde{X}_i
= \sumM Z_i \begin{pmatrix}
n_i&0\\ 0 & 0
\end{pmatrix}
+ \sumM (1-Z_i) \begin{pmatrix}
0&0\\ 0 & n_i
\end{pmatrix}
=\begin{pmatrix}
n_\TG&0\\ 0 & n_\CG
\end{pmatrix}.
$$
The residual from the OLS of $Y_{ij}$ on $(1,Z_{ij})$ is $\hat{\varepsilon}_{ij}$ defined in Section  \ref{sec::asymptotics-individual-reg} of the  main text, so the matrix $\sumM \tilde{X}_i^\T \hat{U}_i \tilde{X}_i$ simplifies to
\begin{eqnarray*}
\sumM \tilde{X}_i^\T \hat{U}_i \tilde{X}_i
&=&  \sumM Z_i \begin{pmatrix}
\left( \sum_{j=1}^{n_i} \hat{\varepsilon}_{ij}\right)^2 & 0\\ 0& 0
\end{pmatrix}
+ \sumM (1-Z_i) \begin{pmatrix}
0&0\\ 0& \left( \sum_{j=1}^{n_i} \hat{\varepsilon}_{ij}\right)^2
\end{pmatrix} \\
&=&\begin{pmatrix}
 \sumM Z_i\left( \sum_{j=1}^{n_i} \hat{\varepsilon}_{ij}\right)^2 & 0\\ 0& \sumM (1-Z_i) \left( \sum_{j=1}^{n_i} \hat{\varepsilon}_{ij}\right)^2
\end{pmatrix}.
\end{eqnarray*}
These two matrices are both diagonal, and it is straightforward to use Lemma \ref{LZ} to obtain the formula of $\hat{\se}_\textsc{lz}^2(\hat\tau_{\textsc{i}})$  in Theorem \ref{thm::tau-i} of the main text. The proof of its conservativeness is the same as that of Theorem \ref{thm2}. So we omit it.

We then analyze $\hat{\se}_\textsc{hw}^2(\hat\tau_{\textsc{i}})$.
\citet[][page 228]{angrist2008mostly} give
$$N\times \hat{\se}^2_\textsc{hw}(\hat\tau_{\textsc{i}})=
N\left(
{\sum_{ij\in \mathcal T} \hat{\varepsilon}_{ij}^2  }/{n_\TG^2}+
{\sum_{ij\in \mathcal C} \hat{\varepsilon}_{ij}^2  }/{n_\CG^2}
 \right)
=N\left\{
{\sum_{ij\in \mathcal T} (Y_{ij}-\bar{Y}_\TG)^2  }/{n_\TG^2}+
{\sum_{ij\in \mathcal C} (Y_{ij}-\bar{Y}_\CG)^2  }/{n_\CG^2}
 \right\}.$$
Therefore,
\begin{eqnarray*}
&&N{\sum_{ij\in \mathcal T} (Y_{ij}-\bar{Y}_\TG)^2  }/{n_\TG^2}
= N{\sum_{ij\in \mathcal T}\{\varepsilon_{ij}(1)-\bar{\varepsilon}_\TG\}^2}/{n_\TG^2} \\
&=& N\left\{ \sum_{ij\in \mathcal T}\varepsilon_{ij}(1)^2 -n_\TG\bar{\varepsilon}_\TG^2    \right\}/{n_\TG^2}
= N\left\{{\sum_{ij\in \mathcal T}
\varepsilon_{ij}(1)^2
}/{n_\TG^2}-{\bar{\varepsilon}_\TG^2}/{n_\TG}\right\}.
\end{eqnarray*}
By Lemma \ref{moment} and Slutsky's Theorem, we have $\bar{\varepsilon}_\TG=o_\mathbb P(1)$ and
\begin{eqnarray*}
N{
 \sum_{ij\in \mathcal T}\varepsilon_{ij}(1)^2
 }/{n_\TG^2}&=&(N/n_\TG)\left\{n_\TG^{-1}\sum_{ij\in \mathcal T}\v_{ij}(1)^2\right\}
 =\left\{1/e+o_\mathbb P(1)\right\}\left\{N^{-1}\sum_{ij}  {\varepsilon_{ij}(1)^2} +o_\mathbb P(1)\right\}\\
&=&(eN)^{-1}\sum_{ij}  {\varepsilon_{ij}(1)^2} +o_\mathbb P(1).
\end{eqnarray*}
Therefore,
$N {\sum_{ij\in \mathcal T} (Y_{ij}-\bar{Y}_\TG)^2  }/{n_\TG^2}=(eN)^{-1}\sum_{ij} {\varepsilon_{ij}(1)^2}
+o_\mathbb P(1).$
We have the same result for control group. Therefore, we get the probability limit of $\hat{\se}_\textsc{hw}^2(\hat\tau_{\textsc{i}})$.
\end{proof}

\begin{proof}[Proof of the standard error in Theorem \ref{thm2}]
Lemma \ref{LZ} implies that
$$
M\times \hat {\se} _ {\textsc{lz} }^2 (\hat\tau_{\textsc{i}}^\textup{adj})
=  \left[  {G} ^ { - 1 }  {H}  {G} ^ { - 1 }  \right] _ {(1,1)+(2,2)-2(1,2) } .
$$
So the key is to analyze $  {G} ^ { - 1 }  {H}  {G} ^ { - 1 }$.
Define
$$
\Lambda=\left( \begin{array} { c c;{2pt/2pt} c c} e&0&0&0  \\
 0  & 1-e  & 0 & 0\\
 \hdashline[2pt/2pt]
 { 0 } & { 0} & e\sum _ { ij}x_{ij} x_{ij}^{\T}  / { N } & 0 \\
 0 & 0 & 0 & (1-e)\sum_{ij}   x_{ij}  x_{ij}^{\T} / { N }\end{array} \right)=\left(\begin{array} {c c} \Lambda_{11} &
 0
 \\ 0 & \Lambda_{22} \end{array}\right)
 $$
 as the expectation of $G.$
We claim that
 \begin{eqnarray}\label{eq::key-cluster-robust-se}
 \left[  {G} ^ { - 1 }  {H}  {G} ^ { - 1 } \right] _ { ( 1 - 2,1 - 2 ) } - \left[ {\Lambda} ^ { - 1 }  {H}  {\Lambda} ^ { - 1 } \right] _ { ( 1 - 2,1 - 2 ) }
 = o _ {\mathbb  P } ( 1 ),
 \end{eqnarray}
 which implies the conclusion because by Lemmas \ref{l7} and \ref{l8},
\begin{eqnarray*}
\left[   { {\Lambda} } ^ { - 1 }   {  {H} }   {  {\Lambda} } ^ { - 1 } \right] _ {(1,1)+(2,2)-2(1,2) }
&=&
\frac{M\sum_{i=1}^M Z_i(\sum_{j=1}^{n_i}\hat{r}_{ij})^2}{N^2e^2}+
\frac{M\sum_{i=1}^M (1-Z_i)(\sum_{j=1}^{n_i}\hat{r}_{ij})^2}{N^2(1-e)^2}\\
&=&\frac{M\sum_{i=1}^M \{\sum_{j=1}^{n_i} {r}_{ij}(1)\}^2}{N^2e } +
\frac{M\sum_{i=1}^M \{\sum_{j=1}^{n_i} {r}_{ij}(0)\}^2}{N^2(1-e) } +
o_\mathbb P(1).
\end{eqnarray*}

 Now we complete the proof by showing \eqref{eq::key-cluster-robust-se}. Define
 $
 {\Delta}= G- \Lambda ,
$ and $\Psi=   {  {G} } ^ { - 1 } -   {  {\Lambda} } ^ { - 1 }.
 $
By Lemma \ref{moment},
$ {\Delta}=  O_{\mathbb P} (\Omega^{1/2})$. By Assumptions \ref{assume::4} and \ref{a4}, $\Lambda$ converges to a finite and invertible matrix. So Lemma \ref{lem::18} ensures
$
\Psi = O_{\mathbb P}  (\Omega^{1/2})   .
$
By definition,
\begin{eqnarray*}
G^ { - 1 }  {H}  {G} ^ { - 1 } -  {\Lambda} ^ { - 1 }  {H}  {\Lambda} ^ { - 1 } &=& \left(  {\Lambda} ^ { - 1 } +  {\Psi} \right)  {H} \left(  {\Lambda} ^ { - 1 } + {\Psi} \right) -  {\Lambda} ^ { - 1 }  {H}  {\Lambda} ^ { - 1 }\\ &=&  {\Psi}  {H}  {\Lambda} ^ { - 1 } +  {\Lambda} ^ { - 1 }  {H}  {\Psi} +  {\Psi}  {H}  {\Psi} ,
\end{eqnarray*}
with
$$ {\Psi}  {H}  {\Lambda} ^ { - 1 } = \left( \begin{array} { l l } {  {\Psi} _ { 11 } } & { {\Psi} _ { 12 } } \\ {  {\Psi} _ { 21 } } & {  {\Psi} _ { 22 } } \end{array} \right) \left( \begin{array} { l l } {  {H} _ { 11 } } & {  {H} _ { 12 } } \\ {  {H} _ { 21 } } & { {H} _ { 22 } } \end{array} \right) \left( \begin{array} { c c } {  \Lambda _ { 11 } ^ { - 1 } } & { 0 } \\ { 0 } & { \Lambda_ { 22 } ^ { - 1 } } \end{array} \right)= \left( \begin{array} { c c c } { {  {\Psi} } _ { 11 }   {  {H} } _ { 11 }  \Lambda_ { 11 } ^ { - 1 } } & { +   {  {\Psi} } _ { 12 }   {  {H} } _ { 21 } \Lambda_ { 11 } ^ { - 1 } } & { * } \\ { * } & { } & { * } \end{array} \right),$$
where $\Psi_{11},\Psi_{12},\Psi_{21}$, and $\Psi_{22}$ are $2\times2,2\times2p_x,2p_x\times2$, and $2p_x\times2p_x$ submatrices of $\Psi$, respectively, and the $*$ elements do not matter in the proof.
 Therefore, by Lemma \ref{l8},
\begin{eqnarray*}
\left[  {\Psi}  {H} {\Lambda} ^ { - 1 } \right] _ { ( 1-2,1-2 ) } &=&  {\Psi} _ { 11 }  {H} _ { 11 }\Lambda_ { 11 } ^ { - 1 } +  {\Psi} _ { 12 }  {H} _ { 21 }\Lambda_ { 11 } ^ { - 1 }\\
&=& O _ { \mathbb P }   (\Omega^{1/2})    O _ { \mathbb P } \left(   M\Omega     \right) O _ \mathbb{ P } ( 1 )
 + O _ {\mathbb P }  (\Omega^{1/2})     O _ { \mathbb P } \left(  M\Omega   \right) O _ \mathbb{ P } ( 1 )=o_{\mathbb P}(1), \\
  {\Psi}  {H}  {\Psi}
 &=& O _ { \mathbb P }   (\Omega^{1/2})    O _ { \mathbb P } \left(   M\Omega     \right)  O _ { \mathbb P }   (\Omega^{1/2})
 = O _ { \mathbb P } ( M \Omega^2 )=o_\mathbb P(1),
\end{eqnarray*}
which imply \eqref{eq::key-cluster-robust-se}.
\end{proof}

\section{Regression estimators based on scaled cluster totals}
\label{sec::reg-clustertotaldata}

For Theorems \ref{thm::tau-T} and \ref{thm::tau-T-adj}, because in $\hat\tau_{\textsc{t}}$ and $\hat\tau_{\textsc{t}}^\textup{adj}$, we view $\tilde{Y}_{i\cdot}=\sum_{j=1}^{n_i}Y_{ij}M/N$ as a new outcome, we only need to check the regularity conditions on $\tilde{Y}_{i\cdot}(z)= \sum_{j=1}^{n_i}Y_{ij}(z)M/N$ given the results in completely randomized experiments.

\begin{proof}[Proof of Theorem \ref{thm::tau-T}]
The assumption $\Omega  =o(1)$ implies
$$
\frac{1}{M}\sum_{i=1}^M \left\{ {\sum_{j=1}^{n_{i}}Y_{ij}(z)M/N} \right\}^2
\leqholder \frac{M}{N^2}\sum_{i=1}^M n_i\sum_{j=1}^{n_{i}}Y_{ij}(z)^2
=O\left(   M\Omega    \right)=o(M).
$$
The assumption $M^{2/3}\Omega  =o(1)$ implies
$$
\frac{1}{M}\sum_{i=1}^M \left\{{\sum_{j=1}^{n_{i}}Y_{ij}(z)}M/N \right\}^4
\leqholder \frac{M^3}{N^4}\sum_{i=1}^M n_i^3\sum_{j=1}^{n_{i}}Y_{ij}(z)^4
=O\left(M^3\Omega^3\right)=o(M).
$$
Lemma \ref{DIMA} guarantees the asymptotic property of $\hat\tau_{\textsc{t}}$.
\end{proof}

\begin{proof}[Proof of Theorem \ref{thm::tau-T-adj}]
The assumption $ \Omega  =o(1)$ implies
$$
\frac{1}{M}\sum_{i=1}^M \left\{{\sum_{j=1}^{n_{i}}Y_{ij}(z)}  M/N\right\}^2
\leqholder \frac{M}{N^2}\sum_{i=1}^M n_i\sum_{j=1}^{n_{i}}Y_{ij}(z)^2
=O\left(   M\Omega    \right)=o(M).
$$
The assumption $M \Omega  =O(1)$ implies
$$
\frac{1}{M}\sum_{i=1}^M \left\{ {\sum_{j=1}^{n_{i}}Y_{ij}(z)} M/N \right\}^4
\leqholder  \frac{M^3}{N^4}\sum_{i=1}^M n_i^3\sum_{j=1}^{n_{i}}Y_{ij}(z)^4
=O\left(M^3\Omega^3\right)=O(1).
$$
Lemma \ref{l14} guarantees the asymptotic property of $\hat\tau_{\textsc{t}}^\textup{adj}$.
\end{proof}

\begin{proof}[Proof of Proposition \ref{proposition::1}]
By definition,
\begin{eqnarray}\label{last1}
E(\hat\tau_\textsc{i}-\tau)^2
&=&E\left\{ {\sum_{ij\in \mathcal T}\varepsilon_{ij}(1)}/{n_\TG}
- {\sum_{ij\in \mathcal C}\varepsilon_{ij}(0)}/{n_\CG}\right\}^2
 = E\left[ \left\{
n_\CG \sum_{ij\in \mathcal T} \varepsilon_{ij}(1)-
n_\TG \sum_{ij\in \mathcal C} \varepsilon_{ij}(0)
\right\}\big/(n_\TG n_\CG)\right]^2\nonumber\\
&\ge&
\left(
 \frac{2}{N}\right)^4E\left\{ n_\CG   \sum_{ij\in \mathcal T} \varepsilon_{ij}(1) -
n_\TG  \sum_{ij\in \mathcal C} \varepsilon_{ij}(0)
 \right\}^2,
\end{eqnarray}
where the last inequality follows from
$
4 n_\TG n_\CG\le
(n_\TG+n_\CG) ^2
=N^2
$.

If $e=1/2$ and $\bar Y(z)=0$ for $z=0,1$, then $\tau=0$ and
\begin{eqnarray}\label{last2}
E(\hat\tau_\textsc{t}-\tau)^2
=E\left( {2\sum_{ij\in \mathcal T} Y_{ij} }/{N}- {2\sum_{ij\in \mathcal C} Y_{ij} }/{N}\right)^2
=\left(\frac{2}{N}\right)^2
E\left\{\sum_{ij\in \mathcal T} \varepsilon_{ij}(1)+\sum_{ij\in \mathcal T}  \varepsilon_{ij}(0) \right\}^2.
\end{eqnarray}
From (\ref{last1}) and (\ref{last2}), we only need to prove
\begin{eqnarray}\label{last3}
E\left\{\sum_{ij\in \mathcal T}\varepsilon_{ij}(1)+\sum_{ij\in \mathcal T}\varepsilon_{ij}(0)\right\}^2\left(\frac{N}{2}\right)^2
\le E\left\{ n_\CG\sum_{ij\in \mathcal T} \varepsilon_{ij}(1)-
n_\TG \sum_{ij\in \mathcal C} \varepsilon_{ij}(0)  \right\}^2.
\end{eqnarray}
Because $\sum_{ij}\varepsilon_{ij}(z)=\sum_{ij\in \mathcal T}\varepsilon_{ij}(z)+\sum_{ij\in \mathcal C}\varepsilon_{ij}(z)=0$, we have
\begin{eqnarray*}
& &\left\{ n_\CG\sum_{ij\in \mathcal T} \varepsilon_{ij}(1)-
n_\TG \sum_{ij\in \mathcal C} \varepsilon_{ij}(0)  \right\}^2
+
\left\{ n_\TG\sum_{ij\in \mathcal C} \varepsilon_{ij}(1)-
n_\CG \sum_{ij\in \mathcal T} \varepsilon_{ij}(0)  \right\}^2\\
&=&
\left\{ n_\CG\sum_{ij\in \mathcal T} \varepsilon_{ij}(1)+
n_\TG \sum_{ij\in \mathcal T} \varepsilon_{ij}(0)  \right\}^2
+
\left\{ n_\TG\sum_{ij\in \mathcal T} \varepsilon_{ij}(1)+
n_\CG \sum_{ij\in \mathcal T} \varepsilon_{ij}(0)  \right\}^2\\
&\ge&\frac{1}{2}\left\{ n_\CG\sum_{ij\in \mathcal T} \varepsilon_{ij}(1)+
n_\TG \sum_{ij\in \mathcal T} \varepsilon_{ij}(0)
+
 n_\TG\sum_{ij\in \mathcal T} \varepsilon_{ij}(1)+
n_\CG \sum_{ij\in \mathcal T} \varepsilon_{ij}(0)  \right\}^2\\
&=&\frac{N^2}{2}\left\{  \sum_{ij\in \mathcal T} \varepsilon_{ij}(1)+
 \sum_{ij\in \mathcal T} \varepsilon_{ij}(0)
 \right\}^2.
\end{eqnarray*}
Moreover, by $e=1/2$, the probabilities of receiving treatment assignments $(Z_i)_{1\le i \le M}$ and $(1-Z_i)_{1\le i \le M}$ are the same. Therefore, \eqref{last3} holds.
\end{proof}

\section{General weighted estimand and regression-based estimators}
\label{sec::weightedestimands-reg}

\subsection{Proof of Proposition \ref{t14}}

Proposition \ref{t14}  helps to simplify the proofs of Theorems \ref{thm::wls-without-covariates} and \ref{thm::wls-with-covariates}. We prove it now although it appears later than Theorems \ref{thm::wls-without-covariates} and \ref{thm::wls-with-covariates} in the  main paper.  We only prove the result for $\hat\tau_{\textsc{a}\omega}^\textup{adj}$ because the proof for $\hat\tau_{\textsc{a}\omega}$ is similar, and the proofs for $\hat\tau_{\omega\textsc{a}}$ and $\hat\tau_{\omega\textsc{a}}^{\textup{adj}}$ follow by definition. Also, we assume $c_i=x_{ij}=\bar{x}_{i\cdot}$ in the following proof, so we can avoid the notation $\bar c_\omega$ because
$
\bar c_\omega=\sum_{i=1}^M \omega_i c_i=  N^{-1} \sum_{i=1}^M n_i \bar{x}_{i\cdot} =  N^{-1} \sum_{ij}x_{ij} =0.
$

We can verify that the coefficients and heteroskedasticity-robust standard errors are identical from WLS and OLS with both outcomes and regressors multiply by $\omega_i^{1/2}$. So our proofs are based on the latter although the theorems in the main paper are stated in terms of the former.

\begin{lemma}\label{l18}
If $x_{ij}=\bar{x}_{i\cdot}$, then $\hat Q_{\textsc{a}\omega}(1) = \hat Q_\textsc{i}(1)$ based on the treated data, where $\hat Q_{\textsc{a}\omega}(1) $ equals the
 the coefficient of $ \omega_i^{1/2}\bar{x}_{i\cdot}$ in the OLS fit of $\omega_i^{1/2}\bar{Y}_{i\cdot}$ on $(\omega_i^{1/2},\omega_i^{1/2}\bar{x}_{i\cdot})$, and $\hat Q_\textsc{i}(1)$ equals the coefficient of $x_{ij} $ in the OLS fit of $Y_{ij}$ on $(1,x_{ij})$. The same conclusion holds for the control group.
\end{lemma}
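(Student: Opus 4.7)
The plan is to expand both $\hat Q_\textsc{i}(1)$ and $\hat Q_{\textsc{a}\omega}(1)$ via partitioned regression (Frisch--Waugh--Lovell) and show the two closed forms coincide under $x_{ij}=\bar x_{i\cdot}$. I will work only with the treated subsample $\mathcal T$; the control case is identical.

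For $\hat Q_\textsc{i}(1)$, the normal equations yield
$$
\hat Q_\textsc{i}(1)=\Bigl\{\sum_{ij\in\mathcal T}(x_{ij}-\bar x_\TG)(x_{ij}-\bar x_\TG)^\T\Bigr\}^{-1}\sum_{ij\in\mathcal T}(x_{ij}-\bar x_\TG)(Y_{ij}-\bar Y_\TG).
$$
Substituting $x_{ij}=\bar x_{i\cdot}$ collapses the inner sum over $j$ into a multiplication by $n_i$, so $\sum_{ij\in\mathcal T}x_{ij}=\sum_{i\in\mathcal T} n_i\bar x_{i\cdot}$, $\sum_{ij\in\mathcal T}x_{ij}x_{ij}^\T=\sum_{i\in\mathcal T} n_i\bar x_{i\cdot}\bar x_{i\cdot}^\T$, and $\sum_{ij\in\mathcal T}x_{ij}Y_{ij}=\sum_{i\in\mathcal T} n_i\bar x_{i\cdot}\bar Y_{i\cdot}$. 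Consequently $\bar x_\TG=n_\TG^{-1}\sum_{i\in\mathcal T} n_i\bar x_{i\cdot}$ and $\bar Y_\TG=n_\TG^{-1}\sum_{i\in\mathcal T} n_i\bar Y_{i\cdot}$, so after plugging in,
$$
\hat Q_\textsc{i}(1)=\Bigl\{\sum_{i\in\mathcal T}n_i(\bar x_{i\cdot}-\bar x_\TG)(\bar x_{i\cdot}-\bar x_\TG)^\T\Bigr\}^{-1}\sum_{i\in\mathcal T}n_i(\bar x_{i\cdot}-\bar x_\TG)(\bar Y_{i\cdot}-\bar Y_\TG).
$$

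For $\hat Q_{\textsc{a}\omega}(1)$, I regress $\omega_i^{1/2}\bar Y_{i\cdot}$ on $(\omega_i^{1/2},\omega_i^{1/2}\bar x_{i\cdot})$ by OLS over $i\in\mathcal T$; this is numerically identical to WLS of $\bar Y_{i\cdot}$ on $(1,\bar x_{i\cdot})$ with weights $\omega_i=n_i/N$. Applying partitioned regression (Frisch--Waugh--Lovell) to project out the weighted intercept, the relevant weighted means are $\bar x_{\textsc{a}\omega,\TG}=(\sum_{i\in\mathcal T}\omega_i)^{-1}\sum_{i\in\mathcal T}\omega_i\bar x_{i\cdot}$ and similarly for $\bar Y_{\textsc{a}\omega,\TG}$. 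Because $\omega_i\propto n_i$ with a common factor $1/N$ that cancels, $\bar x_{\textsc{a}\omega,\TG}=\bar x_\TG$ and $\bar Y_{\textsc{a}\omega,\TG}=\bar Y_\TG$. The same cancellation of $1/N$ in numerator and denominator gives
$$
\hat Q_{\textsc{a}\omega}(1)=\Bigl\{\sum_{i\in\mathcal T}n_i(\bar x_{i\cdot}-\bar x_\TG)(\bar x_{i\cdot}-\bar x_\TG)^\T\Bigr\}^{-1}\sum_{i\in\mathcal T}n_i(\bar x_{i\cdot}-\bar x_\TG)(\bar Y_{i\cdot}-\bar Y_\TG),
$$
which matches $\hat Q_\textsc{i}(1)$ term by term.

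There is no real obstacle here: everything is algebraic. The only place to be careful is recognizing that the ``intercept'' $\omega_i^{1/2}$ in the cluster-level regression plays the role of a weighted constant (so Frisch--Waugh--Lovell removes weighted means rather than unweighted means), and that under $x_{ij}=\bar x_{i\cdot}$ the individual-level moments reduce exactly to the $n_i$-weighted cluster-level moments -- the same weights (up to the harmless factor $1/N$) that appear in WLS. The argument for the control group and, by the same token, for $\hat Q_{\textsc{a}\omega}(0)=\hat Q_\textsc{i}(0)$ is identical.
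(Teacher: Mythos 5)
Your proof is correct and follows essentially the same route as the paper: both apply Frisch--Waugh--Lovell to strip the (weighted) intercept, observe that the $\omega_i$-weighted cluster means coincide with $\bar x_\TG$ and $\bar Y_\TG$ because $\omega_i \propto n_i$, and collapse the individual-level sums over $j$ using $x_{ij}=\bar x_{i\cdot}$ so that both slope formulas reduce to the same $n_i$-weighted cluster-level expression. No gaps.
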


\begin{proof}[Proof of Lemma \ref{l18}]
We use the Frisch--Waugh--Lovell Theorem \citep{angrist2008mostly} for the regression using the treated data. The coefficient of $\omega_i^{1/2}$ in the OLS fit of $\omega_i^{1/2}\bar{Y}_{i\cdot}$ on $\omega_i^{1/2}$ is
$
{\sum_{i=1}^M Z_i\bar{Y}_{i\cdot}\omega_i^{1/2}\omega_i^{1/2}}\big/{\sum_{i=1}^M Z_i \omega_i^{1/2}\omega_i^{1/2}}=
n_\TG^{-1} {\sum_{ij\in \mathcal T}Y_{ij}}=\bar Y_\TG
$
with residual
$\check Y_{i}=\omega_i^{1/2}(\bar{Y}_{i\cdot}-\bar{Y}_\TG)$. Similarly, the residual from the OLS fit of $\omega_i^{1/2}\bar x_{i\cdot}$ on $\omega_i^{1/2}$ is
$\check x_{i}=\omega_i^{1/2}(\bar{x}_{i\cdot} -\bar{x}_\TG )$. So $ \hat Q_{\textsc{a}\omega}(1)   =(\sum_{i=1}^M  Z_i\check x_i \check x_i^{\T})^{-1}\sum_{i=1}^M  Z_i \check x_i \check Y_i$. Its numerator reduces to
\begin{eqnarray*}
\sum_{i=1}^M Z_i \check x_{i} \check Y_{i}
&=&\sum_{i=1}^M \omega_iZ_i(\bar{x}_{i\cdot}-\bar{x}_\TG) (\bar{Y}_{i\cdot}-\bar{Y}_\TG)
=N^{-1} \sum_{ij\in \mathcal T}
(\bar{x}_{i\cdot}-\bar{x}_\TG) (Y_{ij}-\bar{Y}_\TG).
\end{eqnarray*}
Its denominator has a similar formula. Therefore,
\begin{eqnarray*}
\hat Q_{\textsc{a}\omega}(1)
&=&\left\{\sum_{i=1}^M \omega_iZ_i(\bar{x}_{i\cdot}-\bar{x}_\TG)( \bar{x}_{i\cdot}-\bar{x}_\TG)^{\T} \right\}^{-1}\sum_{i=1}^M \omega_iZ_i(\bar{x}_{i\cdot}-\bar{x}_\TG) (\bar{Y}_{i\cdot}-\bar{Y}_\TG)\\
&=&\left\{ \sum_{ij\in \mathcal T} (  \bar{x}_{i\cdot}-\bar{x}_\TG)
( \bar{x}_{i\cdot}-\bar{x}_\TG)^{\T}  \right\}^{-1} \sum_{ij\in \mathcal T} ( \bar{x}_{i\cdot}-\bar{x}_\TG)
(Y_{ij}-\bar{Y}_\TG) =  \hat Q_\textsc{i}(1).
\end{eqnarray*}
\end{proof}

Because of the equivalence in Lemma \ref{l18}, we use $\hat Q (1)$ in the following proof for both $\hat Q_{\textsc{a}\omega}(1)$ and $ \hat Q_\textsc{i}(1)$
when $x_{ij}=\bar{x}_{i\cdot}$.

\begin{proof}[Proof of Proposition \ref{t14}]
By the property of the OLS using the treated data, the coefficient of $\omega_i^{1/2}$ in the OLS fit of $\omega_i^{1/2}\bar{Y}_{i\cdot}$ on $(\omega_i^{1/2},\omega_i^{1/2}\bar{x}_{i\cdot})$ is the same as the coefficient of $\omega_i^{1/2}$ in the OLS fit of $\omega_i^{1/2}\{
\bar{Y}_{i\cdot}-\bar{x}_{i\cdot}^{\T}  \hat Q (1) \} $ on $\omega_i^{1/2}$:
$$
 {
\sum_{i=1}^M Z_i \{  \bar{Y}_{i\cdot}-\bar{x}_{i\cdot}^{\T}  \hat Q (1)   \} \omega_i^{1/2}\omega_i^{1/2}
}\big/
{\sum_{i=1}^M Z_i\omega_i^{1/2} \omega_i^{1/2}}=
n_\TG^{-1} {\sum_{ij\in \mathcal T} \{  Y_{ij}-\bar{x}_{i\cdot}^{\T}  \hat Q (1)\} } .
$$
We have a similar result for the control group.
Because $\hat\tau_{\textsc{a}\omega}^\textup{adj}$ is the difference between the coefficients of $\omega_i^{1/2}$ in the OLS fit of $ \omega_i^{1/2}\bar{Y}_{i\cdot}$ on $(\omega_i^{1/2},\omega_i^{1/2}\bar{x}_{i\cdot})$ in treatment and control groups, we have $\hat\tau_{\textsc{a}\omega}^\textup{adj}=\hat\tau_{\textsc{i}}^\textup{adj}(\bar{x}_{i\cdot})$ based on the formula of $\hat\tau_{\textsc{i}}^\textup{adj}$ in the proof of Theorem \ref{thm2}.

The residual from the OLS fit of $\omega_i^{1/2}\bar{Y}_{i\cdot}$ on $( \omega_i^{1/2},\omega_i^{1/2}\bar{x}_{i\cdot})$ in treatment group is
$$
\omega_i^{1/2}\left\{\bar{Y}_{i\cdot}-\bar Y_\TG-(\bar{x}_{i\cdot}-\bar x_\TG)^{\T}  \hat Q (1) \right\}=
\omega_i^{1/2}{\sum_{j=1}^{n_{i}} \left\{Y_{ij}-\bar Y_\TG-(\bar{x}_{i\cdot}-\bar x_\TG)^{\T} \hat Q (1)  \right\}}/n_i=
\omega_i^{1/2}{\sum_{j=1}^{n_{i}}\hat{r}_{ij}}/n_i,
$$
where $\hat r_{ij}$ is defined in Section  \ref{sec::2.2}.
We have a similar result for the control group. Therefore, the residual from the OLS fit of $\omega_i^{1/2}\bar{Y}_{i\cdot}$ on $(\omega_i^{1/2},\omega_i^{1/2}Z_i,\omega_i^{1/2}\bar{x}_{i\cdot},\omega_i^{1/2}Z_i \bar{x}_{i\cdot} )$ is $\omega_i^{1/2}{\sum_{j=1}^{n_{i}}\hat{r}_{ij}}/n_i$.

Let $\tilde X_{\textsc{a}\omega}$ be an $M\times (2+2p_x)$ matrix with row $i$ equaling $\tilde x_i^\T
=[\omega_i^{1/2}Z_i,
\omega_i^{1/2}(1-Z_i),\omega_i^{1/2}\bar{x}_{i\cdot}^{\T}Z_i,
\omega_i^{1/2}\bar{x}_{i\cdot}^{\T}(1-Z_i)]$. Recall the definitions of $G, H_i$ and $H$ in Section  \ref{sec::2.2}. We can verify that
$
\tilde X_{\textsc{a}\omega}^{\T} \tilde X_{\textsc{a}\omega}=G,
$
and
\begin{eqnarray*}
& &\tilde x_{i} \left(\omega_i^{1/2}\sum_{j=1}^{n_i}\hat {r}_{ij}/n_i\right)^2\tilde x_{i}^\T  \\
&=&\left( \begin{array} { c c } \omega_i^{1/2}Z_i\\
\omega_i^{1/2}(1-Z_i)\\ \omega_i^{1/2} \bar{x}_{i\cdot}  Z_i\\ \omega_i^{1/2} \bar{x}_{i\cdot} (1-Z_i)  \end{array} \right)\left(
\omega_i^{1/2} \sum_{j=1}^{n_i}\hat {r}_{ij}/n_i\right)^2
\left(\omega_i^{1/2}Z_i,
\omega_i^{1/2}(1-Z_i),\omega_i^{1/2}\bar{x}_{i\cdot}^{\T}Z_i,
\omega_i^{1/2}\bar{x}_{i\cdot}^{\T}(1-Z_i)\right)
= H_i/N^2 .
\end{eqnarray*}
So Lemma \ref{LZ} implies
\begin{eqnarray*}
 M\times \hat{\se}_\textsc{hw}^2(\hat\tau_{\textsc{a}\omega}^\textup{adj})
&=&
M\left[\left(\tilde X_{\textsc{a}\omega}^{\T} \tilde X_{\textsc{a}\omega}\right)^{-1} \left\{\sum_{i=1}^M
\tilde x_{i}\left(\omega_i^{1/2}\sum_{j=1}^{n_i}\hat {r}_{ij}/n_i\right)^2\tilde x_{i}^{\T}\right\}
\left(\tilde X_{\textsc{a}\omega}^{\T} \tilde X_{\textsc{a}\omega}\right)^{-1}\right]_{(1,1)+(2,2)-2(1,2)}\\
&=&\left[MG^{-1}
\sum_{i=1}^M H_i
G^{-1}/N^2\right]_{(1,1)+(2,2)-2(1,2)} \\
&=&  [G^{-1}HG^{-1}]_{(1,1)+(2,2)-2(1,2)}=M\times \hat{\se}_\textsc{lz}^2\{\hat\tau_{\textsc{i}}^\textup{adj}(\bar{x}_{i\cdot})\}.
\end{eqnarray*}
\end{proof}

\subsection{Proofs of Theorems \ref{thm::wls-without-covariates} and \ref{thm::wls-with-covariates}}
Proposition \ref{t14} shows that when the cluster size is $n_i$, OLS estimators based on individual-level data are equivalent to WLS estimators based on cluster averages with $\omega_i$ as the weight. We can use this result reversely to prove Theorems \ref{thm::wls-without-covariates} and \ref{thm::wls-with-covariates}. If we want to analyze WLS estimators based on cluster averages with $\pi_i$ as the weight, we can build a finite population with cluster sizes $m_i$ so that $\pi_i=m_i/N_\pi$ with $N_\pi=\sum_{i=1}^M m_{i}$, and establish the properties of $\hat\tau_{\textsc{a}\pi}$ and $\hat\tau_{\textsc{a}\pi}^\textup{adj}$ by applying Theorems \ref{thm::tau-i}, \ref{thm2} and Proposition \ref{t14} on this population. We can find those $m_i$'s when the $\pi_i$'s are rational numbers, that is, they equal ratios of positive integers. Although the $m_i$'s do not exist when the $\pi_i$'s are general real numbers, we can use rational numbers to approximate real numbers arbitrarily well. So the idea of the proof also applies to general real $\pi_i$'s.
This is a pure technical issue, but we include the proof for completeness.
We only prove Theorem \ref{thm::wls-with-covariates} for $\hat\tau_{\textsc{a}\pi}^\textup{adj}$ below because the proof of Theorem \ref{thm::wls-without-covariates} for $\hat\tau_{\textsc{a}\pi}$ is similar.

\begin{proof}[Proof of Theorem \ref{thm::wls-with-covariates}]
We first prove the case with rational $\pi_{i}$'s.
Let $\pi_i=m_i/ N_\pi $. Construct a finite population with $M$ clusters and $N_\pi=\sum_{i=1}^M m_i$ units, where cluster $i$ has $m_i$ units. Unit $(i,j)$ has potential outcomes $\bar{Y}_{i\cdot}(z)$ and unit-level covariates $x_{ij}=\bar{x}_{i\cdot}=c_i-\bar c_\pi$ ($i=1,\ldots,M; \ j=1,\ldots,m_i$).
Proposition \ref{t14} ensures that $\hat\tau_{\textsc{a}\pi}^\textup{adj} = \hat\tau_{\textsc{i}}^\textup{adj}(c_i)$. Therefore, we only need to check the regularity conditions in Theorem \ref{thm2} to prove Theorem \ref{thm::wls-with-covariates}.

The regularity conditions in Theorem \ref{thm::wls-with-covariates} and Assumptions \ref{assume::wls-cluster-average} and \ref{a66} imply  the conditions for the consistency and asymptotic Normality of $\hat\tau_{\textsc{i}}^\textup{adj}(c_i)$ in Theorem \ref{thm2}: Assumption \ref{assume::4} holds automatically; Assumption \ref{assume::1} holds because
$$
{  N_\pi^{-1} \sum_{i=1}^M m_i \bar{Y}_{i\cdot}(z)^4}=\sum_{i=1}^M
\pi_i \bar{Y}_{i\cdot}(z)^4=O(1) ,\quad { N_\pi^{-1} \sum_{i=1}^M m_i ||\bar{x}_{i\cdot}||_{\infty}^4}=\sum_{i=1}^M \pi_{i} ||c_i-\bar c_\pi||_{\infty}^4=O(1);
$$
Assumption \ref{a4} holds because
${N_\pi^{-1}  \sum_{i=1}^M\sum_{j=1}^{m_i} (c_i-\bar c_\pi)\bar{Y}_{i\cdot}(z)}=\sum_{i=1}^M \pi_i
(c_i-\bar c_\pi)\bar{Y}_{i\cdot}(z)$ converges to a finite vector, and ${N_\pi^{-1}  \sum_{i=1}^M\sum_{j=1}^{m_i} (c_i-\bar c_\pi)(c_i-\bar c_\pi)^\T}=\sum_{i=1}^M \pi_i
(c_i-\bar c_\pi)(c_i-\bar c_\pi)^\T$ converges to a finite and invertible matrix.
We can also verify the conditions on the maximum of the proportions of clusters by noticing that $\max_{1\le i \le M}m_i /{ N_\pi }=\max_{1\le i \le M}\pi_i$.

The asymptotic variance and estimated variance are $ V \{r_i(z)\}$ and ${ V }_\textup{c}\{r_i(z)\}$, respectively, where
$$
r_i(z)=\sum_{j=1}^{m_i} \{ \bar{Y}_{i\cdot}(z)-\bar Y_\pi(z)- (c_i-\bar c_\pi)^\T Q_{\textsc{a}\pi}(z)\}M/ N_\pi
=M\pi_i\{ \bar{Y}_{i\cdot}(z)-\bar Y_\pi(z)- (c_i-\bar c_\pi)^\T Q_{\textsc{a}\pi}(z) \}.
$$

We then prove the case with general real $\pi_i$'s.
Assume there is a series of $\epsilon_M$ which converges to 0 when $M\rightarrow +\infty$.
Because the population and the number of treatment assignments are finite, the formulas and probability limits of OLS coefficients, variance and estimated variance are continuous functions of $\pi_i$, we can find rational numbers $\ddot\pi_i$ close to $\pi_i$ so that for all treatment assignments, $\sum_{i=1}^M \ddot\pi_i \bar{Y}_{i\cdot}(z)^4=O(1),\  \sum_{i=1}^M \ddot\pi_i ||c_i||_{\infty}^4=O(1),\ M^{2/3}
\max_{1\le i \le M}|\ddot \pi_i-\pi_i|=o(1)$, $  \sum _ { i = 1 } ^ { M } \ddot\pi_{i}(c_i-\bar c_{\ddot\pi}) \bar{Y}_{i\cdot}(z)$ converges to a finite vector, and $  \sum _ { i = 1 } ^ { M }\ddot\pi_{i} (c_i-\bar c_{\ddot\pi}) (c_i-\bar c_{\ddot\pi})^{\T}$ converges to a finite and invertible matrix, $M^{1/2}|\hat\tau_{\textsc{a}\ddot\pi}^\textup{adj}-\hat\tau_{\textsc{a}\pi}^\textup{adj}|<\epsilon_M$, $M^{1/2}|\tau_{\ddot\pi}-\tau_{\pi}|<\epsilon_M$, $M^{1/2}|  \se(\hat\tau_{\textsc{a}\ddot\pi}^\textup{adj})- \se(\hat\tau_{\textsc{a}\pi}^\textup{adj})|<\epsilon_M$, $M|\hat \se_{\textsc{hw}}^2(\hat\tau_{\textsc{a}\ddot\pi}^\textup{adj})-\hat \se_{\textsc{hw}}^2(\hat\tau_{\textsc{a}\pi}^\textup{adj})|<\epsilon_M$, $
  | V_\textup{c}\{\ddot r_i(z) \}-V_\textup{c}\{r_i(z)\}|<\epsilon_M
$, where those notations with double dots are corresponding values derived from the OLS fit of $\ddot\pi_i^{1/2}\bar{Y}_{i\cdot}$ on $\{  \ddot\pi_i^{1/2}
,\ddot\pi_i^{1/2}Z_i,\ddot\pi_i^{1/2}(c_i-\bar c_{\ddot\pi})
,\ddot\pi_i^{1/2}Z_i(c_i-\bar c_{\ddot\pi})\} $.
By the results for rational $\ddot\pi_{i}$'s and Slutsky's Theorem, Theorem \ref{thm::wls-with-covariates} holds.
\end{proof}

\subsection{Proofs of Theorems \ref{thm8} and \ref{thm9}}
Finally, we prove Theorems \ref{thm8} and \ref{thm9}. Because $\hat\tau_{\pi\textsc{a}}$ and $\hat\tau_{\pi\textsc{a}}^\textup{adj}$ treat $M\pi_i\bar{Y}_{i\cdot}$ as a new outcome, we only need to check the regularity conditions on $M\pi_i\bar{Y}_{i\cdot}(z)$ given the results in completely randomized experiments.
\begin{proof}[Proof of Theorem \ref{thm8}]
The assumption $\sum_{i=1}^M \pi_i^2\bar{Y}_{i\cdot}(z)^2=o(1)$ implies
$
M^{-1} \sum_{i=1}^M \left\{ M\pi_i\bar{Y}_{i\cdot}(z) \right\}^2
= M\sum_{i=1}^M \pi_i^2\bar{Y}_{i\cdot}(z)^2 =o(M).
$
The assumption $\sum_{i=1}^M M^2\pi_i^4\bar{Y}_{i\cdot}(z)^4=o(1)$ implies
$
M^{-1}  \sum_{i=1}^M \left\{M\pi_i\bar{Y}_{i\cdot}(z)\right\}^4
= M^3\sum_{i=1}^M \pi_i^4\bar{Y}_{i\cdot}(z)^4 =o(M).
$
Lemma \ref{DIMA} guarantees the asymptotic properties of $\hat\tau_{\pi\textsc{a}}$.
\end{proof}
\begin{proof}[Proof of Theorem \ref{thm9}]
The assumption $\sum_{i=1}^M \pi_i^2\bar{Y}_{i\cdot}(z)^2=o(1)$ implies
$
M^{-1}   \sum_{i=1}^M \left\{ M\pi_i\bar{Y}_{i\cdot}(z) \right\}^2
= M\sum_{i=1}^M \pi_i^2\bar{Y}_{i\cdot}(z)^2 =o(M).
$
The assumption $\sum_{i=1}^M M^3\pi_i^4\bar{Y}_{i\cdot}(z)^4=O(1)$ implies
$
M^{-1}   \sum_{i=1}^M \left\{M\pi_i\bar{Y}_{i\cdot}(z)\right\}^4
= M^3\sum_{i=1}^M \pi_i^4\bar{Y}_{i\cdot}(z)^4 =O(1).
$
Lemma \ref{l14} guarantees the asymptotic properties of $\hat\tau_{\pi\textsc{a}}^\textup{adj}$.
\end{proof}

\section{More detailed results}
\label{sec::details-simulation}

Tables \ref{tab:simulation-studies} and \ref{tab:simulation-studies2} give more details for Section \ref{sec:simulation-section}. Table \ref{table::application} gives more details for Section \ref{sec::application}. 
 
 \begin{table}[ht]
 \centering
      \caption{Simulation: Part I} \label{tab:simulation-studies}
          \begin{subtable}[h]{\textwidth}
        \centering
\caption{Simulation in Section \ref{sec::compare-est-se} }\label{tb::sec1}
\begin{tabular}{lcccccccc}
\hline
& $\hat{\tau}_\textsc{i}$ & $\hat{\tau}_{\textsc{i}}^\textup{adj}$ & $\hat{\tau}_{\textsc{i}}^\textup{adj}(\bar x_{i\cdot})$&
$\hat{\tau}_{\textsc{i}}^\textup{ancova}$\\
 \hline
bias ($\times$100)                    &$0.21 $ & $0.02 $ & $-0.11$ & $0.18 $ \\
se ($\times$100)                      &$7.89 $ & $6.39 $ & $6.32 $ & $6.79 $ \\
$\hat{\se}_\textsc{ols}$ ($\times$100)&$4.66 $ & $4.48 $ & $4.56 $ & $4.55 $ \\
$\hat{\se}_\textsc{hw}$ ($\times$100) &$4.86 $ & $4.48 $ & $4.75 $ & $4.60 $ \\
$\hat{\se}_\textsc{lz}$ ($\times$100) &$8.75 $ & $7.59 $ & $7.39 $ & $7.80 $ \\
rmse ($\times$100)                    &$7.88 $ & $6.38 $ & $6.32 $ & $6.79 $ \\
coverage rate${}_\textsc{ols}$        &$0.749$ & $0.837$ & $0.841$ & $0.815$ \\
coverage rate${}_\textsc{hw}$         &$0.774$ & $0.835$ & $0.855$ & $0.817$ \\
coverage rate${}_\textsc{lz}$         &$0.968$ & $0.978$ & $0.966$ & $0.968$ \\
 \hline
&
$\hat{\tau}_{\textsc{t}}$&$\hat{\tau}_{\textsc{t}}^\textup{adj}(n_i)$&$\hat{\tau}_{\textsc{t}}^\textup{adj}(n_i,\tilde{x}_{i\cdot})$
&$\hat\tau_{\textsc{a}} $
&$\hat\tau_{\textsc{a}}^\textup{adj}(\bar x_{i\cdot})$& $\hat{\tau}_{\textsc{a}\omega}$
& $\hat{\tau}_{\textsc{a}\omega}^\textup{adj}(\bar x_{i\cdot})$\\
\hline
bias ($\times$100)                    &$0.46 $ & $0.18 $ & $-0.11$ & $-8.74$ & $-8.96$ & $0.21 $ & $-0.11$ \\
se ($\times$100)                      &$12.80$ & $6.42 $ & $3.48 $ & $8.14 $ & $6.58 $ & $7.89 $ & $6.32 $ \\
$\hat{\se}_\textsc{ols}$ ($\times$100)&$10.70$ & $6.87 $ & $4.85 $ & $7.95 $ & $6.39 $ & $7.72 $ & $6.23 $ \\
$\hat{\se}_\textsc{hw}$ ($\times$100) &$13.87$ & $6.91 $ & $4.31 $ & $9.07 $ & $7.68 $ & $8.75 $ & $7.39 $ \\
rmse ($\times$100)                    &$12.80$ & $6.42 $ & $3.48 $ & $11.94$ & $11.12$ & $7.88 $ & $6.32 $ \\
coverage rate${}_\textsc{ols}$        &$0.892$ & $0.963$ & $0.994$ & $0.811$ & $0.711$ & $0.935$ & $0.936$ \\
coverage rate${}_\textsc{hw}$         &$0.957$ & $0.961$ & $0.987$ & $0.869$ & $0.824$ & $0.968$ & $0.966$ \\
 \hline
\end{tabular}
    \end{subtable}

\bigskip
    \begin{subtable}[h]{\textwidth}
        \centering
\caption{Simulation in Section \ref{sec::noise_covariate}}\label{table::noise_covariate}
\begin{tabular}{lcccccccccc}
\hline
& $\hat{\tau}_\i$ &$\hat{\tau}_{\i}^{\a}$&$\hat{\tau}_{\i}^{\a}(\bar x_{i\cdot})$
&$\hat{\tau}_{\i}^{\textup{ancova}}$
&$\hat{\tau}_{\t}$&$\hat{\tau}_{\t}^\a(n_i)$&
$\hat{\tau}_{\t}^\a(n_i,\tilde{x}_{i\cdot})$&$\hat\tau_{\textsc{a}}$
&$\hat\tau_{\textsc{a}}^{\a}(\bar{x}_{i\cdot})$\\
 \hline
bias ($\times$100)  &        $-0.11$ & $-0.11$ & $-0.15$ & $-0.11$ & $0.15 $ & $-0.17$ & $-0.21$ & $-10.17$ & $-10.04$ \\
se ($\times$100)&            $6.09 $ & $6.09 $ & $6.05 $ & $6.08 $ & $11.58$ & $3.24 $ & $3.28 $ & $6.32  $ & $6.28  $ \\
$\hat{\se}$ ($\times$100) &  $7.68 $ & $7.67 $ & $7.61 $ & $7.67 $ & $13.48$ & $4.63 $ & $4.62 $ & $7.97  $ & $7.89  $ \\
rmse ($\times$100) &         $6.09 $ & $6.08 $ & $6.04 $ & $6.08 $ & $11.58$ & $3.24 $ & $3.28 $ & $11.97 $ & $11.84 $ \\
coverage rate&               $0.982$ & $0.982$ & $0.979$ & $0.982$ & $0.973$ & $0.993$ & $0.99 $ & $0.808 $ & $0.813 $ \\
\hline
\end{tabular}
    \end{subtable}
\end{table}

 \begin{table}[ht]
 \centering
      \caption{Simulation: Part II} \label{tab:simulation-studies2}
      \addtocounter{subtable}{+2} 
    \begin{subtable}[h]{\textwidth}
        \centering
\caption{Simulation in Section  \ref{sec::6.2}}\label{table::S2}
\begin{tabular}{lcccccccc}
\hline
& $\hat{\tau}_\textsc{i}$ &$\hat{\tau}_{\textsc{i}}^{\textup{adj}}$
&$\hat{\tau}_{\textsc{i}}^{\textup{ancova}}$
&$\hat{\tau}_{\textsc{t}}$&$\hat{\tau}_{\textsc{t}}^\textup{adj}(n_i)$&
$\hat{\tau}_{\textsc{t}}^\textup{adj}(n_i,\tilde{x}_{i\cdot})$&$\hat\tau_{\textsc{a}}$
&$\hat\tau_{\textsc{a}}^{\textup{adj}}(\bar{x}_{i\cdot})$\\
 \hline
bias ($\times$10)  &         $0.06 $   &$0.06 $&$0.05 $ &$0.01 $ &$-0.05 $&$0.01 $&$7.51 $ &$7.44 $\\
se ($\times$10)&             $1.56 $   &$1.69 $&$1.68 $ &$1.55 $ &$0.65  $&$0.36 $&$2.50 $ &$1.82 $\\
$\hat{\se}$ ($\times$10) &  $2.18 $   &$2.31 $&$2.25 $ &$2.17 $ &$0.87  $&$0.50 $&$3.49 $ &$2.52 $\\
rmse ($\times$10) &          $1.56 $   &$1.69 $&$1.68 $ &$1.55 $ &$0.65  $&$0.36 $&$7.92 $ &$7.66 $\\
coverage rate&               $0.989$   &$0.991$&$0.988$ &$0.990$ &$0.992 $&$0.991$&$0.400$ &$0.099$\\
\hline
\end{tabular}
    \end{subtable}

\bigskip
    \begin{subtable}[h]{\textwidth}
        \centering
\caption{Simulation in Section  \ref{sec::Extremeclusters1} }
\begin{tabular}{lccccccccc}
\hline
& $\hat{\tau}_\textsc{i}$ &$\hat{\tau}_{\textsc{i}}^{\textup{adj}}$
&$\hat{\tau}_{\textsc{i}}^{\textup{ancova}}$
&$\hat{\tau}_{\textsc{t}}$&$\hat{\tau}_{\textsc{t}}^\textup{adj}(n_i)$&
$\hat{\tau}_{\textsc{t}}^\textup{adj}(n_i,\tilde{x}_{i\cdot})$&$\hat{\tau}_{\textsc{t}}^\textup{adj}(\bar{x}_{i\cdot})$
&$\hat\tau_{\textsc{a}}$
&$\hat\tau_{\textsc{a}}^{\textup{adj}}(\bar{x}_{i\cdot})$\\
 \hline
bias ($\times$10)  &         $0.04  $&$ 0.00  $&$ -0.03 $&$ 0.01  $&$ -0.31 $&$ -0.31 $&$ -0.05 $&$ 6.53  $&$ 6.46  $\\
se ($\times$10)&             $1.93  $&$ 1.88  $&$ 1.88  $&$ 1.92  $&$ 1.90  $&$ 1.61  $&$ 1.69  $&$ 2.50  $&$ 1.82  $\\
$\hat{\se}$ ($\times$10) &  $2.66  $&$ 2.55  $&$ 2.59  $&$ 2.65  $&$ 1.91  $&$ 1.39  $&$ 2.30  $&$ 3.49  $&$ 2.52  $\\
rmse ($\times$10) &          $1.93  $&$ 1.87  $&$ 1.88  $&$ 1.92  $&$ 1.92  $&$ 1.64  $&$ 1.69  $&$ 6.99  $&$ 6.71  $\\
coverage rate&               $0.971 $&$ 0.985 $&$ 0.983 $&$ 0.977 $&$ 0.668 $&$ 0.505 $&$ 0.986 $&$ 0.549 $&$ 0.223 $\\
\hline
\end{tabular}
     \end{subtable}

\bigskip
    \begin{subtable}[h]{\textwidth}
        \centering
\caption{Simulation in Section  \ref{sec::extremeclusters2} }\label{tb::1dominatingcluster}
\begin{tabular}{lcccccccc}
\hline
& $\hat{\tau}_\textsc{i}$ &$\hat{\tau}_{\textsc{i}}^{\textup{adj}}$&
$\hat{\tau}_{\textsc{i}}^{\textup{ancova}}$&
$\hat{\tau}_{\textsc{t}}$&$\hat{\tau}_{\textsc{t}}^\textup{adj}(n_i)$&
$\hat{\tau}_{\textsc{t}}^\textup{adj}(n_i,\tilde{x}_{i\cdot})$&$\hat\tau_{\textsc{a}}$
&$\hat\tau_{\textsc{a}}^{\textup{adj}}(\bar{x}_{i\cdot})$\\
 \hline
bias ($\times$10)  &         $-0.30$& $-0.26$ &$-0.06$ &$-0.01$ & $-0.59$ & $-0.40$ & $0.01 $ & $-0.01$ \\
se ($\times$10)&             $3.41$ & $1.64 $ &$1.28 $ &$7.13 $ & $2.97 $ & $2.03 $ & $0.70 $ & $0.60 $ \\
$\hat{\se}$ ($\times$10) &  $1.96$ & $0.70 $ &$0.69 $ &$6.93 $ & $0.92 $ & $0.90 $ & $0.70 $ & $0.60 $ \\
rmse ($\times$10) &          $3.42$ & $1.66 $ &$1.28 $ &$7.13 $ & $3.03 $ & $2.07 $ & $0.70 $ & $0.60 $ \\
coverage rate&               $0.720$& $0.383$ &$0.628$ &$1.000$ & $0.186$ & $0.459$ & $0.948$ & $0.951$ \\
\hline
\end{tabular}
  \end{subtable}
\end{table}

   \begin{table}\centering
   \caption{Data analysis and additional simulation in Section  \ref{sec::application}}\label{table::application}
       \begin{subtable}[h]{\textwidth}
       \centering
\caption{Point estimate, standard error and confidence interval from \citet{guiteras2015encouraging}}\label{table::version2}
\begin{tabular}{lcccccccccc}
\hline
& $\hat{\tau}_\i$ &$\hat{\tau}_{\i}^{\a}$&$\hat{\tau}_{\i}^{\a}(\bar x_{i\cdot})$
&$\hat{\tau}_{\i}^{\textup{ancova}}$
&$\hat{\tau}_{\t}$&$\hat{\tau}_{\t}^\a(n_i)$&
$\hat{\tau}_{\t}^\a(n_i,\tilde{x}_{i\cdot})$&$\hat\tau_{\textsc{a}}$
&$\hat\tau_{\textsc{a}}^{\a}(\bar{x}_{i\cdot})$\\
 \hline
est  &        $0.178 $ & $0.159 $ & $0.157$ & $0.160 $ & $0.231$ & $0.180$ & $0.152$ & $0.182$ & $0.177$ \\
se  &            $0.044$ & $0.025$ & $0.023 $ & $0.025 $ & $0.085 $ & $0.044 $ & $0.021$ & $0.038 $ & $0.031 $ \\
lower CI&  $0.092 $ & $0.110 $ & $0.111  $ & $0.110$ & $0.064 $ & $0.094$ & $0.111$ & $0.108 $ & $0.117 $ \\
upper CI &         $0.263 $ & $0.208$ & $0.203 $ & $0.210 $ & $0.399 $ & $0.265 $ & $0.192 $ & $0.256 $ & $0.237 $ \\
\hline
\end{tabular}
\end{subtable}

\bigskip
    \begin{subtable}[h]{\textwidth}
           \centering
\caption{Simulation in Section  \ref{sec::simulation-real-data}}\label{tb::simulation-real-data}
\begin{tabular}{lcccccccccc}
\hline
& $\hat{\tau}_\i$ &$\hat{\tau}_{\i}^{\a}$&$\hat{\tau}_{\i}^{\a}(\bar x_{i\cdot})$
&$\hat{\tau}_{\i}^{\textup{ancova}}$
&$\hat{\tau}_{\t}$&$\hat{\tau}_{\t}^\a(n_i)$&
$\hat{\tau}_{\t}^\a(n_i,\tilde{x}_{i\cdot})$&$\hat\tau_{\textsc{a}}$
&$\hat\tau_{\textsc{a}}^{\a}(\bar{x}_{i\cdot})$\\
 \hline
bias ($\times$100)  &        $0.05 $ & $0.00 $ & $-0.01$ & $0.00 $ & $-0.03$ & $0.21 $ & $-0.07$ & $-0.90$ & $-1.01$ \\
se ($\times$100)&            $3.04 $ & $1.26 $ & $1.26 $ & $1.26 $ & $8.61 $ & $3.20 $ & $1.41 $ & $2.96 $ & $1.83 $ \\
$\hat{\se}$ ($\times$100) &  $3.24 $ & $1.98 $ & $1.95 $ & $2.00 $ & $8.57 $ & $3.19 $ & $1.89 $ & $3.25 $ & $2.22 $ \\
rmse ($\times$100) &         $3.04 $ & $1.26 $ & $1.26 $ & $1.26 $ & $8.61 $ & $3.21 $ & $1.41 $ & $3.09 $ & $2.09 $ \\
coverage rate&               $0.951$ & $0.996$ & $0.994$ & $0.995$ & $0.950$ & $0.931$ & $0.986$ & $0.961$ & $0.952$ \\
\hline
\end{tabular}
\end{subtable}
\end{table}

\end{document}